\documentclass[onecolumn]{IEEEtran}
\usepackage{amsthm}
\usepackage{times,amssymb,amsmath,amsfonts,float,nicefrac,color,bbm,mathrsfs,caption,float}
\usepackage{enumerate,multirow,caption,tikz,graphicx,enumitem,soul}
\usepackage[mathscr]{eucal}
\usepackage{sidecap}
\usepackage{algpseudocode}
\usepackage{verbatim}
\usepackage{epstopdf}
\usepackage{textcomp}
\usetikzlibrary{shapes,arrows}
\usepackage{mathabx}
\usepackage[noadjust]{cite}
\usepackage{booktabs}
\interdisplaylinepenalty=5000

\usepackage[top=1in,bottom=1in,left=1.2in,right=1.2in]{geometry}
\allowdisplaybreaks

\usepackage{hyperref} 
\usepackage[font={it}]{caption}
\usepackage[margin=0.05in]{subcaption}
\usepackage[bottom]{footmisc}
\usepackage[normalem]{ulem}
\usepackage[ruled,vlined,linesnumbered]{algorithm2e}

\setlength{\parskip}{3pt plus 2pt}

\newcommand\nc\newcommand
\nc\bfa{{\boldsymbol a}}\nc\bfA{{\boldsymbol A}}\nc\cA{{\mathscr A}}
\nc\bfb{{\boldsymbol b}}\nc\bfB{{\boldsymbol B}}\nc\cB{{\mathscr B}}
\nc\bfc{{\boldsymbol c}}\nc\bfC{{\boldsymbol C}}\nc\cC{{\mathscr C}}
\nc\bfd{{\boldsymbol d}}\nc\bfD{{\boldsymbol D}}\nc\cD{{\mathscr D}}
\nc\bfe{{\boldsymbol e}}\nc\bfE{{\boldsymbol E}}\nc\cE{{\mathscr E}}
\nc\bff{{\boldsymbol f}}\nc\bfF{{\boldsymbol F}}\nc\cF{{\mathscr F}}
\nc\bfg{{\boldsymbol g}}\nc\bfG{{\boldsymbol G}}\nc\cG{{\mathscr G}}
\nc\bfh{{\boldsymbol h}}\nc\bfH{{\boldsymbol H}}\nc\cH{{\mathscr H}}
\nc\bfi{{\boldsymbol i}}\nc\bfI{{\boldsymbol I}}\nc\cI{{\mathcal I}}
\nc\bfj{{\boldsymbol j}}\nc\bfJ{{\boldsymbol J}}\nc\cJ{{\mathscr J}}
\nc\bfk{{\boldsymbol k}}\nc\bfK{{\boldsymbol K}}\nc\cK{{\mathscr K}}
\nc\bfl{{\boldsymbol l}}\nc\bfL{{\boldsymbol L}}\nc\cL{{\mathscr L}}
\nc\bfm{{\boldsymbol m}}\nc\bfM{{\boldsymbol M}}\nc{\cM}{{\mathscr M}}
\nc\bfn{{\boldsymbol n}}\nc\bfN{{\boldsymbol N}}\nc\cN{{\mathscr N}}
\nc\bfo{{\boldsymbol o}}\nc\bfO{{\boldsymbol O}}\nc\cO{{\mathscr O}}
\nc\bfp{{\boldsymbol p}}\nc\bfP{{\boldsymbol P}}\nc\cP{{\mathscr P}}\nc\eP{{\EuScriptP}}\nc\fP{{\mathfrak P}}
\nc\bfq{{\boldsymbol q}}\nc\bfQ{{\boldsymbol Q}}\nc\cQ{{\mathscr Q}}
\nc\bfr{{\boldsymbol r}}\nc\bfR{{\boldsymbol R}}\nc\cR{{\mathscr R}}
\nc\bfs{{\boldsymbol s}}\nc\bfS{{\boldsymbol S}}\nc\cS{{\mathscr S}}
\nc\bft{{\boldsymbol t}}\nc\bfT{{\boldsymbol T}}\nc\cT{{\mathscr T}}
\nc\bfu{{\boldsymbol u}}\nc\bfU{{\boldsymbol U}}\nc\cU{{\mathscr U}}
\nc\bfv{{\boldsymbol v}}\nc\bfV{{\boldsymbol V}}\nc\cV{{\mathscr V}}
\nc\bfw{{\boldsymbol w}}\nc\bfW{{\boldsymbol W}}\nc\cW{{\mathscr W}}
\nc\bfx{{\boldsymbol x}}\nc\bfX{{\boldsymbol X}}\nc\cX{{\mathscr X}}
\nc\bfy{{\boldsymbol y}}\nc\bfY{{\boldsymbol Y}}\nc\cY{{\mathscr Y}}
\nc\bfz{{\boldsymbol z}}\nc\bfZ{{\boldsymbol Z}}\nc\cZ{{\mathscr Z}}

\newtheorem{theorem}{Theorem}
\newtheorem{lemma}[theorem]{Lemma}

\newtheorem{proposition}[theorem]{Proposition}

\newtheorem{definition}{Definition}

\theoremstyle{remark}
\newtheorem{remark}{Remark}

\nc{\blue}[1]{\textcolor{blue}{#1}}

\DeclareMathOperator{\rank}{rank}

\DeclareMathOperator{\trace}{tr}

\DeclareMathOperator{\Span}{Span}
\newcommand{\ff}{{\mathbb F}}

\begin{document}
		\title{Enabling optimal access and error correction for the repair of Reed-Solomon codes}
		\author{\IEEEauthorblockN{Zitan Chen} \hspace*{1in}
\and \IEEEauthorblockN{Min Ye} \hspace*{1in}
\and \IEEEauthorblockN{Alexander Barg}}
		\date{}
		\maketitle
{\renewcommand{\thefootnote}{}\footnotetext{
\vspace{.1in}
\hspace*{-.3in}\rule{1.5in}{.4pt}

Zitan Chen is with Dept. of ECE and ISR, University of Maryland, College Park, MD 20742. Email: chenztan@umd.edu. His research was supported by NSF grants CCF1618603 and CCF1814487.

Min Ye,  Email: yeemmi@gmail.com.

Alexander Barg is with Dept. of ECE and ISR, University of Maryland, College Park, MD 20742 and also with IITP, Russian Academy of Sciences, 127051 Moscow, Russia. Email: abarg@umd.edu. His research was supported by NSF grant CCF1618603 and CCF1814487.
}}
\renewcommand{\thefootnote}{\arabic{footnote}}
\setcounter{footnote}{0}


\vspace{-.2in}
 \begin{abstract} Recently Reed-Solomon (RS) codes were shown to possess a repair scheme that supports repair of failed nodes with optimal repair bandwidth.  In this paper, we extend this result in two directions. First, we propose a new repair scheme for the RS codes constructed in [Tamo-Ye-Barg, {\em IEEE Transactions on Information Theory}, vol. 65, May 2019] and show that our new scheme is robust to erroneous information provided by the helper nodes while maintaining the optimal repair bandwidth. Second, we construct a new family of RS codes with optimal access for the repair 
of any single failed node. We also show that the constructed codes can accommodate both features, supporting optimal-access repair with 
optimal error-correction capability.

Going beyond RS codes, we also prove that any scalar MDS code with optimal repair bandwidth allows for a repair scheme with optimal access property. 
 \end{abstract}

\section{Introduction}\label{sec:introduction}

The problem of efficient erasure correction in various classes of algebraic codes has recently attracted renewed attention because of its links to 
applications of erasure coding for distributed storage.	Compared to the classic setting of erasure correction, efficient functioning of distributed storage systems critically depends on the volume of communication exchanged between the nodes for the purposes of data recovery. The constraint on the amount of communication, termed ``repair bandwidth,'' adds new features to the problem, and has motivated a large amount of research
in coding theory in the last decade.

Consider an $(n,k,l)$ array code $\cC$ over a finite field $F$, i.e.,  a 
collection of codewords $c=(c_1,\dots,c_n)$, where $c_i=(c_{i,0},c_{i,1},\dots,c_{i,l-1})^T\in F^l, i=1,\dots, n$. A node $c_i,i\in [n]$ 
can be repaired from a subset of $d\ge k$ helper nodes $\{c_j:j\in\cR\},\cR\subseteq[n]\backslash \{i\},$ by downloading $\beta_i(\cR)$ symbols of 
$F$ if there are
numbers $\beta_{ij}, j\in\cR$,
functions $f_{ij}: F^l\to F^{\beta_{ij}}, j\in\cR,$
and a function $g_i: F^{\sum_{j\in\cR}\beta_{ij}}\to F^l$
such that
   $$
   c_i=g_i(\{f_{ij}(c_{j}),j\in\cR \})
\text{~for all~} c=(c_1,\dots,c_n)\in \cC
   $$
   and
   $$
\sum_{j\in\cR}\beta_{ij}=\beta_i(\cR).
   $$
Codes that we consider form linear spaces over $F$. If $\cC$ is not linear over $F^l$, it is also called a {\em vector} code, while if it is, it is called {\em scalar} to stress the linearity property. A code $\cC$ is called MDS if any $k$ coordinates $\{c_{j_i},i=1,\dots,k\}$ of the codeword suffice to recover its remaining $n-k$ coordinates. In this paper we study the repair problem of scalar MDS codes.

It is well known \cite{dimakis2010network} that for any MDS code $\cC$ (scalar or vector), any $i\in[n]$, and any $\cR\subseteq [n]\setminus\{i\}$ of cardinality $|\cR|\ge k$, we have
\begin{equation} \label{eq:betai}
\beta_i(\cR)\ge \frac{l}{|\cR|-k+1}|\cR|.
\end{equation}
For an MDS code $\cC$, we define the minimum bandwidth of repair of a node from a $d$-subset $\cR$ of helper nodes as
$\beta(d)=\max_{i\in[n]}\min_{\cR\subset[n]\backslash\{i\}, |\cR|=d}\beta_i(\cR).$  It follows immediately from \eqref{eq:betai} that
   \begin{equation}\label{eq:bandwidth}
   \beta:=\beta(d)\ge \frac{l}{d-k+1}d.
\end{equation}
An MDS code that attains the bound \eqref{eq:bandwidth} with equality is said to afford optimal repair, and a repair scheme that attains this bound 
is called optimal.  Such codes are also termed {\em minimium storage regenerating} or MSR codes, and the parameter $l$ is called node size or 
sub-packetization. Multiple constructions of vector MDS codes with optimal repair are available in the literature, including papers \cite{Rashmi11}, 
\cite{Tamo13}, \cite{Ye16,Ye16a}, \cite{Goparaju17}, \cite{Raviv17}.

The basic repair problem of MDS codes 
has been extended to the case that some of the helper nodes provide erroneous information (or arbitrary nature). Suppose that a subset of $e$ nodes
out of $d$ helpers provide erroneous information and define $\beta(d,e)$ to be the minimum number of symbols needed to repair a failed node
in the presence of such errors. It was shown 
  \cite{Rashmi12}, \cite{Pawar11} that for $d\ge k+2e$,
  \begin{equation}\label{eq:bandwidth-errors}
\beta(d,e)\ge \frac{dl}{d-2e-k+1}.
  \end{equation}
A repair scheme that achieves this bound is said to have {\em optimal error correction capability}.
Constructions of MDS array codes with optimal error correction capability are presented, for instance, in \cite{Ye16}. 

Another parameter of erasure codes for distributed storage that affects the system performance is the so-called access, or input-output cost
of repair. Indeed, while the code may support parsimonious exchange between the helper nodes and the repair center, generation of
the symbols to be transmitted from the helper node may require reading the entire contents of the node (trivial access), which increases delays 
in the system. The smallest number of symbols accessed on each of the helper nodes in an MSR code is $l/(d-k+1),$ and such codes
are said to have the {\em optimal access property}. Advantages of having this property are well recognized in the literature starting with
\cite{shah2012distributed}, and a number of papers were devoted to constraints that it imposes on the code parameters such as 
sub-packetization \cite{Tamo14}, \cite{balaji2017tight}. Many families of MSR codes including early constructions in \cite{Cadambe11,Tamo13} as well as code families for general parameters in \cite{Ye16,Ye16a}, \cite{Clay18}, \cite{chen2019explicit} have the optimal access property.

The optimal-access repair and optimal error correction capability can be combined. According to \eqref{eq:bandwidth-errors}, we say that a code family/repair scheme have both properties if repair can be performed in the presence of $e$ errors, while the number of symbols accessed on each of the helper nodes equals $1/(d-2e-k+1)$ proportion of the contents of each of $d$ helper nodes.

While the aforementioned papers mostly deal with vector codes, in this paper we focus on the repair problem for scalar MDS codes, more specifically, for Reed-Solomon (RS) codes. This code family continues to attract attention in 
multiple aspects of theoretical research (list decoding of variants of RS codes, locally recoverable codes, to name a few) and it is also one of the 
most used coding methods in a vast variety of practical systems. The first work to isolate and advance the repair problem for RS codes was
\cite{Guruswami16} which itself followed and developed the ideas in \cite{Shanmugam14}. In \cite{Guruswami16}, the authors view each coordinate of 
RS codes as a vector over some subfield and characterize linear repair schemes of RS codes over this subfield. For RS codes (and more generally for 
scalar codes), the node size $l$ is defined as the degree of extension of the symbol field over the subfield. Following \cite{Guruswami16}, several 
papers attempted to optimize the repair bandwidth of RS codes \cite{Dau16}, \cite{Dau17}, \cite{Mardia19}. A family of optimal-repair RS codes in 
the case of repairing a single failed node as well as multiple nodes was constructed in \cite{Tamo18}. Later this construction was extended to the 
case of the rack-aware storage model, resulting in a family of codes with optimal repair of a single node \cite{chen2019explicit}, and this problem 
was addressed again in \cite{JinLuoXing19}.

In this paper we address two problems related to RS repair, namely, 
\begin{itemize}\item[(i)] repair schemes of RS codes with optimal error correction, and \item[(ii)]
RS codes with optimal-access repair.
\end{itemize}
Error correction during repair of failed storage nodes was previously only considered for vector codes \cite{Rashmi12}, \cite{Pawar11}, \cite{Ye16}.
The problem of low-access RS codes was studied in \cite{DauViterbo18,DauDuursmaChu18,LiDauWang19}. In particular, the last of these works analyzed the access (input/output) cost of the family of RS codes of \cite{Tamo18}, providing an estimate of this
parameter, but stopping short of achieving optimal access.

Our main results provide a solution to problems (i)-(ii). 
Specifically, we construct a repair scheme for RS codes in \cite{Tamo18} that has optimal error
correction capability (i.e., attains the bound \eqref{eq:bandwidth-errors}), and we also construct a family of RS codes with optimal access repair for any single failed node. Additionaly, we prove that the constructed codes can be furnished with a repair scheme that supports both optimal error correction and 
optimal-access repair.

Apart from this, we also show that any scalar MDS code with optimal repair of a single node from $d$ helpers, $k\le d\le n-1,$ affords a repair 
scheme with optimal access, and this includes the RS codes in \cite{Tamo18}. While our arguments do not provide an explicit construction, 
we give a combinatorial search procedure, showing that it exists for any scalar MSR code. The resulting optimal access codes have the
same sub-packetization as the original MDS codes.

The constructions are technically involved, and we begin in Sec.~\ref{sec:example} with illustrating them in an example.
The three sections that follow it are devoted to the results described above.

%
%

\section{A simple example}\label{sec:example}
In this section, we construct an RS code together with a repair scheme that can recover its {\em first node} with both optimal access and optimal error correction capability. 

\subsection{Preliminaries}
\subsubsection{}We begin with some standard definitions. Recall that a {\em generalized RS code} (GRS code) of length $n$ and dimension $k$ over a finite field $F$
is obtained by fixing a set of $n$ distinct evaluation points $\Omega:=\{\alpha_1,\alpha_2,\dots,\alpha_n\}\subset F$ and a vector 
$(v_1,\dots, v_n)\in (F^\ast)^n$ with no zero coordinates. Then the GRS code is the set of vectors
   $$
   {\text {GRS}}_{F}(n,k,v,\Omega)=\{(v_1f(\alpha_1),v_2f(\alpha_2),\dots,v_nf(\alpha_n)): f\in F[x], \deg f < k \}.
   	$$
In particular, if $(v_1,\dots, v_n)=(1,\dots,1),$ then the GRS code is called the Reed-Solomon (RS) code and is denoted by $\text{RS}_{F}(n,k,
\Omega).$ It is a classic fact that the dual code $(\text{RS}_{F}(n,k,\Omega))^\bot$ is ${\text {GRS}}_{F}(n,n-k,v,\Omega),$ where $v\in 
(F^{\ast})^n$ is some vector. In particular, if $c=(c_1,\dots,c_n)\in F^n$ is a vector such that $\sum_{i=1}^n c_ih(\alpha_i)=0$ for every polynomial $h(x)$ of degree $\le k-1,$ then $c$ is contained in a ${\text {GRS}}_{F}$ code of dimension $n-k.$ Rephrasing this, we have
the following obvious proposition that will be frequently used below.
\begin{proposition}\label{prop:RS} Let $c=(c_1,\dots,c_n)\in F^n$ and suppose that $\sum_{i=1}^n c_i\alpha_i^t=0$ for all $t=0,1,\dots,k-1.$
Then the vector $c$ is contained in a code $\text{GRS}_F(n,n-k,v,\Omega)$, where $v\in (F^\ast)^n$ and $\Omega=\{\alpha_1,\dots,\alpha_n\}.$
\end{proposition}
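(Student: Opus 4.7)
The plan is to recognize the stated hypothesis as exactly the orthogonality condition that characterizes membership in the dual of a Reed-Solomon code, and then invoke the classical duality fact already quoted in the paragraph immediately preceding the proposition.

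First I would rewrite the $k$ given identities $\sum_{i=1}^n c_i\alpha_i^t = 0$ for $t=0,1,\dots,k-1$ as $\langle c,\, (\alpha_1^t,\alpha_2^t,\dots,\alpha_n^t)\rangle = 0$ under the standard inner product on $F^n$. Since the monomials $1,x,x^2,\dots,x^{k-1}$ form a basis of the space of polynomials of degree less than $k$, linearity of the inner product upgrades these $k$ scalar equations to
\begin{equation*}
\sum_{i=1}^n c_i\, h(\alpha_i) = 0 \quad\text{for every } h\in F[x] \text{ with } \deg h < k.
\end{equation*}
Equivalently, $c$ is orthogonal to every codeword of $\text{RS}_F(n,k,\Omega)$, i.e., $c\in (\text{RS}_F(n,k,\Omega))^\bot$.

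Next I would apply the classical fact recalled just before the proposition, namely that the dual of $\text{RS}_F(n,k,\Omega)$ is itself a GRS code $\text{GRS}_F(n,n-k,v,\Omega)$ with the same evaluation points $\Omega$ and a suitable nonzero column multiplier $v\in (F^\ast)^n$. Inserting $c$ into this dual gives $c\in\text{GRS}_F(n,n-k,v,\Omega)$, which is exactly the conclusion of the proposition.

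There is no genuine obstacle here; the statement is a convenient restatement of RS/GRS duality. The only point worth being careful about is that the vector $v$ depends on $\Omega$ (and not on $c$), so the claim really produces a single ambient GRS code that contains every vector $c$ satisfying the hypothesis — this is consistent with the wording of the proposition and with how it will be used later.
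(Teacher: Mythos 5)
Your proof is correct and matches the paper's intent: the paper gives no separate proof, treating the proposition as an immediate rephrasing of the RS/GRS duality fact quoted just before it, and your argument simply makes explicit the (standard) linearity step that upgrades orthogonality to the monomials $1,x,\dots,x^{k-1}$ to orthogonality to all polynomials of degree less than $k$. Nothing is missing.
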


Let $E$ be an algebraic extension of $F$ of degree $s$. The {\em trace mapping} $\trace_{E/F}$ is given by
$x\mapsto 1+x^{|F|}+x^{|F|^2}+\dots+x^{|F|^{s-1}}.$ For any basis $\gamma_0,\dots,\gamma_{s-1}$ of $E$ over $F$ there
exists a {\em trace-dual basis} $\delta_0,\dots,\delta_{s-1}$, which satisfies $\trace_{E/F}(\gamma_i \delta_j)={\mathbbm 1}_{\{i=j\}}$ for all pairs $i,j.$ For an element $x\in E$ the coefficients of its expansion in the basis $(\gamma_i)$ are found using the dual basis, specifically,
$x=\sum_{i=0}^{s-1} \trace_{E/F}(x\delta_i)\gamma_i.$ As a consequence,  for any basis $(\delta_i)$ the mapping $E\to F^s$ given by $x\mapsto (\trace(x\delta_i),i=0,\dots,s-1)$ is a bijection.

\vspace*{.1in}\subsubsection{}Before we define the RS code that will be considered below, let us fix the parameters of the repair scheme. We attempt to repair a failed node using information from $d$ helper nodes. Suppose that at most $e$ of them provide erroneous information.
Assume that $d-2e\ge k$, and let $s:=d-2e-k+1$. Let $F$ be a finite field of size $|F|\ge n-1.$
Choose a set of distinct evaluation points $\Omega:=\{\alpha_1,\alpha_2,\dots,\alpha_n\}$ such that $\alpha_i\in F$ for all $2\le i\le n$ and $\alpha_1$ is an algebraic element of degree $s$ over $F$. Let $E:=F(\alpha_1).$ Consider the code 
   $$\cC:=\text{RS}_{E}(n,k,\Omega).$$
    In this section we present a repair scheme of the code $\cC$ that can repair the first node of $\cC$ over the field $F$; in other words, we represent the coordinates of $\cC$ as $s$-dimensional vectors over $F$ in some basis of $E$ over $F.$ Thus, the node size of this code is $s$. We note that the code $\cC$ represented in this way is still a scalar code. 

The repair scheme presented below has the following two properties:

\begin{itemize}
\item the optimal error correction capability, i.e., the repair bandwidth achieves the bound \eqref{eq:bandwidth-errors} for any pair $(d,e)$ such that $d-2e=s+k-1$;
\item in the absence of errors it has the optimal access property, i.e., the number of symbols accessed during the repair process is $d$. 
Thus, in this case $e=0$ and $s=d-k+1$.
\end{itemize}

\subsection{Repair scheme with optimal error correction capability}
\label{sect:toycorrection}
Let $c=(c_1,c_2,\dots,c_n)\in \cC$ be a codeword and suppose that $c_1$ is erased.
Since $\cC^\bot={\text {GRS}}_E(n,n-k,v,\Omega)$ for some $v\in (E^\ast)^n,$ we have
$$
v_1 \alpha_1^t c_1 + v_2 \alpha_2^t c_2 + \dots + v_n \alpha_n^t c_n=0, \quad \quad
t=0,1,\dots,n-k-1,
$$
or
\begin{equation} \label{eq:bbq}
v_1 \alpha_1^t c_1 = - v_2 \alpha_2^t c_2 - \dots - v_n \alpha_n^t c_n,
\quad \quad t=0,1,\dots,n-k-1 .
\end{equation}
Evaluating the trace $\trace=\trace_{E/F}$ on both sides of \eqref{eq:bbq}, we obtain the relation
\begin{align}
\trace(v_1 \alpha_1^t c_1) & = - \trace(v_2 \alpha_2^t c_2) - \dots - \trace(v_n \alpha_n^t c_n) \nonumber\\
& = -\alpha_2^t \trace(v_2 c_2) -\dots 
-\alpha_n^t \trace(v_n c_n), 
\quad \quad t=0,1,\dots,n-k-1 ,\label{eq:n-k}
\end{align}
where the second equality follows from the fact that $\alpha_2,\dots,\alpha_n\in F$.
Therefore, knowing the values of $(\trace(v_2 c_2),\dots,\trace(v_n c_n))$ enables us to compute $\trace(v_1 \alpha_1^t c_1)$ for all $0\le t \le n-k-1$.
Since $\deg_F(\alpha_1)=s$, the elements $1,\alpha_1,\dots,\alpha_1^{s-1}$ form a basis of $E$ over $F$.
As a consequence, one can recover $c_1$ from the values of $\{\trace(v_1 \alpha_1^t c_1) : 0\le t \le s-1 \}$.
By definition, $s-1=d-2e-k\le  n-k-1$, so $\{\trace(v_1 \alpha_1^t c_1) : 0\le t \le s-1 \}\subseteq \{\trace(v_1 \alpha_1^t c_1) : 0\le t \le n-k-1 
\}$. Combining this with \eqref{eq:n-k}, we see that the value $c_1$ is fully determined by the set of elements $(\trace(v_2 c_2),\dots,\trace(v_n 
c_n))$.

Recalling our problem, we will show that in order to repair $c_1$, it suffices to acquire the values $\trace(v_i c_i)$ from any $d$ helper nodes 
provided that at least $d-e=(d+s+k-1)/2$ of these values are correct. This will follow from the following proposition.
\begin{proposition}\label{prop:example-dual} {Let $f(x)\in F[x]$ be the minimal polynomial of $\alpha_1$.} For any $s< n-k$ 
and any $c=(c_1,\dots,c_n)\in \cC$ the vectors {$(f(\alpha_2)\trace(v_2 c_2),\dots,f(\alpha_n)\trace(v_n c_n))$}
are contained in an $(n-1,s+k-1)$ GRS code over $F.$ 
\end{proposition}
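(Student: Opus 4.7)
The plan is to invoke Proposition \ref{prop:RS} applied to the length-$(n-1)$ vector $w:=(f(\alpha_2)\trace(v_2 c_2),\dots,f(\alpha_n)\trace(v_n c_n))$ with evaluation set $\{\alpha_2,\dots,\alpha_n\}\subset F$ and target dimension $s+k-1$. The proposition then requires us to verify the parity-check identities
\begin{equation*}
\sum_{i=2}^n f(\alpha_i)\trace(v_i c_i)\,\alpha_i^t = 0 \quad \text{for } t=0,1,\dots,(n-1)-(s+k-1)-1 = n-s-k-1.
\end{equation*}
Note that all entries of $w$ lie in $F$ because $\trace=\trace_{E/F}$ and, for $i\ge 2$, $\alpha_i\in F$, so $f(\alpha_i)\in F$; moreover $f(\alpha_i)\neq 0$ since $f$ is the minimal polynomial of $\alpha_1\notin\{\alpha_2,\dots,\alpha_n\}$, ensuring that the resulting code is honestly a GRS code of length $n-1$.

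The key step is to expand the coefficient $f(\alpha_i)\alpha_i^t$. Writing $f(x)=\sum_{j=0}^{s} f_j x^j$ with $f_s=1$, we have $f(\alpha_i)\alpha_i^t = \sum_{j=0}^{s} f_j \alpha_i^{t+j}$, and interchanging the order of summation gives
\begin{equation*}
\sum_{i=2}^n f(\alpha_i)\trace(v_i c_i)\,\alpha_i^t \;=\; \sum_{j=0}^{s} f_j \sum_{i=2}^n \alpha_i^{t+j}\trace(v_i c_i).
\end{equation*}
I would then plug in the relation \eqref{eq:n-k}, which asserts precisely that the inner sum equals $-\trace(v_1 \alpha_1^{t+j} c_1)$, valid as long as $t+j\le n-k-1$. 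Since $0\le t\le n-s-k-1$ and $0\le j\le s$, we have $t+j\le n-k-1$, so this substitution is justified on the whole range we need.

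Once the substitution is made, $F$-linearity of the trace lets me pull out the sum over $j$ inside the trace, yielding
\begin{equation*}
\sum_{i=2}^n f(\alpha_i)\trace(v_i c_i)\,\alpha_i^t \;=\; -\trace\!\left(v_1\alpha_1^t c_1 \sum_{j=0}^{s} f_j \alpha_1^j\right) \;=\; -\trace(v_1\alpha_1^t c_1 f(\alpha_1)) \;=\; 0,
\end{equation*}
using $f(\alpha_1)=0$ by construction. This confirms the required $n-s-k$ parity checks and concludes the argument via Proposition \ref{prop:RS}. The only mildly delicate point is the bookkeeping that guarantees $t+j$ stays in $\{0,\dots,n-k-1\}$ so that \eqref{eq:n-k} applies for every term; this is why the hypothesis $s<n-k$ (equivalently, $s+k\le n-1$, so $n-s-k-1\ge 0$) is imposed and why the dual dimension of the target GRS code comes out exactly to $n-s-k$.
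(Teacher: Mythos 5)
Your proof is correct and follows essentially the same path as the paper's. The paper observes directly that for $t=0,\dots,n-k-s-1$ the vector $(v_1\alpha_1^t f(\alpha_1),\dots,v_n\alpha_n^t f(\alpha_n))$ is a dual codeword (because $\deg(x^t f(x))<n-k$), uses $f(\alpha_1)=0$ to drop the first coordinate, and takes the trace; you instead start from the monomial relations \eqref{eq:n-k}, expand $f(\alpha_i)\alpha_i^t=\sum_j f_j\alpha_i^{t+j}$, and reassemble by $F$-linearity of the trace to the same conclusion — a presentational rather than a substantive difference.
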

\begin{proof}
Let $\cT:=\{0,1,\dots,n-k-s-1\}.$ Since $\alpha_i\in F,i=2,\dots,n$ by definition we have $f(\alpha_i)\ne0$ for all such $i$.
Next, $\deg(f)=s,$ and thus for all $t\in \cT$
   $$
   (v_1 \alpha_1^t f(\alpha_1), v_2 
\alpha_2^t f(\alpha_2), \dots, v_n \alpha_n^t f(\alpha_n)) \in \cC^\bot.
$$
This implies that for all $t\in \cT$
  $$
  v_1 \alpha_1^t f(\alpha_1)c_1 + v_2 \alpha_2^t f(\alpha_2)c_2+ \dots+ v_n \alpha_n^t f(\alpha_n)c_n=0, \quad
  $$
but $f(\alpha_1)=0,$ so taking the trace, we obtain
  \begin{equation}\label{eq:example-parity}
 \alpha_2^t f(\alpha_2) \trace(v_2 c_2)+ \dots+ \alpha_n^t f(\alpha_n) \trace(v_n c_n)=0, \quad
t\in \cT.
  \end{equation}
By Proposition \ref{prop:RS}, this implies that the vectors $(f(\alpha_2)\trace(v_2 c_2),\dots,f(\alpha_n)\trace(v_n c_n))$ are contained in a GRS code of length $n-1$ with $n-s-k$ parities.
\end{proof}

The GRS code identified in this proposition can be punctured to any subset $\cR$ of $d$ coordinates, retaining the dimension and the MDS property. This means that the punctured code is capable of correcting any $e=(d-s-k+1)/2$ errors. Therefore, as long as no more than $e$ helper nodes 
provide incorrect information, we can always recover $(\trace(v_2 c_2),\dots,\trace(v_n c_n))$ by acquiring a subset $\{\trace(v_{i_j} c_{i_j}), 
j=1,\dots,d\}$ from any $d$ helper nodes and correcting the errors based on any decoding procedure of the underlying MDS code.
Finally note that the case $s=n-k$ can be added trivially because then $d=n-1$ and $e=0,$ so all the helper nodes provide accurate information, and no error correction is required (or possible).

\subsection{Optimal access property} \label{sect:toyaccess}
Following the discussion in the first part of this section, we show that the code $\cC=\text{RS}_E(n,k,\Omega)$ defined above supports optimal-access repair of 
the node $c_1.$ In this part we assume that the helper nodes provide accurate information about their contents, and we do not attempt
error correction. 

To represent the code, we choose a pair of trace-dual bases $(b_i), (b^\ast_i)$ of $E$ over $F,$ where we assume w.l.o.g. that
$b_0=1.$ Next, represent the $i$th coordinate of the code, $i\in\{1,\dots,n\},$ 
using the basis $(v_i^{-i}b_m, m=0,\dots,s-1),$ where $(v_1,\dots,v_n)$ is defined by the code $\cC^\bot.$ Namely, for a codeword $c\in \cC$ we have 
   \begin{equation}  \label{eq:expansion}
c_i=v_i^{-1}\sum_{m=0}^{s-1} c_{i,m}b_m^\ast ,
\end{equation}
where $c_{i,m}\in F$ for all $m=0,1,\dots, s-1$. We assume that each storage node contains the vector $(c_{i,0}, c_{i,1}, \dots,c_{i,s-1}).$

As discussed above, the value $c_1$ can be recovered from any $d$-subset of the set of elements $\{\trace(v_{i} c_{i}), j=2,\dots, n\}.$ Further, 
for all $i=2,\dots,n$ and $m=0,\dots, s-1$ we have $\trace(v_ic_i b_m)=c_{i,m},$ so in particular,
   $$
   \trace(v_ic_i)=c_{i,0}.
  $$
Thus, to repair $c_1$ it suffices to access and download a single symbol $c_{i,0}$ from the chosen subset of $d$ helper nodes. According to the bound \eqref{eq:betai}, the minimum number of symbols downloaded from a helper node for optimal repair is the $(1/s)$th proportion of the node's contents. Overall this shows that the repair scheme considered above has the optimal access property.

The above discussion sets the stage for constructing RS codes with optimal-access repair for each of the $n$ coordinates. 
Namely, we took a basis $1,b_1,\dots,b_{s-1}$ of $E$ over $F$ and represented each $c_i$ in the  basis $(v_i^{-1}
b^\ast_i).$ The only element of the helper coordinate that we access and download is $c_{i,0}.$
For more complicated constructions of RS codes, e.g., the ones constructed in \cite{Tamo18} and below in the paper, we assume that $E$ is
an $l$-degree extension of $F$. The known repair schemes require to 
download elements of the form $\trace(v_i c_i a_0), \trace(v_i c_i a_1), \dots, \trace(v_i c_i a_{(l/s)-1})$, where $a_0, a_1, \dots, a_{(l/s)-1}$ are 
linearly independent over $F$. In this case, we can extend the set $a_0, a_1, \dots, a_{(l/s)-1}$ to a basis $(b_i)$ of $E$ over $F.$
Following the approach in \eqref{eq:expansion}, we store the code coordinate $c_i$ as the vector of its coefficients $(c_{i,0}, c_{i,1}, \dots,c_{i,l-1})$ in the
dual basis $(b_i^\ast, i=0,\dots,l-1)$ of the basis $(b_i).$ 
Since $c_{i,m}=\trace(v_i c_i a_m)$ for all $m=0,1,\dots,l/s-1$, this choice of the basis enables one to achieve optimal access. This idea underlies the construction presented below in Sec.~\ref{sect:new}.

\subsection{Optimal access with error correction}
Thus far, we have assumed that errors are absent for optimal-access repair. To complete the picture, we address the case of codes with both optimal access and optimal error correcting capability for the repair of node $c_1.$ It is easily seen that both properties can be combined. 
Indeed, since $\trace(v_ic_i)=c_{i,0}$ for all $i=2,\dots, n$, and since by Proposition~\ref{prop:example-dual} these elements 
form a codeword of a GRS code, it is immediately clear that $c_1$ can be repaired with optimal error correction capability and optimal access. 
To enable this property for any $c_i$, below we add extra features to the general repair scheme with optimal access. Specifically, error correction and optimal access are based on two different structures supported by the code. We show that it is possible to realize the error-correction structure in an extension field located between the base field and the symbol field of the code. Further reduction to the base field enables us
to perform repair with optimal access. These ideas are implemented in detail in Sec.~\ref{sect:err-oa} below.

%
%

\section{Enabling error correction for repair of RS codes of \cite{Tamo18}}
\label{sect:erc}

In this section we propose a new repair scheme for the optimal-repair family of RS codes of \cite{Tamo18} that supports the optimal error correction 
capability.

\subsection{Preliminaries} We begin with briefly recalling the definition of the subfamily of RS codes of \cite{Tamo18}. The construction 
depends on the number of helper nodes $d$ used for the purpose of repair of a single node, $k\le d\le n-1.$
\begin{definition}[\cite{Tamo18}]\label{def:RSerrors} Let $p$ be a prime, let $s:=d-k+1,$ and let $p_1,\dots,p_n$ be distinct primes that satisfy the condition $p_i\equiv 1\;\text{mod}\, s, 
i=1,\dots,n,$ Let $\cC:=RS_{\mathbb K}(n,k,\Omega)$ be a Reed-Solomon code, where
\begin{itemize}
\item $\Omega=\{\alpha_1,\dots,\alpha_n\},$ where $\alpha_i, i=1,\dots,n$ is an algebraic element of degree $p_i$ over $\ff_p,$
\item $\mathbb{K}=\ff(\beta),$ where $\beta$ is an algebraic element of degree $s$ over $\ff:=\ff_p(\alpha_1,\dots,\alpha_n).$
\end{itemize}
\end{definition}

As shown in \cite{Tamo18}, this code supports optimal repair of any node $i$ from any set of $d$ helper nodes in $[n]\backslash\{i\}.$ Below we use this construction, choosing
the value of $s$ based not only on the number of helpers but also on the target number of errors tolerated by the repair procedure.

In this section we consider an RS code $\cC$ given by Def.~\ref{def:RSerrors}, where we take $s=d-2e-k+1.$ 
For this code we will present a new repair scheme that has the property of optimal error correction.
This repair scheme as well as the original repair scheme developed in \cite{Tamo18} rely on the following lemma:
	\begin{lemma}[\cite{Tamo18}, Lemma 1]
		\label{le:subspace}
Let $F$ be a finite field. Let $r$ be a prime such that $r\equiv 1\;\text{mod}\,s$ for some $s\ge 1.$
Let $\alpha$ be an element of degree $r$ over $F$  and $\beta$ be of degree $s$ over the field $F(\alpha)$. 
Let $K=F(\alpha,\beta)$ be the extension field of degree $rs.$  Consider the $F$-linear subspace $S$ of dimension $r$ with the basis
		\begin{align*}
			E:=\{\beta^u\alpha^{u+qs}\mid u=0,\ldots,s-1;q=0,\ldots,{\textstyle\frac{r-1}s-1 \}} \bigcup \Big\{\sum_{u=0}^{s-1}\beta^u\alpha^{r-1}\Big\}.
		\end{align*}
Then  $S+S\alpha+\cdots+S\alpha^{s-1}=K,$ and this is a direct sum.
	\end{lemma}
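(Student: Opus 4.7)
The plan is to prove that the $rs$ generators $\bigcup_{j=0}^{s-1} E\cdot\alpha^j$ of the sum $S + S\alpha + \cdots + S\alpha^{s-1}$ are $F$-linearly independent in $K$; since $\dim_F K = rs$, this yields both the equality with $K$ and the directness of the sum simultaneously. Equivalently, I want the $F$-linear map $(s_0,\ldots,s_{s-1}) \mapsto \sum_j s_j \alpha^j$ from $\bigoplus_{j=0}^{s-1} S$ to $K$ to be injective.

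First I would expand each generator in the $F(\alpha)$-basis $\{1,\beta,\ldots,\beta^{s-1}\}$ of $K$ and group a hypothetical $F$-linear dependence by the power of $\beta$. Using the bijection $(q,j)\mapsto qs+j$ from $\{0,\ldots,(r-1)/s-1\}\times\{0,\ldots,s-1\}$ onto $\{0,\ldots,r-2\}$, the dependence collapses to the $s$ equations
\[
\alpha^u P_u(\alpha) + \alpha^{r-1} B(\alpha) = 0 \quad\text{in } F(\alpha), \qquad u = 0,1,\ldots,s-1,
\]
where $P_u(x)\in F[x]$ has degree $\leq r-2$ and depends on $u$, while $B(x)\in F[x]$ has degree $\leq s-1$ and, crucially, does \emph{not} depend on $u$: its coefficients $b_j$ weight the ``combined'' generators $\sum_u\beta^u\alpha^{r-1+j}\in E\cdot\alpha^j$, one for each $j$. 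Lifting to $F[x]$, the $u$th equation becomes the congruence $x^u P_u(x) + x^{r-1} B(x) \equiv 0 \pmod{m_\alpha(x)}$, where $m_\alpha$ is the minimal polynomial of $\alpha$ over $F$ (of degree $r$).

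The workhorse observation is that $m_\alpha$ is irreducible of degree $r\geq 2$, so $m_\alpha(0)\neq 0$ and $x$ is a unit modulo $m_\alpha$. Multiplying the $u$th congruence by $x^{-u}$ gives $P_u(x) + x^{r-1-u}B(x) \equiv 0 \pmod{m_\alpha}$. For $u=s-1$, the polynomial $P_{s-1}(x) + x^{r-s}B(x)$ has degree at most $r-1<\deg m_\alpha$, so the congruence is actually a polynomial identity: $P_{s-1}(x) = -x^{r-s} B(x)$. Matching the coefficient of $x^{r-1}$, which must vanish on the left by the degree bound on $P_{s-1}$, forces $b_{s-1}=0$ and shrinks $\deg B$ to at most $s-2$.

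From here I would descend on $u$: once $b_{s-1}=\cdots=b_{s-t}=0$, so $\deg B\leq s-1-t$, the polynomial $P_{s-1-t}(x)+x^{r-s+t}B(x)$ again has degree $\leq r-1$ and the congruence holds as an identity in $F[x]$; matching the coefficient of $x^{r-1}$ peels off $b_{s-1-t}=0$. After $s$ rounds, $B\equiv 0$, and then every $P_u\equiv 0$ follows from the identity $P_u(x)=-x^{r-1-u}B(x)$. The main subtlety is that the polynomial $B$ couples all $s$ equations while the $P_u$'s are free; the descending induction on $u$, together with the unit property of $x$ modulo $m_\alpha$, is precisely what decouples them.
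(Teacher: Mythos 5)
Your proof is correct. Note that the paper itself does not reproduce a proof of this lemma — it is imported verbatim as Lemma~1 from \cite{Tamo18} — so there is no in-paper argument to compare against; your proposal stands on its own. The structure is the natural one (and agrees in spirit with the argument in the cited reference): reduce to linear independence of the $rs$ generators, separate along the $F(\alpha)$-basis $1,\beta,\ldots,\beta^{s-1}$ to obtain $s$ congruences $x^u P_u(x)+x^{r-1}B(x)\equiv 0 \pmod{m_\alpha(x)}$, invert the unit $x$ modulo the degree-$r$ irreducible $m_\alpha$, and run the descending induction that peels off the top coefficient of the shared polynomial $B$ one step at a time. The one point worth making explicit is that $r-1-u\ge r-s\ge 1$ for all $u\le s-1$ (since $r\equiv 1\pmod s$ and $r$ prime force $r\ge s+1$ when $s\ge 2$, the case $s=1$ being trivial), so the shifted polynomial $x^{r-1-u}B(x)$ genuinely has degree bounded by $r-1$ at each stage and the congruences do collapse to identities in $F[x]$ as you claim.
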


Without loss of generality, we only present the repair scheme for the first node $c_1$, and all the other nodes can be repaired in 
the same way (this is different from the previous section where the code was designed to support optimal repair {\em only} of the node $c_1$).
The scheme is complicated, and we take time to develop it, occasionally repeating similar arguments more than once rather than compressing the presentation.

The repair of $c_1$ is conducted over the field $F_1:=\ff_p(\alpha_2,\alpha_3,\dots,\alpha_n).$ It is clear that 
$\ff=F_1(\alpha_1)$ and $\mathbb{K}=\ff(\beta)$, where $\deg_{F_1}(\alpha_1)=p_1$ and $\deg_{\ff}(\beta)=s$. Below we use $
\trace=\mathrm{tr}_{\mathbb{K}/F_1}$ to denote the trace mapping from $\mathbb{K}$ to $F_1$.

Define the set
\begin{equation} \label{eq:defE1}
E_1:=\{\beta^u\alpha_1^{u+qs}\mid u=0,\ldots,s-1;q=0,\ldots,{\textstyle\frac{p_1-1}s-1 \}} \bigcup \Big\{\sum_{u=0}^{s-1}\beta^u\alpha_1^{p_1-1}\Big\}.
\end{equation}
Clearly, $|E_1|=p_1$, and we write the elements in $E_1$ as $e_0,e_1,\dots,e_{p_1-1}$. Then Lemma~\ref{le:subspace} implies that the set of elements
    \begin{equation} \label{eq:base1}
\{e_i \alpha_1^j: i=0,\dots,p_1-1, j=0,\dots,s-1\}
     \end{equation}
forms a basis of $\mathbb{K}$ over $F_1$.

Let $\cC^\bot={\text {GRS}}_{\mathbb K}(n,n-k,v,\Omega)$ be the dual code. For every codeword $(c_1,\dots,c_n)\in\cC$ we have
    $$
v_1 \alpha_1^t c_1 + v_2 \alpha_2^t c_2 + \dots + v_n \alpha_n^t c_n=0, \quad \quad
t=0,1,\dots,n-k-1 .
    $$
Multiplying by $e_i$ on both sides of the equation and evaluating the trace, we obtain the relation
      \begin{equation} \label{eq:tin}
      \begin{aligned}
\trace(e_i v_1 \alpha_1^t c_1) &= -\sum_{j=2}^n  \trace(e_i v_j \alpha_j^t c_j)\\
&= -\sum_{j=2}^n \alpha_j^t \trace(e_i v_j c_j) , \quad 
t=0,1,\dots,n-k-1 ,
   \end{aligned}
     \end{equation}
where the second equality follows since $\alpha_j\in F_1$ for all $2\le j\le n$. Therefore, the elements
$\{\trace(e_i v_j c_j):2\le j\le n\}$ suffice 
to compute $\{\trace(e_i v_1 \alpha_1^t c_1):0\le t\le n-k-1\}$. Since $s=d-2e-k+1\le d-k+1\le n-k$, we can calculate $\{\trace(e_i v_1 \alpha_1^t 
c_1):0\le t\le s-1\}$ from $\{\trace(e_i v_j c_j):2\le j\le n\}$.
Thus knowing the values of $\{\trace(e_i v_j c_j):2\le j\le n, 0\le i\le p_1-1\}$ suffices to find the set of elements
\begin{equation} \label{eq:base2}
\{\trace(e_i v_1 \alpha_1^t c_1):0\le t\le s-1, 0\le i\le p_1-1\}.
\end{equation}
Since the set \eqref{eq:base1} forms a basis of $\mathbb{K}$ over $F_1$, the set
$\{e_i v_1 \alpha^t: 0\le i\le p_1-1, 0\le t\le s-1\}$ also forms a basis of $\mathbb{K}$ over $F_1$, and therefore we can recover $c_1$ from \eqref{eq:base2}.
In conclusion, to recover $c_1$, it suffices to know the set of elements $\{\trace(e_i v_j c_j):2\le j\le n, 0\le i\le p_1-1\}$.

%
%

\subsection{The repair scheme}\label{sec:plan}

For $j=2,3,\dots,n$ define the vector $r_j:=(\trace(e_i v_j c_j), i=0,\dots,p_1-1).$ In this section we design invertible linear 
transformations $M_j$ that send these vectors to a set of vectors $z_j$ that support error correction. 
The following proposition underlies our repair scheme.
\begin{proposition} \label{prop:plan} Consider the set of vectors $z_j=(z_{j,0},z_{j,1}, \dots, z_{j,p_1-1}),j=2,3,\dots,n$ defined by
\begin{equation} \label{eq:plan} 
z_j^T=M_j r_j^T,
\end{equation}  
where $M_2,\dots,M_n$ are invertible matrices of order $p_1.$ Suppose that for every $i=0,1,\dots,p_1-1,$ the vector $(z_{2,i},z_{3,i},\dots,z_{n,i})$ is contained in an MDS code of length $n-1$ and dimension $s+k-1$.
Then there is a repair scheme of the code $\cC$ that supports recovery of the node $c_1$ with optimal error correction capability.
\end{proposition}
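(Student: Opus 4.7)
The plan is to turn the assumptions of the proposition directly into a two-stage decoding procedure, and then verify that the resulting scheme meets the optimal bandwidth bound \eqref{eq:bandwidth-errors}. Fix any set $\cR\subseteq\{2,\dots,n\}$ of $d$ helper nodes, of which at most $e$ may provide erroneous information. Each helper $j\in\cR$ transmits the $p_1$ symbols $z_{j,0},\dots,z_{j,p_1-1}\in F_1$ computed from $c_j$ through \eqref{eq:plan}; since $M_j$ has entries in $F_1$, every $z_{j,i}$ is an $F_1$-linear combination of the trace values $\trace(e_0 v_j c_j),\dots,\trace(e_{p_1-1}v_jc_j)$, and therefore itself of the form $\trace(\widetilde e_{j,i}v_jc_j)$ for some $\widetilde e_{j,i}\in\mathbb K$.

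The first stage is coordinate-wise error correction. For each fixed $i\in\{0,\dots,p_1-1\}$, the hypothesis says that $(z_{2,i},\dots,z_{n,i})$ lies in an MDS code of length $n-1$ and dimension $s+k-1$. Puncturing to the $d$ coordinates indexed by $\cR$ preserves the MDS property, yielding a length-$d$ MDS code with $d-(s+k-1)=2e$ parity checks, which corrects any $e$ errors. Because the set of erroneous helpers does not depend on $i$, applying an MDS decoder independently to each of the $p_1$ ``columns'' recovers the true values $z_{j,i}$ for every $j\in\cR$, and in fact for every $j\in\{2,\dots,n\}$ by reading off the full MDS codeword.

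The second stage recovers $c_1$ exactly as in the discussion preceding the proposition. For each $j\in\{2,\dots,n\}$ we invert $M_j$ to obtain $r_j=(\trace(e_iv_jc_j))_{i=0}^{p_1-1}$; substituting into \eqref{eq:tin} yields
\[
\trace(e_iv_1\alpha_1^tc_1)=-\sum_{j=2}^n\alpha_j^t\trace(e_iv_jc_j),\qquad 0\le i\le p_1-1,\ 0\le t\le s-1.
\]
Since Lemma~\ref{le:subspace} certifies that $\{e_iv_1\alpha_1^t:0\le i\le p_1-1,\,0\le t\le s-1\}$ is an $F_1$-basis of $\mathbb K$, these $p_1 s$ trace values pin down $c_1$ uniquely. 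For the bandwidth, the node size over $F_1$ is $l=[\mathbb K:F_1]=p_1s$, and the total download $dp_1=dl/s=dl/(d-2e-k+1)$ matches the bound \eqref{eq:bandwidth-errors}, so the scheme attains optimal error-correcting repair bandwidth.

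The main obstacle lies not in this reduction but in its hypothesis: the proposition is only a sufficient condition, and the real work, deferred to the remainder of Section~\ref{sect:erc}, is to exhibit invertible matrices $M_j$ over $F_1$ for which the column vectors $(z_{2,i},\dots,z_{n,i})$ are simultaneously MDS codewords of length $n-1$ and dimension $s+k-1$. Given the structure of the basis $E_1$ in \eqref{eq:defE1} and the parity pattern underlying the optimal-repair construction of \cite{Tamo18}, I would expect $M_j$ to be a Vandermonde-type mixing of the evaluation powers $\alpha_j^t$ with the elements of $E_1$, designed so that parity identities analogous to \eqref{eq:example-parity} hold columnwise in the image.
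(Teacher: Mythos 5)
Your proof is correct and follows essentially the same route as the paper: puncture each column MDS codeword to the helper set $\cR$ to get a $(d, s+k-1)$ MDS code that corrects $e$ errors, decode columnwise, invert the $M_j$ to recover the trace vectors $r_j$, then reconstruct $c_1$ via the basis from Lemma~\ref{le:subspace}, with the bandwidth $dp_1 = dl/(d-2e-k+1)$ matching \eqref{eq:bandwidth-errors}. The only cosmetic difference is that you have the helper transmit $z_j$ directly while the paper has the repair center download $\hat r_j$ and apply $M_j$ itself; this does not change the download count or the error-correction argument, and your closing observation about the deferred work of constructing the $M_j$ is accurate.
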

Note that, by the closing remark in Sec.~\ref{sect:toycorrection}, it suffices to assume that $s<n-k.$
\begin{proof} 
If $(z_2,z_3,\dots,z_n)$ is a codeword in
an MDS array code of length $n-1$ and dimension $s+k-1,$ then the punctured codeword $(z_j:j\in\cR)$ is contained in an MDS array code of length 
$d=|\cR|$ and dimension $s+k-1= d-2e$, and such the code can correct any $e$ errors. 

To repair the failed node $c_1$, we download $p_1$-dimensional vectors $\hat r_j, j\in \cR,$ where $\cR\subset [n]\backslash\{1\}, |\cR|=d$ is a set of $d$ helper nodes. For all but $e$ or fewer values of $j$, we have $\hat r_j=r_j$. The repair scheme consists of the following steps:
\begin{enumerate}
 \item[(i)] Find the vectors $\hat z_j^T=M_j \hat r_j^T,j\in \cR,$ 
 \item[(ii)] Find the vectors $z_j,j\in\cR$ using the error correction
 procedures of the underlying MDS codes,
 \item[(iii)] For every $i=0,\dots,p_1-1$ use the $d$-subset $\{z_{j,i},j\in \cR\}$ to recover the codeword $(z_{2,i},z_{3,i},\dots,z_{n,i})$,
 \item[(iv)] Find the vectors $r_j^T=M_j^{-1}z_j^T,j=2,\dots,n-1$ and finally recover $c_1.$
\end{enumerate}
Step (ii) is justified by the fact that, by assumption, at most $e$ of the elements $\hat z_j$ are incorrect. In step (iii) we rely on the 
fact that $d$ symbols of the MDS codeword suffice to recover the remaining $n-1-d$ symbols, and in step (iv) we use invertibility of the matrices
$M_j$ and recover $c_1$ using \eqref{eq:tin}, \eqref{eq:base2}.

The total number of downloaded symbols of $F_1$ equals $p_1d,$ and it is easy to verify that the repair bandwidth of our scheme meets the bound
\eqref{eq:bandwidth-errors} with equality.
\end{proof}

Why do we need the matrices $M_j$ and why were they not involved in the example in Sec.~\ref{sect:toycorrection}? The answer is
related to the fact that we need to remove the failed node from consideration and obtain a codeword of the MDS code that contains all the other 
nodes. In the example the degree of the minimal polynomial of $\alpha_1$, denoted $f(x)$, is $s< n-k,$ so the evaluations of $x^tf$ are dual codewords (see \eqref{eq:example-parity} in Prop.~\ref{prop:example-dual}). This implies that the downloaded symbols
form a codeword in an MDS code over $F$ which supports error correction. Importantly, this codeword does not involve the erased coordinate.

Switching to the RS codes of \cite{Tamo18} considered here, the element $\alpha_1$ is of degree $p_1$ over the repair field $F(\alpha_2,\dots,
\alpha_n)$, and generally $p_1>n-k-1,$ so the minimal polynomial of $\alpha_1$ is not a dual codeword. This requires us to modify the above idea. 
In general terms, we will find suitable elements of the set $E_1$ such that Eq.~\eqref{eq:tin} yields linear relations between the 
entries of the form $\trace(e_i v_j c_j).$ The coefficients of these relations form the rows of the matrix $M_j$.

%
%

\subsection{The matrices $M_j$}\label{sec:M}
In this section we will construct the matrices $M_j$ and the vector $z_j,$ and also prove the full rank condition.
Rather than writing the expressions at this point in the text, 
We proceed in stages, by deriving $p_1$ linear relations involving components of the vectors on both sides of \eqref{eq:plan}.
 (the notation is rather complicated and would not be intuitive; if desired, the reader may
nevertheless consult Sec.~\ref{sec:invertible}, particularly, Eq.\eqref{eq:z}).

\subsubsection{The first $p_1-s-1$ relations}
\begin{proposition} For all $0\le u\le s-1$ and $0\le q\le \frac{p_1-1}{s}-2$, the vector
\begin{equation} \label{eq:GRS1}
\big(\alpha_j^s \trace(\beta^u \alpha_1^{u+qs} v_j c_j) - \trace(\beta^u \alpha_1^{u+(q+1)s} v_j c_j), j=2,\dots,n \big)
\end{equation}
is a codeword in a GRS code of length $n-1$ and dimension $s+k-1.$
\end{proposition}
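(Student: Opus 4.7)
The plan is to mirror the proof of Proposition~\ref{prop:example-dual}: start from the parity-check identities for $\cC$, multiply by a carefully chosen element of $\mathbb{K}$, take the trace to $F_1$, use $\alpha_j\in F_1$ for $j\ge 2$ to pull $\alpha_j^t$ outside, and then invoke Proposition~\ref{prop:RS}. The new twist compared to the toy example is that here $\alpha_1$ has degree $p_1$ over $F_1$, which is generally much larger than $n-k-s$, so the minimal polynomial of $\alpha_1$ cannot serve directly as a dual codeword and we cannot simply ``multiply by $f(\alpha_1)$'' to annihilate the $c_1$ contribution. The key idea is to cancel the $c_1$ term instead by \emph{subtracting} two parity relations whose powers of $\alpha_1$ differ by exactly $s$.

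Concretely, fix $u,q$ in the stated range and set $a:=\beta^u\alpha_1^{u+qs}$, so that $a\alpha_1^s=\beta^u\alpha_1^{u+(q+1)s}$. Starting from $\sum_{j=1}^n v_j\alpha_j^t c_j=0$ for $t=0,1,\dots,n-k-1$ and proceeding as in~\eqref{eq:tin} with the multiplier $a$, I obtain
\begin{equation*}
\trace(a v_1\alpha_1^t c_1)+\sum_{j=2}^n \alpha_j^t\,\trace(a v_j c_j)=0.
\end{equation*}
Replacing $a$ by $a\alpha_1^s$ gives
\begin{equation*}
\trace(a v_1\alpha_1^{t+s} c_1)+\sum_{j=2}^n \alpha_j^t\,\trace(a\alpha_1^s v_j c_j)=0,
\end{equation*}
and replacing $t$ by $t+s$ in the first identity (valid provided $t+s\le n-k-1$) gives
\begin{equation*}
\trace(a v_1\alpha_1^{t+s} c_1)+\sum_{j=2}^n \alpha_j^{t+s}\,\trace(a v_j c_j)=0.
\end{equation*}
Subtracting these last two identities kills $\trace(a v_1\alpha_1^{t+s} c_1)$ and yields, for all $0\le t\le n-k-s-1$,
\begin{equation*}
\sum_{j=2}^n \alpha_j^t\Big(\alpha_j^s\,\trace(a v_j c_j)-\trace(a\alpha_1^s v_j c_j)\Big)=0.
\end{equation*}

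Since the $\alpha_2,\dots,\alpha_n$ are distinct elements of $F_1$, these $n-k-s$ vanishing-moment conditions allow me to apply Proposition~\ref{prop:RS} directly, placing the length-$(n-1)$ vector $\bigl(\alpha_j^s\trace(\beta^u\alpha_1^{u+qs}v_jc_j)-\trace(\beta^u\alpha_1^{u+(q+1)s}v_jc_j)\bigr)_{j=2}^n$ inside a GRS code over $F_1$ of dimension $(n-1)-(n-k-s)=s+k-1$, as claimed. The hypothesis $s<n-k$ needed so that the range of $t$ is non-empty is already the standing assumption from the remark closing Sec.~\ref{sect:toycorrection}.

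The only real obstacle is spotting the ``shift-by-$s$'' subtraction trick; once that is in place, the remainder is bookkeeping on the ranges of $t$, $u$, and $q$, together with a routine appeal to Proposition~\ref{prop:RS}. I would expect the subsequent relations announced in Sec.~\ref{sec:M} (the remaining $s+1$ rows needed to complete the matrices $M_j$) to arise by similarly pairing the ``wrap-around'' basis element $\sum_u \beta^u\alpha_1^{p_1-1}$ with the generic elements of $E_1$, using the same cancellation idea adapted to the different power structure.
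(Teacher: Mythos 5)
Your proof is correct and follows essentially the same route as the paper: both apply the trace identity \eqref{eq:tin} for the two consecutive basis elements $\beta^u\alpha_1^{u+qs}$ and $\beta^u\alpha_1^{u+(q+1)s}$, shift $t$ by $s$ so the $c_1$-terms coincide, cancel them, and invoke Proposition~\ref{prop:RS} to read off the $n-k-s$ vanishing moments. Your ``subtraction'' phrasing and the paper's ``equate the two right-hand sides on the overlapping $t$-range'' are the same step, and the resulting $t$-range $0\le t\le n-k-s-1$ matches.
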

\begin{proof}
Let us write \eqref{eq:tin} for $e_i$ of the form $e_i=\beta^u \alpha_1^{u+qs}:$
$$
\trace(\beta^u \alpha_1^{u+qs+t} v_1 c_1) 
= -\sum_{j=2}^n \alpha_j^t \trace(\beta^u \alpha_1^{u+qs} v_j c_j) , \quad \quad
t=0,1,\dots,n-k-1
$$
(see also \eqref{eq:defE1}).
Writing this as
  $$
  \trace(\beta^u \alpha_1^{u+(q+1)s+t-s} v_1 c_1) 
= -\sum_{j=2}^n \alpha_j^{s+t-s} \trace(\beta^u \alpha_1^{u+qs} v_j c_j), \quad t=0,1,\dots,n-k-1
  $$
and performing the change of variable $(t-s)\mapsto t,$ we obtain the relation
    \begin{align}
\trace(\beta^u \alpha_1^{u+(q+1)s+t} v_1 c_1) 
= -\sum_{j=2}^n \alpha_j^{s+t} &\trace(\beta^u \alpha_1^{u+qs} v_j c_j) , \label{eq:-s}\\
&t=-s,-s+1,\dots,-s+n-k-1. \nonumber
   \end{align}
On the other hand, substitutinng $e_i=\beta^u \alpha_1^{u+(q+1)s}$ into \eqref{eq:tin}, we obtain
     \begin{equation}
\trace(\beta^u \alpha_1^{u+(q+1)s+t} v_1 c_1) 
= -\sum_{j=2}^n \alpha_j^t \trace(\beta^u \alpha_1^{u+(q+1)s} v_j c_j) , \quad \quad
t=0,1,\dots,n-k-1. \label{eq:0}
   \end{equation}
Note that the left-hand sides of \eqref{eq:-s} and \eqref{eq:0} conicide for $t=0,1,\dots,n-k-s-1,$ and thus so do the right-hand sides. We obtain
$$
\sum_{j=2}^n \alpha_j^{s+t} \trace(\beta^u \alpha_1^{u+qs} v_j c_j) =
\sum_{j=2}^n \alpha_j^t \trace(\beta^u \alpha_1^{u+(q+1)s} v_j c_j) $$
or
$$
\sum_{j=2}^n \alpha_j^t \big(\alpha_j^s \trace(\beta^u \alpha_1^{u+qs} v_j c_j) - \trace(\beta^u \alpha_1^{u+(q+1)s} v_j c_j) \big) = 0, 
$$
for $t=0,1,\dots,n-k-s-1.$
On account of Proposition~\ref{prop:RS} this implies the claim about the GRS code; moreover, since there are $n-k-s$
independent parity-check equations, the dimension of this code is $(n-1)-(n-k-s)=s+k-1$.
\end{proof}
We note that the components of the vector \eqref{eq:GRS1} are formed as linear combinations of the elements $\trace(e_i v_jc_j),$ and so this  gives us
$p_1-s-1$ vectors $z_j.$

\vspace*{.1in}
\subsubsection{{One more relation}}
\begin{proposition}
The vector
\begin{equation} \label{eq:GRS2}
\Big(\sum_{u=0}^{s-1} \alpha_j^{s-u} \trace(\beta^u \alpha_1^{u+p_1-s-1} v_j c_j) -\trace\Big(\sum_{u=0}^{s-1}\beta^u \alpha_1^{p_1-1} v_j c_j\Big),
j=2,\dots,n \Big)
\end{equation}
is a codeword in a GRS code of length $n-1$ and dimension $s+k-1$.
\end{proposition}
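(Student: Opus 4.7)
The plan is to mirror the strategy of the previous proposition, using \eqref{eq:tin} with two families of $e_i$ whose traces on $v_1 c_1$ can be made to align on a range of exponents, and then equating the two resulting expressions in the helper coordinates to extract parity-check equations to which Proposition~\ref{prop:RS} can be applied. The new twist, compared with \eqref{eq:GRS1}, is that the ``extra'' basis element $\sum_{u=0}^{s-1}\beta^u\alpha_1^{p_1-1}$ of $E_1$ must be paired against $s$ different elements of the form $\beta^u\alpha_1^{u+p_1-s-1}$ (one per value of $u$), because applying $\trace$ to $v_1c_1$ times $\alpha_1^{s-u}$ lifts each term $\beta^u\alpha_1^{u+p_1-s-1}$ up to $\beta^u\alpha_1^{p_1-1}$, producing the summand that appears in the extra element.

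Concretely, I would first write \eqref{eq:tin} for $e_i=\beta^u\alpha_1^{u+p_1-s-1}$, and then substitute $t\mapsto s-u+t'$ to obtain
\begin{equation*}
\trace(\beta^u\alpha_1^{p_1-1+t'}v_1c_1)=-\sum_{j=2}^n\alpha_j^{s-u+t'}\trace(\beta^u\alpha_1^{u+p_1-s-1}v_jc_j),
\end{equation*}
valid for $0\le t'\le n-k-1-(s-u)$. Summing over $u=0,1,\dots,s-1$ collapses the LHS to $\trace\bigl(\sum_u\beta^u\alpha_1^{p_1-1+t'}v_1c_1\bigr)$, and the identity is valid in the common range $0\le t'\le n-k-s-1$. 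Next I would write \eqref{eq:tin} for the extra element $e_i=\sum_{u=0}^{s-1}\beta^u\alpha_1^{p_1-1}$, namely
\begin{equation*}
\trace\Big(\sum_{u=0}^{s-1}\beta^u\alpha_1^{p_1-1+t'}v_1c_1\Big)=-\sum_{j=2}^n\alpha_j^{t'}\trace\Big(\sum_{u=0}^{s-1}\beta^u\alpha_1^{p_1-1}v_jc_j\Big),
\end{equation*}
again for all $t'=0,\dots,n-k-1$. The two left-hand sides coincide on $t'=0,\dots,n-k-s-1$, so equating the right-hand sides and factoring $\alpha_j^{t'}$ yields
\begin{equation*}
\sum_{j=2}^n\alpha_j^{t'}\Big(\sum_{u=0}^{s-1}\alpha_j^{s-u}\trace(\beta^u\alpha_1^{u+p_1-s-1}v_jc_j)-\trace\Big(\sum_{u=0}^{s-1}\beta^u\alpha_1^{p_1-1}v_jc_j\Big)\Big)=0
\end{equation*}
for all $t'=0,1,\dots,n-k-s-1$. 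Applying Proposition~\ref{prop:RS} to these $n-k-s$ parity-check equations shows that the vector in \eqref{eq:GRS2} lies in a GRS code of length $n-1$ and dimension $(n-1)-(n-k-s)=s+k-1$, as required.

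The only real subtlety is bookkeeping on the valid range of $t'$: the shift $t=s-u+t'$ must keep $t\in\{0,\dots,n-k-1\}$ uniformly in $u$, which pins the maximal range to $0\le t'\le n-k-s-1$ (achieved at $u=0$). This exactly yields the $n-k-s$ independent parity checks needed to match the dimension $s+k-1$, and it also uses the hypothesis $s<n-k$ tacitly carried over from the preceding proposition. No genuinely new technique beyond the trick of \eqref{eq:GRS1} is required; the only thing to verify carefully is that summing over $u$ at the properly shifted exponents is what converts the $s$ separate ``extra'' terms into the composite summand matching the additional basis vector of $E_1$.
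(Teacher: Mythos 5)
Your proposal is correct and follows essentially the same route as the paper: apply \eqref{eq:tin} with $e_i=\beta^u\alpha_1^{u+p_1-s-1}$, shift the index to align all left-hand sides at $\alpha_1^{p_1-1+t}$, sum over $u$, compare against \eqref{eq:tin} for the composite element $\sum_{u}\beta^u\alpha_1^{p_1-1}$ on the common range $0\le t\le n-k-s-1$, and invoke Proposition~\ref{prop:RS}. The only cosmetic difference is that the paper records the full intersection $\{-1,\dots,n-k-s-1\}$ (the extra value $t=-1$ is needed later), whereas you only keep the nonnegative part, which is all that this proposition requires.
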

\begin{proof}
Going back to \eqref{eq:tin}, take $e_i=\beta^u \alpha_1^{u+p_1-s-1}$ for $u=0,1,\dots,s-1.$ We obtain the relation
   $$
\trace(\beta^u \alpha_1^{u+p_1-s-1+t} v_1 c_1) 
= -\sum_{j=2}^n \alpha_j^t \trace(\beta^u \alpha_1^{u+p_1-s-1} v_j c_j) , \quad \quad
t=0,1,\dots,n-k-1.
   $$
Changing the variable $(t+u-s) \mapsto t$ in the above equation, we obtain that for every $u=0,1,\dots,s-1$,
\begin{align} \label{eq:src}
\trace(\beta^u \alpha_1^{p_1-1+t} v_1 c_1) 
= -\sum_{j=2}^n \alpha_j^{t-u+s} &\trace(\beta^u \alpha_1^{u+p_1-s-1} v_j c_j),\\
&t=u-s,u-s+1,\dots,u-s+n-k-1 .\nonumber
\end{align}
Since
  \begin{equation}\label{eq:overlap}
\bigcap_{u=0}^{s-1} \{u-s,u-s+1,\dots,u-s+n-k-1\}
=\{-1, 0, 1, \dots, n-k-s-1\} ,
  \end{equation}
we have
\begin{align*}
\trace(\beta^u \alpha_1^{p_1-1+t} v_1 c_1) 
= -\sum_{j=2}^n \alpha_j^{t-u+s} &\trace(\beta^u \alpha_1^{u+p_1-s-1} v_j c_j), \\  &-1\le t\le n-k-s-1, \quad 0\le u\le s-1.
\end{align*}
Taking the cue from \eqref{eq:overlap}, let us 
sum these equations on $u= 0,1,\dots,s-1,$ and we obtain
    \begin{align}
\trace\Big(\sum_{u=0}^{s-1}\beta^u \alpha_1^{p_1-1+t} v_1 c_1\Big) 
= -\sum_{j=2}^n \sum_{u=0}^{s-1} \alpha_j^{t-u+s} &\trace(\beta^u \alpha_1^{u+p_1-s-1} v_j c_j), \label{eq:s}\\
   &-1\le t\le n-k-s-1. \nonumber
    \end{align}
Turning to \eqref{eq:defE1} again, let us substitute the element $\sum_{u=0}^{s-1}\beta^u\alpha_1^{p_1-1}$ into \eqref{eq:tin}:
\begin{equation} \label{eq:slp}
\trace\Big(\sum_{u=0}^{s-1}\beta^u \alpha_1^{p_1-1+t} v_1 c_1\Big) 
= -\sum_{j=2}^n \alpha_j^t \trace\Big(\sum_{u=0}^{s-1}\beta^u \alpha_1^{p_1-1} v_j c_j\Big), \quad\quad  0\le t\le n-k-1
\end{equation}
From \eqref{eq:s} and \eqref{eq:slp} we deduce the equality
    \begin{align*}
\sum_{j=2}^n \sum_{u=0}^{s-1} \alpha_j^{t-u+s} \trace\Big(\beta^u \alpha_1^{u+p_1-s-1} v_j c_j\Big)
= \sum_{j=2}^n \alpha_j^t \trace\Big(\sum_{u=0}^{s-1}\beta^u \alpha_1^{p_1-1} v_j c_j\Big),
    \end{align*}
or
\begin{align*}
\sum_{j=2}^n \alpha_j^t
\Big(\sum_{u=0}^{s-1} \alpha_j^{s-u} \trace(\beta^u \alpha_1^{u+p_1-s-1} v_j c_j) -\trace\Big(\sum_{u=0}^{s-1}\beta^u \alpha_1^{p_1-1} v_j c_j
\Big) \Big)
=0 \quad\quad  
\end{align*}
for $0\le t\le n-k-s-1.$ By Proposition \ref{prop:RS}, the proof is complete.
\end{proof}

\vspace*{.1in}
\subsubsection{{The remaining $s$ relations}}
Following the plan outlined in Sec.~\ref{sec:plan}, we have constructed $p_1-s$ vectors $z_j,$ listed in \eqref{eq:GRS1} and \eqref{eq:GRS2}. 
In order to find the remaining $s$ linear combinations of the elements $r_{i,j}$, we develop the idea used in the example in Sec.~\ref{sect:toycorrection}.

We begin with introducing some notation. Let $f(x)$ be the minimal polynomial of $\alpha_1$ over $F_1.$ For $h=0,1,\dots,s-1$ define 
    \begin{equation}\label{eq:fh}
    f_h(x)=x^{p_1+h}(\text{mod} f(x)),
    \end{equation}
then $\deg f_h <\deg f=p_1$ and $\alpha_1^{p_1+h}=f_h(\alpha_1)$. Let $f_{h,q}\in F_1[x], q=0,\dots,(p_1-1)/s-1 $ be the (uniquely defined) polynomials such that 
   \begin{itemize}
   \item[(i)] $\deg f_{h,q}\le s-1, q=0,1,\dots, \frac{p_1-1}{s}-2$; 
   \item[(ii)] $\deg f_{h,(p_1-1)/s-1}\le s$;
   \item[(iii)] 
       \begin{equation} \label{eq:imp}
f_h(x)=\sum_{q=0}^{(p_1-1)/s-1} x^{qs} f_{h,q}(x)  .
       \end{equation}
   \end{itemize}

\begin{proposition}\label{prop:s}
For every $h=0,1,\dots,s-1$, the vector 
    \begin{equation} \label{eq:GRS4}
\begin{aligned}
\Big( \sum_{q=0}^{(p_1-1)/s-1}  \sum_{u=0}^h
 f_{h-u,q}(\alpha_j) \trace(\alpha_1^{u+qs} \beta^u v_j c_j)  
 +  \sum_{u=h+1}^{s-1}  \alpha_j^{h+1-u+s} \trace(\beta^u \alpha_1^{u+p_1-s-1} v_j c_j) \\
 - \alpha_j^{h+1} \trace(\sum_{u=0}^{s-1}\beta^u \alpha_1^{p_1-1} v_j c_j), j=2,3,\dots,n
 \Big)
\end{aligned}
    \end{equation}
is contained in a GRS code of length $n-1$ and dimension $s+k-1$. 
\end{proposition}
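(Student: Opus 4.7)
The plan is to follow the template of the preceding two propositions: compute the same trace-valued quantity
\[
T(t):=\trace\Big(\sum_{u=0}^{s-1}\beta^u\alpha_1^{p_1+h+t}v_1c_1\Big)
\]
in two different ways, equate the results, and then invoke Proposition~\ref{prop:RS}. I will let $t$ range over $\{0,1,\dots,n-k-s-1\}$; granted the standing assumption $s<n-k$, this will produce $n-k-s$ independent parity-check equations and hence a GRS code of dimension $(n-1)-(n-k-s)=s+k-1$, as required.

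The first expression for $T(t)$ is obtained by replacing the parameter $t$ in \eqref{eq:slp} by $h+1+t$, which yields
\[
T(t)=-\sum_{j=2}^n\alpha_j^t\cdot\alpha_j^{h+1}\trace\Big(\sum_{u=0}^{s-1}\beta^u\alpha_1^{p_1-1}v_jc_j\Big).
\]
For the second expression I split the outer sum at $u=h$. For $u\le h$, the identity $f(\alpha_1)=0$ together with \eqref{eq:fh} and \eqref{eq:imp} gives
\[
\alpha_1^{p_1+h}=\alpha_1^u\alpha_1^{p_1+(h-u)}=\alpha_1^u f_{h-u}(\alpha_1)=\sum_{q}\alpha_1^{u+qs}f_{h-u,q}(\alpha_1),
\]
so substituting $e_i=\beta^u\alpha_1^{u+qs}$ into \eqref{eq:tin} and taking an $F_1$-linear combination with coefficients read off from $\alpha_1^t f_{h-u,q}(\alpha_1)$ produces
\[
\trace(\beta^u\alpha_1^{p_1+h+t}v_1c_1)=-\sum_{j=2}^n\alpha_j^t\sum_q f_{h-u,q}(\alpha_j)\trace(\beta^u\alpha_1^{u+qs}v_jc_j).
\]
For $u>h$ no reduction modulo $f$ is needed, since $\beta^u\alpha_1^{u+p_1-s-1}$ already lies in $E_1$ (cf.~\eqref{eq:defE1}); writing $\alpha_1^{p_1+h+t}=\alpha_1^{u+p_1-s-1}\cdot\alpha_1^{h+1-u+s+t}$ and applying \eqref{eq:tin} with $e_i=\beta^u\alpha_1^{u+p_1-s-1}$ gives
\[
\trace(\beta^u\alpha_1^{p_1+h+t}v_1c_1)=-\sum_{j=2}^n\alpha_j^t\cdot\alpha_j^{h+1-u+s}\trace(\beta^u\alpha_1^{u+p_1-s-1}v_jc_j).
\]
Summing the last two displays over $u=0,\dots,s-1$ yields the second expression for $T(t)$.

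Equating the two expressions, the left-hand sides cancel and the common factor $\alpha_j^t$ can be pulled out of the combined right-hand side; letting $t$ vary over $\{0,\dots,n-k-s-1\}$ produces exactly $n-k-s$ parity-check relations on the vector displayed in \eqref{eq:GRS4}, after which Proposition~\ref{prop:RS} finishes the proof. The one piece of bookkeeping is checking that every shift of $\alpha_1$ produced by the two applications of \eqref{eq:tin} stays within the allowed range $\{0,\dots,n-k-1\}$: because $f_{h-u,(p_1-1)/s-1}$ can have degree as large as $s$, and the exponent $h+1-u+s+t$ reaches $s+t$ at $u=h+1$, both branches of the argument force the bound $t+s\le n-k-1$, which matches the chosen range and uses the hypothesis $s<n-k$.
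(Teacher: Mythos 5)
Your proof is correct and follows essentially the same strategy as the paper's: compute $\trace\bigl(\sum_{u}\beta^u\alpha_1^{p_1+h+t}v_1c_1\bigr)$ two ways — once via \eqref{eq:slp} shifted, once by splitting $u\le h$ (reduce $\alpha_1^{p_1+h}$ mod $f$ via \eqref{eq:imp}) from $u>h$ (where $\beta^u\alpha_1^{u+p_1-s-1}\in E_1$) — then equate and invoke Proposition~\ref{prop:RS}. The only difference from Appendix~\ref{app:Prop4} is a uniform shift in $t$ (you work over $t\in\{0,\dots,n-k-s-1\}$ where the paper uses $\{1,\dots,n-k-s\}$ and re-shifts at the end) and a slightly more direct organization that avoids the paper's intermediate reindexing $h+u\mapsto h$; your degree bookkeeping at the end is accurate.
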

The proof of this proposition is rather long and technical, and is given in Appendix~\ref{app:Prop4}.

Concluding, expressions \eqref{eq:GRS1}, \eqref{eq:GRS2}, and \eqref{eq:GRS4} yield $p_1$ linear combinations of the elements
$
(\trace(e_0 v_j c_j),\linebreak[3]\trace(e_1 v_j c_j),\dots,\trace(e_{p_1-1} v_j c_j)) 
$
for every $j\in\{2,3,\dots,n\}$. It is these linear combinations that we denote by
$z_j=(z_{j,0},z_{j,1}, \dots, z_{j,p_1-1})$ in \eqref{eq:plan}. We have shown that for every 
$i\in\{0,1,\dots,p_1-1\}$, the vector $(z_{2,i},z_{3,i},\dots,z_{n,i})$ is contained in an MDS code of length $n-1$ and dimension $s+k-1$.
The next subsection treats the remaining part of the assumptions in Proposition \eqref{prop:plan} above.

%
%

\subsection{The linear transforms $M_j$ are invertible}\label{sec:invertible}
The object of this section is to show that the mapping
   $$
   (\trace(e_0 v_j c_j),\trace(e_1 v_j c_j),\dots,\trace(e_{p_1-1} v_j c_j))\mapsto z_j=(z_{j,0},z_{j,1}, \dots, z_{j,p_1-1})
   $$
is invertible. In other words, we will show that $\rank(M_j)=p_1$ for all $j.$
Let us first simplify the notation.
Recall the set $E_1=\{e_0,e_1,\dots,e_{p_1-1}\}$ in \eqref{eq:defE1} and let us order its elements in the order of increase of the powers of $\alpha_1:$
      \begin{align*}
     e_{u+qs}&:= \beta^u\alpha_1^{u+qs} \text{~~for~} u=0,1,\ldots,s-1 \text{~and~} q=0,1,\ldots, {\textstyle\frac{p_1-1}s-1}\\
     e_{p_1-1}&:= \sum_{u=0}^{s-1}\beta^u\alpha_1^{p_1-1}.
       \end{align*}
Using the notation $r_{i,j}=\trace(e_i v_j c_j)$ introduced above, the vectors in \eqref{eq:GRS1} can be written as
      $$
(\alpha_j^s r_{u+qs,j} - r_{u+qs+s,j},j=2,\dots,n) \text{~~for~} 0\le u\le s-1 \text{~and~} 0\le q\le \frac{p_1-1}{s}-2.
       $$
or, writing $i=u+qs,$ as
\begin{equation} \label{eq:zz1}
(\alpha_j^s r_{i,j} - r_{i+s,j},j=2,\dots,n)
\text{~~for~} 0\le i\le p_1-s-2.
\end{equation}
Similarly, the vector in \eqref{eq:GRS2} can be written as
\begin{equation} \label{eq:zz2}
\Big(\sum_{u=0}^{s-1} \alpha_j^{s-u} r_{u+p_1-s-1,j} - r_{p_1-1,j} ,j=2,\dots,n \Big),
\end{equation}
and the vectors in \eqref{eq:GRS4} can be written as
\begin{equation} \label{eq:zz3}
\begin{aligned}
\Big( \sum_{q=0}^{(p_1-1)/s-1}  \sum_{u=0}^h
 f_{h-u,q}(\alpha_j) r_{u+qs,j}  
 +  \sum_{u=h+1}^{s-1}  \alpha_j^{h+1-u+s} r_{u+p_1-s-1,j} 
 - \alpha_j^{h+1} r_{p_1-1,j}
,j=2,\dots,n \Big) ,  \\
 0\le h\le s-1 .
\end{aligned}
\end{equation}
For a fixed value of $j$, the entries in \eqref{eq:zz1}--\eqref{eq:zz3} form the vector $z_j=(z_{j,0},z_{j,1}, \dots, z_{j,p_1-1})$, 
and we list its coordinates according to the chosen order: 
         \begin{equation}\label{eq:z}
\left.\begin{aligned}
z_{j,i} & := \alpha_j^s r_{i,j} - r_{i+s,j}
\text{~~for~} 0\le i\le p_1-s-2  , \\
z_{j,p_1-s-1} & := \sum_{u=0}^{s-1} \alpha_j^{s-u} r_{u+p_1-s-1,j} - r_{p_1-1,j} ,  \\
z_{j,p_1-s+h} & := \sum_{u=0}^h \sum_{q=0}^{(p_1-1)/s-1}  
 f_{h-u,q}(\alpha_j) r_{u+qs,j}  
 \\&\hspace*{.3in}+  \sum_{u=h+1}^{s-1}  \alpha_j^{h+1-u+s} r_{u+p_1-s-1,j} 
 - \alpha_j^{h+1} r_{p_1-1,j}
\text{~for~}  0\le h\le s-1 .
\end{aligned}\right\}
         \end{equation}
Our objective is to show that the linear mapping $(r_{0,j},r_{1,j},\dots,r_{p_1-1,j})\stackrel{M_j}\to(z_{j,0},z_{j,1}, \dots, z_{j,p_1-1})$ is invertible. 
This will follow once we show that its kernel is trivial, i.e., that if $(z_{j,0},z_{j,1}, \dots, z_{j,p_1-1})$ is the all-zeros vector, 
then so is $(r_{0,j},r_{1,j},\dots,r_{p_1-1,j}).$ If
    $
    z_{j,i}  = \alpha_j^s r_{i,j} - r_{i+s,j} =0
    $
for $0\le i\le p_1-s-2,$ then 
      \begin{equation} \label{eq:rec}
r_{u+qs,j}=\alpha_j^s r_{u+(q-1)s,j} = \dots = \alpha_j^{qs} r_{u,j}
\text{~~for~} 0\le u\le s-1 \text{~and~} 1\le q\le \frac{p_1-1}{s}-1.
      \end{equation}
Using \eqref{eq:rec} in the expression for $z_{j,p_1-s+h}, 0\le h\le s-1$, we obtain the following $s$ relations:
     \begin{align}
z_{j,p_1-s+h} & = \sum_{u=0}^h \sum_{q=0}^{(p_1-1)/s-1}  
 f_{h-u,q}(\alpha_j) \alpha_j^{qs} r_{u,j} \nonumber \\
       &\hspace*{.5in}+  \sum_{u=h+1}^{s-1}  \alpha_j^{h+1-u+s} \alpha_j^{p_1-s-1} r_{u,j} - \alpha_j^{h+1} r_{p_1-1,j}   \nonumber\\
 & = \sum_{u=0}^h   f_{h-u}(\alpha_j)  r_{u,j}  
 +  \sum_{u=h+1}^{s-1}  \alpha_j^{p_1+h-u} r_{u,j} 
 - \alpha_j^{h+1} r_{p_1-1,j} , \label{eq:zj1}
     \end{align}
where the second equality follows from \eqref{eq:imp}.
Using \eqref{eq:rec} in the expression for $z_{j,p_1-s-1}$, we obtain
   \begin{equation}\label{eq:zj2}
z_{j,p_1-s-1}  = \sum_{u=0}^{s-1} \alpha_j^{s-u} \alpha_j^{p_1-s-1} r_{u,j} - r_{p_1-1,j} = \sum_{u=0}^{s-1}  \alpha_j^{p_1-u-1} r_{u,j} - r_{p_1-1,j} .
   \end{equation}
Since we assumed that the $z$-vector is zero, coordinates $z_{p_1-u}, u=s+1,s, \dots, 1$ that appear in \eqref{eq:zj1}, \eqref{eq:zj2} are zero.
Writing these conditions in matrix form using the above order, we obtain
    \begin{equation}  \label{eq:zb}
\left[\begin{array}{cccccc}
\alpha_j^{p_1-1} & \alpha_j^{p_1-2} & \alpha_j^{p_1-3} & \dots & \alpha_j^{p_1-s} & -1 \\
f_0(\alpha_j) & \alpha_j^{p_1-1} & \alpha_j^{p_1-2} & \dots & \alpha_j^{p_1-s+1} & -\alpha_j \\
f_1(\alpha_j) & f_0(\alpha_j) & \alpha_j^{p_1-1} & \dots & \alpha_j^{p_1-s+2} & -\alpha_j^2 \\
f_2(\alpha_j) & f_1(\alpha_j) & f_0(\alpha_j) & \dots & \alpha_j^{p_1-s+3} & -\alpha_j^3 \\
\vdots & \vdots & \vdots & \vdots & \vdots & \vdots \\
f_{s-1}(\alpha_j) & f_{s-2}(\alpha_j) & f_{s-3}(\alpha_j) & \dots & f_0(\alpha_j) & -\alpha_j^s
\end{array}\right]
\left[\begin{array}{c}
r_{0,j} \\ r_{1,j} \\ r_{2,j} \\ \vdots \\ r_{s-1,j} \\ r_{p_1-1,j}
\end{array}\right] = 0 ,
     \end{equation}
We aim to show that the above matrix is invertible. 

Recall that $f(x)$ is the minimal polynomial of $\alpha_1$ and from \eqref{eq:imp}, $f(x)+f_0(x)=x^{p_1}.$
Since $f(x)$ is irreducible over $F_1$ and $\alpha_j\in F_1,$ 
we have $f(\alpha_j) \neq 0$ for all $j=2,\dots,n$.

Multiplying the first row of the matrix in \eqref{eq:zb} by $\alpha_j$ and then subtracting the second row from the first row, we obtain
$$
\left[\begin{array}{cccccc}
f(\alpha_j) & 0 & 0 & \dots & 0 & 0 \\
f_0(\alpha_j) & \alpha_j^{p_1-1} & \alpha_j^{p_1-2} & \dots & \alpha_j^{p_1-s+1} & -\alpha_j \\
f_1(\alpha_j) & f_0(\alpha_j) & \alpha_j^{p_1-1} & \dots & \alpha_j^{p_1-s+2} & -\alpha_j^2 \\
f_2(\alpha_j) & f_1(\alpha_j) & f_0(\alpha_j) & \dots & \alpha_j^{p_1-s+3} & -\alpha_j^3 \\
\vdots & \vdots & \vdots & \vdots & \vdots & \vdots \\
f_{s-1}(\alpha_j) & f_{s-2}(\alpha_j) & f_{s-3}(\alpha_j) & \dots & f_0(\alpha_j) & -\alpha_j^s
\end{array}\right].
$$
Since $f(\alpha_j) \neq 0$, we can use elementary row operations to erase the first column, obtaining
$$
\left[\begin{array}{cccccc}
f(\alpha_j) & 0 & 0 & \dots & 0 & 0 \\
0 & \alpha_j^{p_1-1} & \alpha_j^{p_1-2} & \dots & \alpha_j^{p_1-s+1} & -\alpha_j \\
0 & f_0(\alpha_j) & \alpha_j^{p_1-1} & \dots & \alpha_j^{p_1-s+2} & -\alpha_j^2 \\
0 & f_1(\alpha_j) & f_0(\alpha_j) & \dots & \alpha_j^{p_1-s+3} & -\alpha_j^3 \\
\vdots & \vdots & \vdots & \vdots & \vdots & \vdots \\
0 & f_{s-2}(\alpha_j) & f_{s-3}(\alpha_j) & \dots & f_0(\alpha_j) & -\alpha_j^s
\end{array}\right] .
$$
Proceeding analogously, let us multiply the second row of this matrix by $\alpha_j$ and then subtract the third row from the second one to obtain
   $$
\left[\begin{array}{cccccc}
f(\alpha_j) & 0 & 0 & \dots & 0 & 0 \\
0 & f(\alpha_j) & 0 & \dots & 0 & 0 \\
0 & f_0(\alpha_j) & \alpha_j^{p_1-1} & \dots & \alpha_j^{p_1-s+2} & -\alpha_j^2 \\
0 & f_1(\alpha_j) & f_0(\alpha_j) & \dots & \alpha_j^{p_1-s+3} & -\alpha_j^3 \\
\vdots & \vdots & \vdots & \vdots & \vdots & \vdots \\
0 & f_{s-2}(\alpha_j) & f_{s-3}(\alpha_j) & \dots & f_0(\alpha_j) & -\alpha_j^s
\end{array}\right] .
   $$
As above, we can eliminate all the nonzeros in the second column except for $f(\alpha_j),$ and so on. In the end we obtain the 
matrix $\text{diag}(f(\alpha_j),\dots,f(\alpha_j),-\alpha_j^s)$ with nonzero diagonal. 
This proves that the matrix in \eqref{eq:zb} is invertible. Therefore, $r_{0,j} = r_{1,j}  = \dots = r_{s-1,j} = r_{p_1-1,j}=0$. Combining this with \eqref{eq:rec}, we conclude that $r_{i,j}=0$ for all $0\le i\le p_1-1$.
This proves that the matrices $M_j, j=2,\dots,n$ in \eqref{eq:plan} are invertible, providing the last missing element to the 
justification of the repair scheme with optimal error correction.

%
%

	\section{A family of optimal-access RS codes}\label{sec:construction}
In this section, we construct a new family of RS codes that is similar to the construction in \cite{Tamo18} but affords repair with optimal access.

The input-output cost of node repair for the RS codes of \cite{Tamo18} was analyzed in \cite{LiDauWang19} for $d=n-1$.  
According to \eqref{eq:bandwidth}, in this case the minimum access cost per helper node equals $\frac{l}{n-k}$.  
The authors of \cite{LiDauWang19} showed that it is possible to adjust the repair scheme so that the access cost is $(1+\frac{n-k-1}{p_i})\frac{l}{n-k},$ i.e., at most twice the optimal value. However, more 
is true: namely, it turns out that any {\em fixed} node in the construction of \cite{Tamo18} (Def.~\ref{def:RSerrors}) can be repaired with optimal access. This observation,
which is the starting point of the new construction, is based on the fact that it is possible to construct a basis of the field ${\mathbb K}$ over the base field that reduces the access cost of the repair of the chosen node. If the option of choosing the basis for each erased
node were available, we could use the arguments in Sec.~\ref{sect:toyaccess} to perform repair with optimal access. The difficulty
arises because this would entail rewriting the storage contents, which should be avoided. To address this issue, we construct
the code over a field that contains $n$ elements $\beta_i$ instead of a single element $\beta$, and this supports efficient repair
of any single failed node. This idea is developed below.

\subsection{New construction}\label{sect:new}
Consider the following sequence of algebraic extensions of $\ff_p:$ let $K_0=\ff_p$ 
and for $i=1,\dots,n$ let
  \begin{equation}\label{eq:Fi}
  F_i=K_{i-1}(\alpha_i), K_i=F_i(\beta_i),
  \end{equation}
where $\alpha_i$ is an algebraic element of degree $p_i$ over $\ff_p$ and $\beta_i$ is an element of degree $s=d-k+1$ over $F_i$. In the end we obtain the field
  \begin{equation}\label{eq:K}
  {\mathbb K}:=K_n=\ff_p(\alpha_1,\dots,\alpha_n,\beta_1,\dots\beta_n). 
  \end{equation}
 We still assume that $p_1,\dots,p_n$ are distinct primes satisfying the condition $p_i\equiv 1\;\text{mod}\,s$ for all  $i=1,\dots,n.$
Consider the code $\cC:=RS_{\mathbb K}(n,k,\Omega),$ where as before, the set of evaluation points is given by
$\Omega=\{\alpha_1,\dots,\alpha_n\}.$ We will show that the code $\cC$ affords optimal-access repair.

The repair scheme follows the general approach of \cite{Guruswami16} and its implementation in \cite{Tamo18}. Let $c=(c_1,\dots,c_n)\in \cC$ be a codeword. Suppose that the node $i$ has failed (coordinate $c_i$ is erased), and we would like to repair it from a set of helper nodes $\cR\subseteq\{1,\ldots,n\}\setminus\{i\}$ with $|\cR|=d$. Let
	\begin{align*}
		h(x)=\prod_{j\in\{1,\ldots,n\}\setminus(\cR\cup\{i\})}(x-\alpha_j).
	\end{align*}
	Clearly, we have $\deg(x^t h(x))<n-k$ for $t=0,\ldots,s-1$. Therefore, for some nonzero vector $v=(v_1,\ldots,v_n)$, we have $(v_1\alpha_1^t h(\alpha_1),\ldots,v_n\alpha_n^t h(\alpha_n))\in\cC^{\bot}$ for $t=0,\ldots,s-1,$ where $\cC^\bot=GRS_{\mathbb K}(n,k,v,\Omega).$
	In other words, we have 
	\begin{align}
		\label{eq:dual}
		v_i\alpha_i^t h(\alpha_i) c_i = -\sum_{\begin{substack}{j=1\\j\neq i}\end{substack}}^{n}v_j\alpha_j^t h(\alpha_j) c_j,\quad t=0,\ldots,s-1.
	\end{align}
	The repair scheme in \cite{Tamo18} as well as in this paper
	relies on this set of $s$ dual codewords to recover the value of $c_i.$
	
\begin{remark}\label{remark:sp} The dual codewords $x^th(x)$ have zero values in the complement of the set $\hat{\cR}:=\cR\cup\{i\}.$ In other words, they are contained
in the shortened code $(\cC^\bot)^{\hat{\cR}}$ of the dual code. Thinking dually, we can start with the code $\cC^\bot$	
and construct a repair scheme for its coordinates based on the punctured code $\cC_{\hat{\cR}}$ (coordinate projection
of $\cC$ on $\hat{\cR}$). This approach is equivalent to the scheme used in
\cite{Tamo18} and in this paper because $((\cC^\bot)^{\hat{\cR}})^\bot\!\cong C_{\hat{\cR}}.$ 
\end{remark}

Let us establish a few simple properties of the tower of fields defined above in \eqref{eq:Fi}, \eqref{eq:K}.

\begin{lemma}\label{lemma:ext1} The extension degrees in the field tower $\ff_p=K_0\subset\dots\subset K_i\subset\dots\subset K_n={\mathbb K}$
are as follows:
   \begin{align*}
   [K_i:\ff_p]&=s^i\prod_{j=1}^i p_j,i=1,\dots,n\\
   [{\mathbb K}:\ff_p]&=l:=s^n\prod_{i=1}^np_i.
   \end{align*}
\end{lemma}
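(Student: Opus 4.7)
The plan is to prove the two displayed formulas by induction on $i$, with the formula for $[{\mathbb K}:\ff_p]$ being the special case $i=n$. The base case $i=0$ is trivial since $K_0=\ff_p$, so I focus on the inductive step. Assuming the formula for $[K_{i-1}:\ff_p]$, I invoke the tower law
\begin{equation*}
[K_i:\ff_p] \;=\; [K_i:F_i]\,[F_i:K_{i-1}]\,[K_{i-1}:\ff_p].
\end{equation*}
The factor $[K_i:F_i]=s$ is immediate from the construction, since $\beta_i$ is declared to have degree $s$ over $F_i$. The inductive hypothesis handles $[K_{i-1}:\ff_p]$. So everything reduces to computing $[F_i:K_{i-1}]$, i.e., the degree of $\alpha_i$ over $K_{i-1}$.

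For that step, I would argue as follows. Since $\alpha_i$ has degree $p_i$ over $\ff_p$, its minimal polynomial over $K_{i-1}$ divides its minimal polynomial over $\ff_p$, so $[K_{i-1}(\alpha_i):K_{i-1}]$ divides $p_i$; as $p_i$ is prime, this degree is either $1$ or $p_i$. To rule out $1$, I use that $p_i=[\ff_p(\alpha_i):\ff_p]$ must divide $[K_{i-1}(\alpha_i):\ff_p]=[K_{i-1}(\alpha_i):K_{i-1}]\,[K_{i-1}:\ff_p]$. By the inductive hypothesis, $[K_{i-1}:\ff_p]=s^{i-1}\prod_{j=1}^{i-1}p_j$, and I claim this is coprime to $p_i$: the primes $p_1,\dots,p_{i-1}$ are distinct from $p_i$ by hypothesis, and $\gcd(s,p_i)=1$ because $p_i\equiv1\pmod s$ together with the primality of $p_i$ forces $p_i\geq s+1$, so any prime factor of $s$ is at most $s<p_i$ and divides $p_i-1$ rather than $p_i$. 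Hence $p_i$ must divide the remaining factor $[K_{i-1}(\alpha_i):K_{i-1}]$, giving $[F_i:K_{i-1}]=p_i$.

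Combining everything, the inductive step yields
\begin{equation*}
[K_i:\ff_p] \;=\; s\cdot p_i\cdot s^{i-1}\prod_{j=1}^{i-1}p_j \;=\; s^i\prod_{j=1}^{i}p_j,
\end{equation*}
which completes the induction; specializing to $i=n$ recovers the formula for $[{\mathbb K}:\ff_p]=l$.

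The only nontrivial obstacle is the coprimality $\gcd(s,p_i)=1$, which is not a tower-law manipulation but rather a genuine use of the hypothesis $p_i\equiv 1\pmod s$. Everything else is a routine application of the tower law and the fact that $\alpha_i$'s minimal polynomial over a larger field divides its minimal polynomial over a smaller one. I do not anticipate any subtlety regarding the existence of the adjoined elements $\alpha_i,\beta_i$, since we are free to work inside an algebraic closure of $\ff_p$ (or to define $K_i$ abstractly as a quotient by an irreducible polynomial of the prescribed degree).
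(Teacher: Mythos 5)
Your proof is correct and takes the same approach as the paper, which dispatches the lemma in a single sentence (``for each $i$ we adjoin two elements $\alpha_i,\beta_i$ to $K_{i-1}$, and their degrees over $K_{i-1}$ are coprime, so they contribute $sp_i$'') -- you simply supply the tower-law induction and the coprimality verification that the paper leaves implicit. One minor imprecision: the fact that the minimal polynomial of $\alpha_i$ over $K_{i-1}$ divides the one over $\ff_p$ gives only $[K_{i-1}(\alpha_i):K_{i-1}]\le p_i$, not that this degree divides $p_i$ -- but this is harmless, since the inequality together with your subsequent observation that $p_i$ divides $[K_{i-1}(\alpha_i):K_{i-1}]\cdot[K_{i-1}:\ff_p]$ while $\gcd\bigl(p_i,[K_{i-1}:\ff_p]\bigr)=1$ already forces $[K_{i-1}(\alpha_i):K_{i-1}]=p_i$.
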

\begin{proof}
The proof is obvious from the definition: for each $i$ we adjoin two elements $\alpha_i,\beta_i$ to $K_{i-1},$ and their degrees 
over $K_{i-1}$ are coprime,
so they contribute $sp_i$ to the result. 
\end{proof}

We will use an explicit form of the basis of ${\mathbb K}$ over $\ff_p.$
	For $m=0,\ldots,l-1$, let us write 
	  \begin{equation}\label{eq:m}
	  m=(m_{n},m_{n-1},\ldots,m_1,\bar{m}_n,\bar{m}_{n-1},\ldots,\bar{m}_1)
	  \end{equation}
where $m_i=0,\ldots,p_i-1$ and $\bar{m}_i=0,\ldots,s-1$ for $i=1,\ldots,n$.
	
	\begin{lemma}
		\label{le:basis-a}
		Let 
		$$
		A=\{a_m:=\prod_{i=1}^{n}\alpha_i^{m_i}\prod_{j=1}^{n}\beta_j^{\bar{m}_j} \mid  m_i=0,\ldots,p_i-1,\bar{m}_j=0,\ldots,s-1;
		m=0,1,\dots,l-1\}.
		$$ 
Then $A$ is a basis for $\mathbb{K}$ over $\mathbb{F}_p$.
	\end{lemma}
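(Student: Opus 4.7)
The plan is to combine the cardinality computation of Lemma \ref{lemma:ext1} with an inductive application of the tower basis theorem along the chain $\mathbb{F}_p = K_0 \subset K_1 \subset \cdots \subset K_n = \mathbb{K}$. First observe that $|A| = \prod_{i=1}^n p_i \cdot s^n = l = [\mathbb{K}:\mathbb{F}_p]$, so it suffices to prove that the elements of $A$ are $\mathbb{F}_p$-linearly independent (equivalently, that they span).

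I would proceed by induction on $i$, showing that the truncated set
\[
A_i := \Big\{ \prod_{j=1}^i \alpha_j^{m_j} \beta_j^{\bar m_j} : 0 \le m_j < p_j,\; 0 \le \bar m_j < s\Big\}
\]
is a basis of $K_i$ over $\mathbb{F}_p$. The base case $i=0$ is trivial. For the inductive step, the key sublemma is that $\{\alpha_i^{m_i}\beta_i^{\bar m_i} : 0 \le m_i < p_i,\; 0 \le \bar m_i < s\}$ is a basis of $K_i$ over $K_{i-1}$; tensoring (in the sense of the standard tower basis theorem) with the inductive basis $A_{i-1}$ of $K_{i-1}/\mathbb{F}_p$ then yields $A_i$. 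Since $\beta_i$ has degree $s$ over $F_i = K_{i-1}(\alpha_i)$ by construction, its powers $1,\beta_i,\ldots,\beta_i^{s-1}$ form a basis of $K_i$ over $F_i$, so the whole sublemma reduces to showing $[F_i : K_{i-1}] = p_i$.

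This last equality is where all the arithmetic content sits, and is the step I expect to require the most care. On one hand $[F_i : K_{i-1}] \le p_i$ because the minimal polynomial of $\alpha_i$ over $\mathbb{F}_p$ already has degree $p_i$. On the other hand, multiplicativity of degrees along $\mathbb{F}_p \subset \mathbb{F}_p(\alpha_i) \subset F_i$ forces $p_i \mid [F_i : \mathbb{F}_p] = [F_i : K_{i-1}] \cdot [K_{i-1} : \mathbb{F}_p]$, and by the inductive use of Lemma \ref{lemma:ext1} we have $[K_{i-1} : \mathbb{F}_p] = s^{i-1}\prod_{j<i} p_j$. Since the $p_j$ are distinct primes and $p_i \equiv 1 \pmod s$ (hence $\gcd(p_i, s)=1$), $p_i$ is coprime to $[K_{i-1}:\mathbb{F}_p]$, and therefore $p_i \mid [F_i : K_{i-1}]$. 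Combined with the upper bound, this gives $[F_i : K_{i-1}] = p_i$, completing the induction.

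In summary, modulo the coprimality observation that the minimal polynomial of $\alpha_i$ over $\mathbb{F}_p$ remains irreducible over $K_{i-1}$, the result is the standard tower basis construction; the dimension count from Lemma \ref{lemma:ext1} then upgrades linear independence (which is what the tower argument directly gives) to the basis conclusion.
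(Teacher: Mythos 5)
Your proof is correct and follows essentially the same approach as the paper, which cites coprimality to conclude $\deg_{K_{i-1}}(\alpha_i)=p_i$ and $\deg_{F_i}(\beta_i)=s$ by construction, then declares the $a_m$ linearly independent. You have simply filled in the details of that coprimality argument (the tower $\mathbb{F}_p\subset\mathbb{F}_p(\alpha_i)\subset F_i$ and the divisibility count) and made the inductive tower-basis structure explicit.
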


	\begin{proof} By co-primality, for $i=1,\dots,n$ we have $\deg_{K_{i-1}}(\alpha_{i})=p_{i},$ and by construction, we have
$\deg_{F_i}(\beta_i)=s.$ Thus, the elements $a_m, m=0,\dots,l-1$ are linearly independent over $\ff_p$.
	\end{proof}
	
	\begin{lemma} \label{le:basis-b} 
	For $m=0,\dots,l-1$ let $\cJ=\{j\in[n]: (\bar m_j,m_j)=(s-1,p_j-1)\}$ and let
	  $$
	  b_m=\prod_{i=1}^{n}\alpha_i^{m_i}\cdot\prod_{j\in \cJ}\Big(\sum_{u=0}^{s-1}\beta_j^u \Big)\cdot\prod_{j\not\in \cJ} \beta_j^{\bar{m}_j}.
	  $$

		Then the set $B:=\{b_m\mid m=0,\ldots,l-1\}$ is a basis of $\mathbb{K}$ over $\mathbb{F}_p$. 
		
		Furthermore, for $i=1,\ldots,n$, let $A_i=\{a_m\in A \mid (m_i,\bar{m}_i)=(0,0)\}$ and $B_i=\{b_m\in B \mid (m_i,\bar{m}_i)=(0,0)\},$ then  
		$$\Span_{\mathbb{F}_p}A_i = \Span_{\mathbb{F}_p}B_i.$$
	\end{lemma}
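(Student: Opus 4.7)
The plan is to expand each $b_m$ in the $A$-basis and argue that the resulting transition matrix is unit-triangular and block-diagonal in a suitable ordering. Distributing the product $\prod_{j\in\cJ(m)}\bigl(\sum_{u=0}^{s-1}\beta_j^u\bigr)$ yields
\[
b_m=\sum_{m'\preceq m}a_{m'},
\]
where I write $m'\preceq m$ to mean $m'_i=m_i$ for every $i\in[n]$, $\bar m'_j=\bar m_j$ for $j\notin\cJ(m)$, and $\bar m'_j\in\{0,\ldots,s-1\}$ is unconstrained for $j\in\cJ(m)$. The choice $\bar m'_j=s-1$ for all $j\in\cJ(m)$ recovers $m'=m$, so $a_m$ appears with coefficient $1$ in $b_m$.

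For the basis claim, the key observation is that if $m'\preceq m$ and $m'\neq m$ then some $\bar m'_j$ has dropped below $s-1$ for an index $j\in\cJ(m)$, which removes $j$ from $\cJ(m')$; a short check gives $\cJ(m')\subsetneq\cJ(m)$. Ordering the basis vectors by $|\cJ(m)|$, the transition matrix from $B$ to $A$ is lower-triangular with $1$'s on the diagonal, hence invertible. Combined with $|B|=|A|=l$ and Lemma~\ref{le:basis-a}, this shows that $B$ is an $\ff_p$-basis of $\mathbb{K}$.

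For the span equality, the relation $\preceq$ forces $m'_i=m_i$ for all $i$, so the transition matrix is block-diagonal with blocks indexed by $(m_1,\ldots,m_n)$, a structure preserved by inversion. Moreover, whenever $m_i<p_i-1$ the index $i$ lies outside $\cJ(m)$ for every admissible $\bar m_i$, so $\bar m'_i=\bar m_i$ is also forced. In particular, if $a_m\in A_i$, so that $(m_i,\bar m_i)=(0,0)$, then every $m'\preceq m$ satisfies $(m'_i,\bar m'_i)=(0,0)$; upon inverting, every $b_{m'}$ with nonzero coefficient in the $B$-expansion of $a_m$ again obeys this constraint, giving $a_m\in\Span_{\ff_p}B_i$. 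The reverse inclusion $\Span_{\ff_p}B_i\subseteq\Span_{\ff_p}A_i$ is immediate by applying $b_m=\sum_{m'\preceq m}a_{m'}$ to an element $m\in B_i$.

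The main obstacle I anticipate is purely notational: keeping straight which coordinates of $\bar m$ are ``free'' under $\preceq$ (those in $\cJ(m)$) and which are frozen by either $m_j<p_j-1$ or $\bar m_j<s-1$, and then verifying that the joint triangular-and-block-diagonal structure is faithfully inherited by the inverse transformation. No deeper algebraic input appears to be required beyond Lemma~\ref{le:basis-a}.
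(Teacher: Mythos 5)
Your proof is correct and follows essentially the same route as the paper: the paper performs an induction on $|\cJ(m)|$ directly, while you package the same expansion $b_m=\sum_{m'\preceq m}a_{m'}$ and the same monotonicity $\cJ(m')\subsetneq\cJ(m)$ as a unit-triangular, block-diagonal transition matrix, which is an equivalent formulation. Your treatment of the span equality (identifying that $\{m:(m_i,\bar m_i)=(0,0)\}$ is closed downward under $\preceq$, so the triangular structure restricts to it) is in fact slightly more explicit than the paper's, which only remarks that it ``amounts essentially to the same calculation.''
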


	\begin{proof} Since $|B|=l,$ to prove that $B$ is a basis it suffices to show that the elements $a_m$ can be expressed as linear combinations
	of the elements in $B.$ Let $\cJ\subset [n]$ and let $A(\cJ)=\{a_m\in A: (\bar{m}_j,m_j)=(s-1,p_j-1),j\in \cJ;(\bar{m}_j,m_j)\neq (s-1,p_j-1),j\notin \cJ\}.$ We argue by induction on $|\cJ|.$ If $m$ is such that $\cJ=\emptyset,$ then $a_m\in B,$ and there is nothing to prove. Now assume that for all $\cJ\subset [n],|\cJ|\le J-1$ the elements $a_m$ are linearly generated by the elements in $B,$ 
and let $m$ be such that $|\cJ|=J.$ We have
  $$
  a_m=\prod_{i=1}^{n}\alpha_i^{m_i}\prod_{j\not\in \cJ}\beta_j^{\bar{m}_j}\prod_{j\in \cJ}\beta_j^{s-1}
  $$
and
  $$
  b_m=\prod_{i=1}^{n}\alpha_i^{m_i}	\prod_{j\not\in \cJ}\beta_j^{\bar{m}_j}\prod_{j\in\cJ}\sum_{u=0}^{s-1}\beta_j^u=
  \prod_{i=1}^{n}\alpha_i^{m_i}	\Big(\prod_{j\not\in \cJ}\beta_j^{\bar{m}_j}\Big)\Big(\sum_{t_1,\dots,t_J=0}^{s-1}\prod_{u=1}^J \beta_{j_u}^{t_u}
  \Big).
  $$	
Multiplying out the sums on right-hand side, we note that the term with all $t_i=s-1$ equals $a_m,$
while the remaining terms contain fewer than $J$ factors of the form $\alpha_{j_u}^{p_{j_u}-1}\beta_{j_u}^{s-1}$. Each of such
terms is contained in some $A(\cJ)$ with $ |\cJ|\le J-1,$ and is linearly generated by the elements $b_m$ by the induction hypothesis. This implies that
$a_m$ is also expressible as a linear combination of the elements in $B.$
		
To prove the second claim, note that $\Span_{\mathbb{F}_p}A_i \supseteq \Span_{\mathbb{F}_p}B_i$. Therefore, to show that $\Span_{\mathbb{F}_p}A_i = \Span_{\mathbb{F}_p}B_i$, it suffices to show that for any $a=0,\ldots,n-1$ and any $\cJ\subseteq\{1,\ldots,n\}\setminus\{i\}$, the set $A_i(\cJ)$ can be generated linearly by the set $B_i$. This proof amounts essentially to the same calculation as above, 
		and will be omitted.
	\end{proof}

The role of the basis $(b_m)$ is to eliminate as many terms on the right-hand side of \eqref{eq:dual} as possible.
To repair the node $c_i$ we use the dual basis $(b_m^\ast)$ of  $(b_m),$ writing
   \begin{align}
		c_i=v_i^{-1}\sum_{m=0}^{l-1}c_{i,m}b^{\ast}_m.\label{eq:rep}
	\end{align}
Below $\trace=\mathrm{tr}_{\mathbb{K}/\mathbb{F}_p}$ denotes the absolute trace.
	
Lemmas \ref{lemma:ext1} and \ref{le:subspace} immediately imply the following.	
	\begin{proposition} \label{coro:decomp}
For $i=1,\ldots,n$, there exists vector space $S_i$ over $K_{i-1}$ such that 
$\dim_{K_{i-1}}S_i=p_i$ and $S_i+S_i\alpha_i+\cdots+S_i\alpha_i^{s-1}=K_i$. 
Furthermore, a basis for $S_i$ over $K_{i-1}$ is given by 
		\begin{align*}
			E_i:=\{\beta_i^u\alpha_i^{u+qs}\mid u=0,\ldots,s-1;q=0,\ldots,{\textstyle\frac{p_i-1}s}-1  \} \bigcup \Big\{\alpha_i^{p_i-1}\sum_{u=0}^{s-1}\beta_i^u\Big\}.
		\end{align*}
\end{proposition}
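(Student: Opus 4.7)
The plan is to reduce the claim to a direct application of Lemma \ref{le:subspace} after identifying the correct instantiation of its hypotheses within the tower \eqref{eq:Fi}. Concretely, I would set, in the notation of Lemma \ref{le:subspace}, $F := K_{i-1}$, $r := p_i$, and take the lemma's $\alpha,\beta$ to be our $\alpha_i,\beta_i$. If the hypotheses of that lemma are satisfied, then its conclusion, namely the existence of an $F$-linear subspace of dimension $r$ with basis $E$ and the direct sum decomposition $S+S\alpha+\cdots+S\alpha^{s-1}=K$, is exactly the assertion of the proposition with $E$ specializing to $E_i$ and $K$ to $K_i$.

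The bulk of the work is therefore to verify the four hypotheses of Lemma \ref{le:subspace}. Three of them are immediate from the construction: $K_{i-1}$ is a finite field (being an iterated algebraic extension of $\mathbb{F}_p$); $p_i$ is a prime with $p_i\equiv 1 \;\text{mod}\, s$ by the standing assumption on the sequence $p_1,\dots,p_n$; and $\beta_i$ has degree $s$ over $F_i=K_{i-1}(\alpha_i)$ by the definition \eqref{eq:Fi}. The one condition that requires justification is that $\alpha_i$ retains its degree $p_i$ after base change from $\mathbb{F}_p$ to $K_{i-1}$.

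For this, I would invoke Lemma \ref{lemma:ext1}, which gives $[K_{i-1}:\mathbb{F}_p]=s^{i-1}\prod_{j=1}^{i-1}p_j$. The primes $p_1,\dots,p_n$ are distinct, so $p_i$ does not divide $\prod_{j<i}p_j$. Moreover, $\gcd(p_i,s)=1$: any prime $r$ dividing both $s$ and $p_i$ would force $r=p_i$ by primality of $p_i$, but then $s\mid p_i-1$ combined with $p_i\mid s$ yields $p_i\mid 1$, a contradiction. Hence $\gcd([K_{i-1}:\mathbb{F}_p],p_i)=1$, and by multiplicativity of degrees in the tower $\mathbb{F}_p\subset K_{i-1}\subset K_{i-1}(\alpha_i)$, the minimal polynomial of $\alpha_i$ over $K_{i-1}$ still has degree $p_i$. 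With this checked, Lemma \ref{le:subspace} applies verbatim and produces $S_i$ together with the basis $E_i$ and the direct sum decomposition claimed in the proposition. I do not anticipate any further obstacle; the coprimality computation above is the only nontrivial point.
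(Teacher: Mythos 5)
Your proof is correct and follows the same route as the paper, which simply states that Lemma~\ref{lemma:ext1} together with Lemma~\ref{le:subspace} ``immediately imply'' the proposition; you have supplied exactly the missing verifications, in particular that $\gcd(p_i,s)=1$ (via $p_i\equiv 1\bmod s$) and hence $\gcd\bigl([K_{i-1}:\ff_p],p_i\bigr)=1$, which forces $\deg_{K_{i-1}}(\alpha_i)=p_i$ and makes Lemma~\ref{le:subspace} applicable with $F=K_{i-1}$, $r=p_i$, $\alpha=\alpha_i$, $\beta=\beta_i$.
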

	
We continue with the description of the repair scheme where we left in \eqref{eq:dual}. As a remark, below we write the
scheme over $\ff_p$ rather than over its extensions (the latter approach was chosen in \cite{Tamo18}).
Multiplying both sides of \eqref{eq:dual} by $\prod_{i'=1}^{n}e_{i'}\prod_{j'\neq i}^{n}\alpha_{j'}^{t_{j'}},$ where $e_{i'}\in E_{i'}$ and 
$t_{j'}=0,\ldots,s-1$, and evaluating the trace, we obtain
	\begin{align}
		\trace\Big(\prod_{i'=1}^{n}e_{i'}\prod_{j'\neq i}^{n}\alpha_{j'}^{t_{j'}} v_i\alpha_i^t h(\alpha_i) c_i\Big) 
		& = -\trace\Big(\prod_{i'=1}^{n}e_{i'}\prod_{j'\neq i}^{n}
		       \alpha_{j'}^{t_{j'}} \sum_{j\neq i}^{n}v_j\alpha_j^t h(\alpha_j) c_j\Big)\nonumber\\
		& = -\sum_{j\neq i}^{n}\trace\Big(\prod_{i'=1}^{n}e_{i'}\prod_{j'\neq i}^{n}
		            \alpha_{j'}^{t_{j'}} v_j\alpha_j^t h(\alpha_j) c_j\Big)\nonumber\\
		& = -\sum_{j\in\cR}\trace\Big(\prod_{i'=1}^{n}e_{i'}\prod_{j'\neq i}^{n}
		            \alpha_{j'}^{t_{j'}} v_j\alpha_j^t h(\alpha_j) c_j\Big).\label{eq:tr}
	\end{align}
On account of Proposition~\ref{coro:decomp} and the fact that $v_ih(\alpha_i)\neq0,$ the set 
   \begin{equation}\label{eq:basisK}
   \Big\{ \prod_{i'=1}^{n}e_{i'}\prod_{j'\neq i}^{n}\alpha_{j'}^{t_{j'}} v_i\alpha_i^t h(\alpha_i) \Big\},
   \end{equation}
   where $e_{i'}\in E_{i'}, i'\in[n];t=0,\ldots,s-1;t_{j'}=0,\ldots,s-1, j'\in[n]\backslash\{i\},$
is a basis of $\mathbb{K}$ over $\mathbb{F}_p$. Therefore, we can recover $c_i$ once we know the right-hand side of \eqref{eq:tr}. 

For $j\in\cR$, from \eqref{eq:rep} we have
	\begin{align}
		\trace(\prod_{i'=1}^{n}e_{i'}\prod_{j'\neq i}^{n}\alpha_{j'}^{t_{j'}} v_j\alpha_j^t h(\alpha_j) c_j)
		& = \trace\Big(\prod_{i'=1}^{n}e_{i'}\prod_{j'\neq i}^{n}\alpha_{j'}^{t_{j'}} \alpha_j^t h(\alpha_j) \sum_{m=0}^{l-1}c_{j,m}b^*_m\Big)\nonumber\\
		& = \sum_{m=0}^{l-1}\trace\Big(\prod_{i'=1}^{n}e_{i'}\prod_{j'\neq i}^{n}\alpha_{j'}^{t_{j'}} \alpha_j^t h(\alpha_j) b^*_m\Big)c_{j,m}.\label{eq:access}
	\end{align} 
	From \eqref{eq:access}, we see that in order to recover $c_i$ we need to access only those symbols $c_{j,m}$ for which
 $$\trace(\prod_{i'=1}^{n}e_{i'}\prod_{j'\neq i}^{n}\alpha_{j'}^{t_{j'}} \alpha_j^t h(\alpha_j) b^*_m)\neq 0.$$ 
	
Now, the element $\prod_{i'\neq i}^{n}e_{i'}\prod_{j'\neq i}^{n}\alpha_{j'}^{t_{j'}} \alpha_j^t h(\alpha_j)$ does not include $\alpha_i,\beta_i$,
and thus it can be written as an $\ff_p$-linear combination of the elements in the set $A_i.$ By 
Lemma~\ref{le:basis-b}, it can further be expressed as an $\ff_p$-linear combination of the elements in the set $B_i.$ 
Therefore, the elements $\prod_{i'=1}^{n}e_{i'}\prod_{j'\neq i}^{n}\alpha_{j'}^{t_{j'}} \alpha_j^t h(\alpha_j)$ for $e_{i'}\in E_{i'}$ and $t_{j'}=0,\ldots,s-1$ can be linearly generated over $\mathbb{F}_p$ by the set
    $$
    \bigcup_{e_i\in E_i}e_iB_i\subseteq B.
    $$
	
	Since $B$ and $B^*$ are dual bases, 
	$$
	\trace\Big(\prod_{i'=1}^{n}e_{i'}\prod_{j'\neq i}^{n}\alpha_{j'}^{t_{j'}} \alpha_j^t h(\alpha_j) b^{\ast}_m\Big)\neq 0
	$$
if and only if $b_m\in\bigcup_{e_i\in E_i}e_iB_i$. It follows that to calculate the left hand side of \eqref{eq:tr}, we need to access $\sum_{e_i\in 
E_i}|e_iB_i|=p_il/sp_i=l/s$ symbols on each helper node $j\in\cR$, which implies that the node $c_i$ affords optimal-access repair.

In conclusion, we note that the repair scheme of each of the nodes $i$ relies on its own element $\beta_i$. Looking back at the construction of 
\cite{Tamo18}, Sec.~\ref{sect:erc} above, it contains one such $\beta.$ Thus, these codes can be furnished with a repair scheme that has the optimal access property for any one (fixed) node in the encoding; see also the discussion at the end of Sec.~\ref{sect:toyaccess}.

\subsection{Error correction with optimal access}\label{sect:err-oa}
In this section we present a repair scheme of the RS codes defined in the beginning of Sec.~\ref{sect:new} that supports both the optimal access 
and optimal error correction properties. The scheme relies on a combination of ideas of Sections~\ref{sect:new} and \ref{sect:erc}. A full 
presentation of the proof would require us to repeat the arguments in Sec.~\ref{sec:M}; we shall instead confine ourselves to pointing to the similarity of the starting point and argue that once this is recognized, the remaining part is reproduced directly following the
proof in Sec.~\ref{sec:M}.

Let us modify the construction of RS codes of Sec.~\ref{sect:new} as follows. Let us assume that the number of helper nodes is $d$. 
We will construct our RS code over the symbol field $\mathbb K=\ff_p(\alpha_1,\dots,\alpha_n,\beta_1,\dots,\beta_n)$ \eqref{eq:K}, where as before, 
$\deg_{K_{i-1}}(\alpha_i)=p_i$ but $\deg_{F_i}(\beta_i)=s:=d-2e-k+1.$ Define the code $\cC:=RS_{\mathbb K}(n,k,\Omega),$ 
where $\Omega=\{\alpha_1,\dots,\alpha_n\}.$ 

Without loss of generality suppose that the failed node is the first one and let $\cR\subseteq \{2,3,\ldots,n\}$ with $|\cR|=d,2e+k\leq d\leq n-1$
be the subset of helper nodes. Consider a basis of $\mathbb{K}$ over $\ff_p$ given by $\bigcup_{t=0}^{s-1}\alpha_1^t\Lambda,$ where
    $$
   \Lambda=\Big\{ \prod_{i=1}^{n}e_{i}\prod_{j=2}^{n}\alpha_{j}^{t_{j}}  \mid  e_{i}\in E_{i}, i\in[n]; t_{j}=0,\ldots,s-1, j\in[n]\backslash\{1\}\Big\}.
   $$
That this is a basis is apparent from \eqref{eq:basisK}.

Next, note that $(v_1\alpha_1^t ,\ldots,v_n\alpha_n^t) \in\cC^{\bot}$ for some $v=(v_1,\ldots,v_n)\in({\mathbb K}^\ast)^n$ and for $t=0,\ldots,n-
k-1$. Therefore, for every $\lambda\in\Lambda$ we have
     \begin{align*}
\lambda v_1\alpha_1^t c_1 = -\sum_{j=2}^{n}  \lambda v_j\alpha_j^t c_j,\quad t=0,\ldots,n-k-1.
     \end{align*}
Let $G_1:=\ff_p(\alpha_2,\alpha_3,\ldots,\alpha_n)$. Evaluating the trace $\trace_{\mathbb{K}/G_1}$ on both sides of the above equation, we obtain
     \begin{align}
\trace_{\mathbb{K}/G_1} (\lambda v_1\alpha_1^t c_1) = -\sum_{j=2}^{n} \alpha_j^t\trace_{\mathbb{K}/G_1} (\lambda v_j c_j),\quad t=0,\ldots,n-
       k-1.\label{eq:tr-1}
     \end{align}
The repair scheme for the code $\cC$ is based on \eqref{eq:tr-1} in exactly the same way as the repair scheme of Proposition~\ref{prop:plan} 
is based on \eqref{eq:tin}. Namely, suppose that there are invertible linear transformations that map the vectors $(\trace_{\mathbb{K}/G_1} 
(\lambda v_j c_j),\lambda\in\Lambda),j=2,3,\ldots,n$ to codevectors in an MDS code of length $n-1$ and dimension $s+k-1$. Then it is possible to 
correct $e$ errors in the information collected from the helper nodes upon puncturing of this code to any $d$ coordinates in the same way as is 
done in Proposition~\ref{prop:plan}. Thus, the main step is to prove existence of such transformations. 
Here we observe that the terms involved in \eqref{eq:tr-1} are formed of $e_1$ times the remaining factors in $\lambda.$ The element $e_1$
plays the same role as $e_i$ in \eqref{eq:tin}, and the multiplier in front of it in $\lambda$ does not affect the proof. For this reason, the
required proof closely follows the proof in Sec.~\ref{sec:M}, and we do not repeat it here.

Thus, the vectors $(\trace_{\mathbb{K}/G_1} (\lambda v_j c_j),\lambda\in\Lambda),j\in\cR$ suffice to 
recover the value of the failed node. We argue that these values can be calculated by accessing the smallest possible number of symbols on the
helper nodes, and thus support the claim of optimal access. Let $B=(b_m)$ be the basis of $\mathbb K$ over $\ff_p$ defined in 
Lemma~\ref{le:basis-b}, let $B^\ast=(b^\ast_m)$ be its dual basis, and let $B_1=\{b_m\in b|(m_1,\bar m_1)=(0,0)\}.$ 
From \eqref{eq:rep}, for every $\lambda\in\Lambda$ and all $j=2,3,\ldots,n$ we have the equality
\begin{align*}
	\trace_{\mathbb{K}/G_1} (\lambda v_j c_j) = \trace_{\mathbb{K}/G_1} \Big(\lambda \sum_{m=0}^{l-1}b^{\ast}_m\Big)c_{i,m}.
\end{align*}
Let $\Gamma$ be a basis for $G_1$ over $\ff_p$. Then from the above equation, for every $\gamma\in\Gamma$ we have
\begin{align*}
	\trace_{G_1/\ff_p}(\gamma\trace_{\mathbb{K}/G_1} (\lambda v_j c_j)) 
	= \trace_{G_1/\ff_p}\Big(\gamma\trace_{\mathbb{K}/G_1} \Big(\lambda \sum_{m=0}^{l-1}b^{\ast}_m\Big)\Big)c_{i,m}.
\end{align*}
Since $\gamma\in G_1$ and $\trace_{G_1/\ff_p}\circ\trace_{\mathbb{K}/G_1}=\trace_{\mathbb{K}/\ff_p}$, it follows that
\begin{align}
	\trace_{\mathbb{K}/\ff_p} (\gamma\lambda v_j c_j) = \trace_{\mathbb{K}/\ff_p} \Big(\gamma\lambda \sum_{m=0}^{l-1}b^{\ast}_m\Big)c_{i,m}.\label{eq:down}
\end{align}
Note that the elements $\gamma\lambda = \gamma\prod_{i=1}^{n}e_{i}\prod_{j=2}^{n}\alpha_{j}^{t_{j}}$ can be written as $\ff_p$-linear combinations of the elements in the set
$\bigcup_{e_1\in E_1}e_1B_1\subseteq B$. By the duality of $B$ and $B^*$, the number of symbols that each helper node accesses to calculate the left hand side of \eqref{eq:down} equals $|\bigcup_{e_1\in E_1}e_1B_1|=l/s$, which, as remarked in the introduction, is the smallest possible number of symbols. Further, since $\Gamma$ is a basis of $G_1$ over $\ff_p$, we can recover $\trace_{\mathbb{K}/G_1}(\lambda v_j c_j)$ from the set $\{\trace_{\mathbb{K}/\ff_p} (\gamma\lambda v_j c_j)\mid \gamma\in\Gamma\}$.

Finally, evaluating the trace $\trace_{G_1/\ff_p}$ on both sides of \eqref{eq:tr-1}, we obtain 
\begin{align}
	\trace_{\mathbb{K}/\ff_p} (\lambda v_1\alpha_1^t c_1) = -\sum_{j=2}^{n} \trace_{G_1/\ff_p}(\alpha_j^t\trace_{\mathbb{K}/G_1} (\lambda v_j c_j)),\quad t=0,\ldots,s-1.\label{eq:tr-2}
\end{align}
Since the set $\{\lambda v_1\alpha_1^t\mid \lambda\in\Lambda;t=0,\ldots,s-1 \}$ forms a basis for $\mathbb{K}$ over $\ff_p$, we conclude from \eqref{eq:tr-2} that we can perform optimal error correction for the code $\cC$ with optimal access. As a final remark, the locations
of the entries accessed on each helper node depend only on the index of the failed node, and are independent of the index of the helpers.

%
%

	\section{Every scalar MSR code affords optimal-access repair}
This section is devoted to establishing the claim in the title. We begin with a discussion of repair schemes with a particular property of
having constant repair subspaces and use it to show that every MSR code with this property can be repaired with optimal access. In the last
part of the section we remove this assumption, establishing the general result, which is stated as follows.
	\begin{theorem} \label{thm:OA}
Let $\cC$ be an $(n,k)$ scalar MDS code over a finite field $K$ of length $n$ such that any single failed node can be optimally repaired from 
any subset of $d$ helper nodes, $k+1\le d\le n-1$ with optimal repair bandwidth. Then there exists an explicit procedure that  supports 
optimal-access repair of any single node from any subset of $d$ helpers, $k+1\le d\le n-1$.
	\end{theorem}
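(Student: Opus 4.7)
The plan is first to introduce a class of repair schemes that have \emph{constant repair subspaces} at each helper node, to show that any MSR code admitting such a scheme supports optimal-access repair, and then to reduce the general case to this special one via a combinatorial search argument.

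\textbf{Step 1: dual-codeword setup.} Fix a base field $F \subseteq K$ with $[K:F]=l$. By the standard characterization of linear repair for scalar MDS codes, an optimal scheme for node $i$ from a helper set $\cR$ of size $d$ is specified by $l$ dual codewords $u^{(1)},\ldots,u^{(l)} \in \cC^\perp$ for which $\{u^{(t)}_i\}_t$ spans $K$ over $F$ and each set $\{u^{(t)}_j\}_t$ with $j \in \cR$ spans an $F$-subspace $S^{(i,\cR)}_j \subseteq K$ of dimension $m := l/(d-k+1)$. The helper node $j$ contributes the $m$ traces $\trace_{K/F}(s\, c_j)$ as $s$ runs over a basis of $S^{(i,\cR)}_j$. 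Call the scheme \emph{constant} if $S^{(i,\cR)}_j$ depends only on $j$; denote this common subspace by $S_j$.

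\textbf{Step 2: constant schemes achieve optimal access.} For each $j$, extend a basis of $S_j$ to a basis $B_j = (b_{j,0},\ldots,b_{j,l-1})$ of $K$ over $F$ whose first $m$ elements span $S_j$, and store $c_j$ in the dual basis of $B_j$, writing $c_{j,r} = \trace(c_j b_{j,r})$. The required traces $\trace(s\, c_j)$ for $s \in S_j$ are then $F$-linear combinations of the first $m$ stored symbols $(c_{j,0},\ldots,c_{j,m-1})$, exactly as in Section~\ref{sect:toyaccess}. Hence every repair of any other node from any helper set containing $j$ accesses only those $m$ symbols of $c_j$, matching the bound \eqref{eq:bandwidth} coordinate-for-coordinate.

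\textbf{Step 3: reducing a general MSR code to the constant case.} Given an arbitrary optimal repair scheme, the key observation is that for each pair $(i,\cR)$ there is substantial freedom in the choice of the $l$ dual codewords used to repair node $i$ from $\cR$: any invertible $F$-linear recombination yields another valid system with identical repair subspaces, and more generally one can vary over all tuples consistent with the parity-check structure of $\cC^\perp$. The plan is to select, for each node $j$, a candidate common subspace $S_j$ of $F$-dimension $m$, and then for each scenario $(i,\cR)$ to produce dual codewords whose $j$-th span equals $S_j$ for every $j \in \cR$. Since the Grassmannian of $m$-dimensional $F$-subspaces of $K$ is finite, one can enumerate $n$-tuples $(S_1,\ldots,S_n)$ and, for each candidate, verify by a finite search whether every scenario admits a realizing tuple of dual codewords. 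This yields the claimed explicit (combinatorial) procedure.

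\textbf{Main obstacle.} The crux of the proof is the justification of Step 3, namely that a compatible family $\{S_j\}_{j \in [n]}$ always exists for an arbitrary scalar MSR code. This should rest on a careful dimension count combined with the MDS property: the latter forces each $j$-th repair subspace to be determined up to $F$-linear data that is governed by the global parity structure of $\cC$ rather than by each scenario independently, and the tight bandwidth bound \eqref{eq:bandwidth} pins down the required dimension. Once existence of $\{S_j\}$ is established, Step 2 converts the scheme to optimal access while preserving the sub-packetization $l$, completing the proof of Theorem~\ref{thm:OA}.
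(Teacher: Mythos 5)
Your Steps 1 and 2 correctly capture the easy half of the argument: if each helper node $j$ has a single $F$-subspace $S_j$ of dimension $l/s$ that works for \emph{every} repair scenario involving $j$ as a helper, then storing $c_j$ in a basis dual to one whose first $l/s$ elements span $S_j$ gives optimal access. This part is fine. The problem is Step 3, which you correctly identify as the crux but do not actually carry out, and which I believe cannot be carried out in the form you state it.

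There are two distinct issues. First, your notion of ``constant'' is stronger than the paper's and than what is needed: you ask that the downloaded subspace $S_j^{(i,\cR)}$ depend only on the \emph{helper} $j$, so that the same physical coordinates of node $j$ are read regardless of which node $i$ failed and which helper set $\cR$ was chosen. The paper's Definition of constant repair subspaces instead fixes the failed node $i$ and requires the subspace to be independent of $j$. Neither you nor the paper prove that scalar MSR codes always admit repair schemes that are constant in \emph{either} sense; the paper gives only a necessary condition ($\cA_1=\cdots=\cA_r$) and a sufficient condition, and then treats the general (non-constant) case separately. Second, and more fundamentally, optimal access does not require the accessed positions to be the same in every scenario, so your Step~3 is trying to establish more than is needed, and there is no argument in the proposal establishing even that much — the ``dimension count'' you gesture at is not given and is not obviously available, since the spans $\Span_F(Ah_j)$ are tied to a matrix $A$ that must also satisfy the full-rank condition $\dim_F(Ah_i)=l$ at the failed node, leaving limited freedom.

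The paper's proof takes a genuinely different route that bypasses the need for any kind of constancy. For every triple consisting of a helper set $\hat\cR$, failed node $i\in\hat\cR$, and helper $j\in\hat\cR\setminus\{i\}$, it records the repair subspace $\cV_{\hat\cR,i}^{(j)}=\Span_F(Ah_j^{(\hat\cR)})$ and then runs a combinatorial procedure (Algorithms~1--3) that builds a \emph{single} basis $B$ of $K$ over $F$ with the property that each $\cV_{\hat\cR,i}^{(j)}$ has a sub-basis $B_{\hat\cR,i}^{(j)}\subset B$. All nodes are then stored in the dual basis $B^\ast$, and to repair node $i$ from $\cR$, helper $j$ reads exactly the coordinates $\{c_{j,m}:b_m\in B_{\hat\cR,i}^{(j)}\}$, which has the optimal size $l/s$ but is allowed to depend on $(i,\cR)$. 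This ``universal basis that simultaneously refines every repair subspace'' is the key idea you are missing; once you have it, the reduction to a constant scheme becomes unnecessary, and the gap in Step~3 disappears because nothing of the kind is needed.
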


\subsection{Constant repair subspaces}\label{subsec:crs}
Observe that the repair scheme presented above in Sec.~\ref{sec:construction} has the property that for a given index of the failed node $i$, the procedure for recovering the node contents does not depend on the chosen subset of $d$ helper nodes. Indeed, to repair node $i$, 
the scheme accesses symbols $\{c_{j,m}\mid m:b_m\in\bigcup_{e_i\in E_i}e_iB_i \}$ on the node $j$, i.e., the symbols $c_{j,m}$ with 
$m=(m_i,\bar m_i)$ and
      $$
    (m_i,\bar{m}_i)\in\{(u+qs,u)\;|\;u=0,\ldots,s-1;q=0,\ldots,(p_i-1)/s-1\}\cup\{(p_i-1,s-1) \}.
      $$
Clearly the values of $m$ are independent of $j\in\cR$. This simplifies the implementation, and therefore represents a
desirable property of the scheme.
In this section, we generalize this observation and give conditions for it to hold.

Let $\cC$ be an $(n,k)$ linear scalar MDS code of length $n$ over finite field $K,$ and let $r=n-k$ be the number of parity nodes. Let $F$ be a subfield of $K$ such that $[K:F]=l$. For a subset $M\subset K$ we write $\dim_F(M)$ to refer to the dimension of the subspace spanned
by the elements of $M$ over $F$. The following result is a starting point of our considerations.
	
	\begin{theorem}[\cite{Guruswami16}]
		\label{thm:scheme}
		The code $\cC$ has an optimal linear repair scheme over $F$ with repair degree $d=n-1$ if and only if for every $i=1,\ldots,n$ there exist $l$ codewords $(c^{\perp}_{t,1},\ldots,c^{\perp}_{t,n})\in \cC^\perp,t=1,\ldots,l$ such that 
		\begin{align*}
		\dim_{F}(c^{\perp}_{1,i},\ldots,c^{\perp}_{l,i})&=l,\\
		\sum_{j\neq i}^{n}\dim_{F}(c^{\perp}_{1,j},\ldots,c^{\perp}_{l,j})&=\frac{(n-1)l}{r}.
		\end{align*}
	\end{theorem}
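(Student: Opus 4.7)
The plan is to leverage the trace duality between $F$-linear functionals on $K$ and elements of $K$, and then match dimension parameters to the repair bandwidth. Every $F$-linear map $K\to F$ has the form $x\mapsto\trace_{K/F}(ax)$ for a unique $a\in K$, and the pairing $(a,x)\mapsto\trace(ax)$ is nondegenerate. Consequently, an $F$-linear repair scheme that reconstructs $c_i$ from the helpers is captured by a family of identities of the form $\trace(c^\perp_{t,i}c_i)=-\sum_{j\neq i}\trace(c^\perp_{t,j}c_j)$, for $t=1,\ldots,l$, that hold for every $c\in\cC$, together with an $F$-linear reconstruction map on the retrieved traces. An identity $\sum_j a_jc_j=0$ holding on every $c\in\cC$ is equivalent to $(a_1,\ldots,a_n)\in\cC^\perp$, so the relevant recovery equations live naturally inside $\cC^\perp$.

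For the sufficiency direction, I would describe the following scheme. Given $l$ dual codewords $c^\perp_t$ satisfying the two dimension hypotheses, set $V_j:=\Span_F\{c^\perp_{t,j}:t=1,\ldots,l\}$ and $s_j:=\dim_FV_j$. Each helper $j\neq i$ transmits the values $\trace(\alpha\cdot c_j)$ as $\alpha$ runs through a fixed $F$-basis of $V_j$, yielding $s_j$ symbols of $F$. The repair center then reconstructs every $\trace(c^\perp_{t,j}c_j)$ by $F$-linearity and uses the dual-codeword identity to recover $\trace(c^\perp_{t,i}c_i)$ for $t=1,\ldots,l$. Since the first hypothesis says that $\{c^\perp_{t,i}\}_{t=1}^l$ spans $K$ over $F$, the map $x\mapsto(\trace(c^\perp_{t,i}x))_t$ is an $F$-linear bijection $K\to F^l$, and inverting it recovers $c_i$. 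The total download is $\sum_{j\neq i}s_j=(n-1)l/r$, matching the lower bound \eqref{eq:bandwidth} with $d=n-1$.

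For the necessity direction, I would fix an optimal $F$-linear repair scheme for node $i$ and extract the required dual codewords. The scheme specifies, for each $j\neq i$, an $F$-subspace $W_j\subseteq K$ of downloaded functionals (so helper $j$ transmits $\dim_FW_j$ elements of $F$) together with an $F$-linear reconstruction map taking the concatenated trace values to $c_i$. Fix an $F$-basis $b_1,\ldots,b_l$ of $K$; then each output coordinate $\trace(b_tc_i)$ is an $F$-linear combination of the downloaded traces and hence equals $\sum_{j\neq i}\trace(a_{t,j}c_j)$ for some $a_{t,j}\in W_j$, identically on $\cC$. By the trace correspondence above this yields a dual codeword whose $i$-th coordinate agrees (up to the MDS-forced scaling) with $b_t$, from which the first dimension condition is immediate. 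The second condition follows by counting: $\dim_F\Span\{c^\perp_{t,j}\}_t\leq\dim_FW_j$ for each $j$, summing gives at most the scheme's bandwidth $(n-1)l/r$, and the lower bound \eqref{eq:bandwidth} at $d=n-1$ forces equality throughout.

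The main obstacle will be turning the informal passage from ``$F$-linear combinations of traces on helpers'' to ``dual codewords of $\cC$'' into a fully rigorous chain; specifically, I need to verify that the $l$ recovery equations can be chosen to be $F$-linearly independent at coordinate $i$ without inflating the per-helper dimensions $\dim_FW_j$. This is essentially a bookkeeping argument about the image and kernel of the scheme's reconstruction map, but it has to be handled carefully because $\cC$ is $K$-linear while the scheme is only $F$-linear, so converting between $K$-scalars that appear in $\cC^\perp$ and the $F$-structures of the scheme requires the trace pairing to be applied nondegenerately in both directions.
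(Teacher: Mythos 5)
The paper does not prove this statement; it is imported verbatim from \cite{Guruswami16} as a known characterization, so there is no in-paper argument to compare against. Your sketch correctly reconstructs the standard Guruswami--Wootters proof: sufficiency by downloading a trace basis of each span $V_j$ and inverting the nondegenerate map $x\mapsto(\trace(c^\perp_{t,i}x))_t$, necessity by expressing the scheme's $F$-linear reconstruction as trace identities and then counting against the cut-set bound.

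The one step you flag as an ``obstacle'' deserves to be made explicit, because as written your reduction is slightly circular: what the scheme gives you is $\trace\bigl(b_t c_i-\sum_{j\neq i}a_{t,j}c_j\bigr)=0$ for all $c\in\cC$, i.e.\ membership in the \emph{trace-dual} of $\cC$ over $F$, whereas the theorem asserts membership in the $K$-linear dual $\cC^\perp$. The bridge is the Delsarte-type fact that these coincide for a $K$-linear code: if $\trace\bigl(\sum_j a_jc_j\bigr)=0$ for all $c\in\cC$, then replacing $c$ by $\lambda c$ for arbitrary $\lambda\in K$ and using nondegeneracy of $\trace_{K/F}$ forces $\sum_j a_jc_j=0$, hence $a\in\cC^\perp$. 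With that lemma inserted, the necessity direction closes cleanly: the $i$-th coordinates of the resulting dual codewords are the basis elements $b_t$ (giving the first dimension condition), the $j$-th coordinates lie in $W_j$ so $\sum_{j\neq i}\dim_F\Span\{c^\perp_{t,j}\}\le\sum_{j\neq i}\dim_F W_j=(n-1)l/r$, and the reverse inequality follows because the sufficiency direction turns these very codewords into a repair scheme subject to the bound \eqref{eq:bandwidth}. No other gaps.
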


We go on to define the main object of this section.

	\begin{definition}\label{def:crc1} Let $\cC$ be a scalar MDS code that has a linear repair scheme for repair of a single node with optimal 
bandwidth, based on dual codewords $c^{\perp}_{1},\ldots,c^{\perp}_{l}.$ The scheme is said to have constant repair subspaces if for every $i=1,\ldots,n$ and every $\cR\subset [n]\backslash\{i\},|\cR|=d$, the information downloaded from a helper node 
$c_j,j\in\cR$ to repair the failed node $c_i$ does not depend on the index $j$. Namely, the subspace
$\cS_j^{(i)}:=\Span_{F}(c^{\perp}_{1,j},\ldots,c^{\perp}_{l,j}),j\in \cR$ is independent of the index $j,$ i.e.,
$  \cS_{j}^{(i)}=\cS^{(i)}$    for some linear subspace $\cS^{(i)}\subseteq K.$
    \end{definition}

The notion of constant repair subspaces was mentioned earlier in the literature on general MSR codes, for instance, see \cite{Tamo14}. 
	
The algorithms below in this section rely on a proposition which we cite from \cite{Tamo18}.
\begin{proposition}
		\label{coro:scheme} Let $\cC$ be an $(n,n-r)$ MDS code and let $[n]=J\cup J^c,$ where $J,|J|=r$ is the set of parity coordinates. 
Let $H=(h_1,\dots,h_n)$ be a parity-check matrix of $\cC,$ where $h_i$ denote its columns.
The code $\cC$ has an optimal linear repair scheme over $F$ with repair degree $d=n-1$ if and only if for each 
$j\in J^c$ there exist $r$ vectors $a_u\in K^{l/r},u=1,\ldots,r$ such that 
		\begin{align}
		\dim_{F}(Ah_j)&=l,\label{eq:subspace-failed}\\
		\dim_{F}(Ah_i)&=\frac{l}{r},\quad i\in \{1,\ldots,n\}\setminus \{j\},\label{eq:subspace-helper1}
		\end{align}
where $A:=\mathrm{Diag}(a_1,\ldots,a_r)$ is an $l\times r$ block-diagonal matrix
with blocks formed by single columns.
		Furthermore for every subspace $\cA_u=\Span_{F}(a_u), u=1,\dots, r$ (the $F$-linear span of the entries of $a_u$) we have 
		\begin{align}
		\dim_{F}(\cA_u) = \frac{l}{r}.\label{eq:subspace-dim}
		\end{align}
	\end{proposition}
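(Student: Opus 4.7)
The plan is to identify every dual codeword with its coefficient vector $b\in K^r$ under the $F$-linear isomorphism $b\mapsto b^T H$, so that the $i$-th coordinate of a dual codeword is exactly $b^T h_i$. The $l\times r$ block-diagonal matrix $A=\mathrm{Diag}(a_1,\dots,a_r)$ then packages $l$ coefficient vectors of the special form $b_{u,s}=a_{u,s}e_u$ for $u\in[r]$, $s\in[l/r]$, and the resulting $l$ dual codewords have $i$-th coordinate block equal to $a_{u,s}h_{i,u}$, i.e., precisely the entries of $Ah_i$. Under this dictionary the three dimension conditions \eqref{eq:subspace-failed}--\eqref{eq:subspace-dim} translate term-for-term into the hypotheses of Theorem \ref{thm:scheme}, and the proposition reduces to showing that one may always normalize an optimal repair scheme to this block-diagonal form.

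The reverse direction is a direct verification: from the $a_u$'s one builds the $l$ dual codewords as above; the failed-node condition $\dim_F(c^\perp_{\cdot,j})=l$ is \eqref{eq:subspace-failed}; for each parity helper $i\in J$ with $h_i=e_u$, $\dim_F(Ah_i)=\dim_F(\cA_u)=l/r$ via \eqref{eq:subspace-dim}; and for each systematic helper $i\in J^c\setminus\{j\}$, \eqref{eq:subspace-helper1} gives $l/r$. Summing yields aggregate bandwidth $(n-1)l/r$, saturating the cut-set bound \eqref{eq:bandwidth}, so Theorem \ref{thm:scheme} certifies the scheme. For the forward direction, let $V\subseteq\cC^\perp$ be the $F$-span of the $l$ dual codewords supplied by Theorem \ref{thm:scheme}, so $\dim_F V=l$. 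Since $J$ indexes parity positions, $\{h_i:i\in J\}$ is the standard basis of $K^r$, and the joint parity projection $\pi_J\colon V\to K^r$, $v\mapsto(v_i)_{i\in J}$, is $F$-linear and injective (a dual codeword vanishing on $J$ forces $b=0$). Setting $\cA_u$ to be the image of $V$ under the single coordinate projection $\pi_{k+u}$, I would then argue that $\pi_J(V)$ coincides with the internal direct sum $\cA_1\oplus\cdots\oplus\cA_r$ and that $\dim_F\cA_u=l/r$ for every $u$. Any $F$-basis of this direct sum has the block-concentrated form $a_{u,s}e_u$; lifting it back through $\pi_J^{-1}$ produces a new $F$-basis of $V$ consisting of $l$ dual codewords of the required block-diagonal shape, from which the matrix $A$ and the three stated conditions are read off.

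The main obstacle lies in the forward direction, specifically in justifying the splitting $\pi_J(V)=\cA_1\oplus\cdots\oplus\cA_r$. By injectivity of $\pi_J$ this amounts to the identity $\sum_u\dim_F\cA_u=l$, which in turn requires the individual upper bound $\dim_F\pi_i(V)\le l/r$ for every helper $i\ne j$. This per-coordinate bound is stronger than the aggregate bandwidth statement of Theorem \ref{thm:scheme}, so I would derive it by applying the cut-set bound \eqref{eq:bandwidth} to nested sub-collections of helpers of sizes below $d=n-1$, together with the MDS property of $\cC$ (every $r$ columns of $H$ being linearly independent, so that contributions from disjoint coordinate subsets behave independently). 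This per-helper bound is the only place where the full MDS structure beyond the dual-code parameterization $b\mapsto b^T H$ is needed.
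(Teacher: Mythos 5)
Your argument is being measured against a proposition that the paper itself does not prove --- it is quoted from \cite{Tamo18} --- so it has to stand on its own. The overall architecture is the right one. The dictionary between coefficient vectors $b\in K^r$ and dual codewords $b^TH$, the observation that the block-diagonal matrix $A$ encodes $l$ dual codewords with coefficient vectors $a_{u,s}e_u$ whose $i$-th coordinates are exactly the entries of $Ah_i$, and the entire reverse direction (including the remark that \eqref{eq:subspace-dim} is just \eqref{eq:subspace-helper1} read off at the parity coordinates) are correct. In the forward direction, the injectivity of $\pi_J$ on the span $V$ of the repair codewords, the fact that the subspaces $W_m:=\Span_F(c^{\perp}_{1,m},\dots,c^{\perp}_{l,m})$ depend only on $V$ and not on the chosen basis, and the lifting of a block-concentrated basis of $\cA_1\times\cdots\times\cA_r$ back through $\pi_J^{-1}$ all work. (One slip: ``any $F$-basis of this direct sum has the block-concentrated form'' should be ``some''; you must choose it as a union of bases of the $\cA_u$.)

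The gap is exactly where you locate it --- the per-coordinate bound $\dim_F W_m=l/r$ for every helper $m$ --- but the tool you propose will not close it. The cut-set bound \eqref{eq:bandwidth} for repair degrees $d'<n-1$ is not applicable: the given scheme repairs from all $n-1$ helpers and cannot be restricted to a sub-collection, so no bandwidth inequality for smaller $d'$ constrains the individual $\dim_F W_m$. The correct ingredient is the one you half-name, the invertibility of every $r\times r$ submatrix of $H$: for any $r$-subset $T\subseteq[n]\setminus\{j\}$ the restriction map $c^{\perp}\mapsto c^{\perp}_T$ is a $K$-linear bijection on $\cC^{\perp}$, so $c^{\perp}_{t,j}$ is an $F$-linear function of $(c^{\perp}_{t,m})_{m\in T}$ and hence $l=\dim_F W_j\le\sum_{m\in T}\dim_F W_m$ for every such $T$. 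Summing this over all $\binom{n-1}{r}$ subsets $T$ gives $\binom{n-2}{r-1}\sum_{m\ne j}\dim_F W_m\ge\binom{n-1}{r}\,l$, and the aggregate equality $\sum_{m\ne j}\dim_F W_m=(n-1)l/r$ from Theorem~\ref{thm:scheme} forces every one of these inequalities to be tight; comparing two $r$-subsets differing in a single element (possible since $r<n-1$ when $k\ge 2$) then yields $\dim_F W_{m_1}=\dim_F W_{m_2}$ for all helpers, hence $\dim_F W_m=l/r$ for each. With that in hand, $\sum_u\dim_F\cA_u=l=\dim_F\pi_J(V)$ forces $\pi_J(V)=\cA_1\times\cdots\times\cA_r$ and the rest of your forward direction goes through.
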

	
	\begin{remark}
The matrix $A$ in Proposition~\ref{coro:scheme} depends on the matrix $H$ and the choice of $J$, but we 
suppress this dependence from the notation for simplicity.
	\end{remark}
	
Before presenting the algorithms for finding a basis for optimal-access repair we briefly digress to state some conditions for an optimal 
linear repair scheme to have constant repair subspaces. First, we rephrase their definition based 
Proposition~\ref{coro:scheme}. 
	
	\begin{definition}\label{def:crs}
		An optimal linear repair scheme for the code $\cC$ is said to have constant repair subspaces if for every $j=1,\ldots,n$ there exists a vector $h\in K^r$ such that 
		$$\Span_F(Ah_i)=\Span_F(Ah)$$ 
for every $i\in\{1,\ldots,n\}\setminus\{j\}$. Here the matrix $A$ is as in Proposition~\ref{coro:scheme}, and it depends on $H$ and the 
particular choice of the information coordinates.
	\end{definition}
	
	\begin{proposition}
		\label{prop:constant-sufficient}
Suppose that $\cA_1=\cA_2=\cdots=\cA_r$ for each $i=1,\ldots,n,$ and that for every $j\in\{1,\ldots,n\}\setminus\{i\}$ there exists $v\in \{1,\ldots,r\}$ such that $h_{v,j}\in F$, then there exists an optimal linear repair scheme for the code $\cC$ which has constant repair subspaces.
	\end{proposition}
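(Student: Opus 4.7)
The plan is to verify Definition~\ref{def:crs} directly for the scheme already guaranteed by Proposition~\ref{coro:scheme}. Fix a failed node index $i$, let $A$ and $a_1,\dots,a_r\in K^{l/r}$ be the objects supplied by Proposition~\ref{coro:scheme} for the repair of node $i$, and write $\cA_u=\Span_F(a_u)$. For every helper $j\neq i$, the $F$-subspace of $K$ downloaded from node $j$ decomposes as
\[
\Span_F(Ah_j)\;=\;\sum_{u=1}^{r} h_{u,j}\,\cA_u ,
\]
so the task reduces to showing that this sum is independent of $j$.

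Next I would use the first hypothesis to write $\cA^{(i)}:=\cA_1=\cdots=\cA_r$, collapsing the expression to $\sum_{u} h_{u,j}\,\cA^{(i)}$. By \eqref{eq:subspace-helper1} this sum has $F$-dimension $l/r$, while by \eqref{eq:subspace-dim} each nonzero summand $h_{u,j}\,\cA^{(i)}$ already has $F$-dimension $l/r$. Two $F$-subspaces $U,V$ of the same dimension $d$ whose sum also has dimension $d$ must coincide; applied here, this forces every nonzero summand $h_{u,j}\,\cA^{(i)}$ to equal a single subspace $\cS^{(i,j)}\subseteq K$, and hence $\Span_F(Ah_j)=\cS^{(i,j)}$.

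Finally, I would invoke the second hypothesis to remove the dependence of $\cS^{(i,j)}$ on $j$. Pick an index $v$ with $h_{v,j}\in F$; this element must be nonzero for the assertion to be substantive, and in any case the MDS property rules out the entire column $h_j$ vanishing, so one may focus on a nonzero $F$-rational entry. Since $\cA^{(i)}$ is an $F$-subspace of $K$, multiplication by any element of $F^{\ast}$ fixes it setwise, giving $h_{v,j}\,\cA^{(i)}=\cA^{(i)}$ and therefore $\cS^{(i,j)}=\cA^{(i)}$. Choosing $h=(1,0,\dots,0)^T\in K^r$ yields $\Span_F(Ah)=\cA_1=\cA^{(i)}$, which matches Definition~\ref{def:crs} and certifies constant repair subspaces. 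The only conceptual obstacle is reading the hypothesis ``$h_{v,j}\in F$'' as ``$h_{v,j}\in F^{\ast}$''; with that convention the remainder is a brief dimension count combined with the observation that $F^{\ast}$-scalings preserve $F$-subspaces.
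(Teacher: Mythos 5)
Your proposal is correct and follows essentially the same route as the paper: decompose $\Span_F(Ah_j)$ as the sum $\sum_u h_{u,j}\cA_u$, use the equality of the $\cA_u$ together with the dimension counts \eqref{eq:subspace-helper1} and \eqref{eq:subspace-dim} to collapse the nonzero summands onto a single subspace, and then invoke the $F$-rationality hypothesis (reading it, as you note, in the substantive sense $h_{v,j}\in F^\ast$, which is forced by the systematic/MDS structure of $H$) to identify that subspace with $\cV$. The only cosmetic difference is that the paper exhibits the vector $h$ in Definition~\ref{def:crs} as the all-ones vector $\mathbf{1}$ while you take a standard basis vector; both give $\Span_F(Ah)=\cV$.
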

	
	\begin{proof}
		Let $\cV$ denote any of the (coinciding) repair subspaces. By Proposition~\ref{coro:scheme}, we have $\dim_{F}(\cV )=l/r$. 
Suppose that $J$ is the subset of parity coordinates, and the matrix $H$ is represented in systematic form. In this case, 
for every $j\in J^c,$ $h_{u,j}\neq 0$ for all $u=1\ldots,r,$ and we have $\dim_{F}(\cV h_{u,j})=l/r$. Note that 
		\begin{align}
		\Span_F(Ah_j)=\sum_{u=1}^{r}\cA_uh_{u,j}=\sum_{u=1}^{r}\cV h_{u,j},\quad j\in\{1,\ldots,n\}\setminus\{i\},\label{eq:subspace-sum}
		\end{align}
where the sum on the right is a sum of linear spaces.
		By Proposition~\ref{coro:scheme}, we also have $l/r=\dim_{F}(Ah_j)=\dim_{F}(\sum_{u=1}^{r}\cV h_{u,j})$. Therefore,
		\begin{align}
		\cV h_{1,j} = \cV h_{2,j} = \cdots = \cV h_{r,j},\quad j\in J^c\setminus\{i\}.\label{eq:collapse}
		\end{align}
		Since for each $j\ne i$ there exists $v\in \{1,\ldots,r\}$ such that $h_{v,j}\in F,$ it follows that $\cV h_{v,j}=\cV $. On account of \eqref{eq:subspace-sum} and \eqref{eq:collapse}, we have $\Span_F(Ah_j)=\cV =\Span_F(A\cdot\mathbf{1})$ for every $j\in\{1,\ldots,n\}\setminus\{i\}$, where $\mathbf{1}$ is the all-ones column vector of length $r$. By Definition \ref{def:crs} this completes the proof.
	\end{proof}
The assumptions of this proposition are satisfied, for instance, for the RS subfamily of \cite{Tamo18}, which therefore have
constant repair subspaces (this observation was previously not stated in published literature).
	
	\begin{proposition}
		\label{prop:constant-necessary}
		If there exists an optimal linear repair scheme for the code $\cC$ which has constant repair subspaces, then $\cA_1=\cA_2=\cdots=\cA_r$ for every $j=1,\ldots,n$.
	\end{proposition}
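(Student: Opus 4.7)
The plan is to leverage the MDS property of $\cC$ to position the failed node outside the parity set, and then read off equality of the $\cA_u$ directly from evaluating the constant repair subspace condition at the $r$ standard-basis columns of the systematic parity-check matrix.

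Fix any failed node $j \in \{1,\ldots,n\}$. Since $\cC$ is an $(n,n-r)$ MDS code, any $r$ coordinates may be taken as parity; in particular, choose an $r$-subset $J \subseteq [n]\setminus\{j\}$, so that $j\in J^c$. Representing the parity-check matrix $H=(h_1,\ldots,h_n)$ in systematic form with respect to this choice of $J$, each column $h_i$ with $i \in J$ is a standard basis vector $e_{v(i)} \in K^r$, where the map $i \mapsto v(i)$ is a bijection from $J$ onto $\{1,\ldots,r\}$. By the block-diagonal structure of $A = \Diag(a_1,\ldots,a_r)$ from Proposition~\ref{coro:scheme}, we then have $Ah_i = a_{v(i)}$, and therefore
\begin{equation*}
\Span_F(Ah_i) \;=\; \Span_F(a_{v(i)}) \;=\; \cA_{v(i)} \qquad \text{for every } i \in J.
\end{equation*}

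Now invoke Definition~\ref{def:crs} for the failed node $j$: there exists a vector $h \in K^r$ such that $\Span_F(Ah_i) = \Span_F(Ah)$ for all $i \in [n]\setminus\{j\}$. Specializing this to $i \in J \subseteq [n]\setminus\{j\}$ and letting $v(i)$ range over all of $\{1,\ldots,r\}$, we conclude that each $\cA_v$ coincides with the common subspace $\Span_F(Ah)$. Hence $\cA_1 = \cA_2 = \cdots = \cA_r$, and since $j$ was arbitrary, the conclusion holds for every failed node. The only point deserving attention is that the MDS property lets us place an arbitrary index $j$ in $J^c$; once this is observed, the result follows immediately from probing the constant repair subspaces hypothesis at the $r$ parity columns of $H$, so there is no substantive obstacle to the argument.
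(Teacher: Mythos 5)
Your proof is correct and essentially identical to the paper's: both observe that with $H_J$ the identity, each parity column $i\in J$ gives $\Span_F(Ah_i)=\cA_{v(i)}$, and since the failed node lies in $J^c$ the constant-repair-subspaces hypothesis applies to all $r$ of these columns, forcing $\cA_1=\cdots=\cA_r$. Your version is a bit more explicit about invoking the MDS property to place the failed node outside $J$, a step the paper leaves implicit.
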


	\begin{proof}
		Indeed, since $H_J$ is the identity, for $j\in J$ we have $\Span_F(Ah_j)=\cA_t$ for some $t\in\{1,\ldots,r\}$. It follows that $\cA_1=\cA_2=\cdots=\cA_r$.
	\end{proof}
	
\subsection{Optimal access for the case of constant repair subspaces}\label{sec:crs}
The codes constructed in Sec.~\ref{sec:construction} above form essentially the only known example of RS codes that afford repair
with optimal access. For instance, the optimal-repair RS codes in \cite{Tamo18} are not known to support optimal access, and the repair scheme in
\cite{Tamo18} is far from having this property. Prior works on the problem of access cost for RS repair \cite{DauDuursmaChu18,DauViterbo18,LiDauWang19} also do not give examples of repair schemes with optimal access.
In this section we show that any family of scalar MDS codes with optimal repair can be furnished with a repair scheme with optimal access, and this includes the code family in \cite{Tamo18}. Unfortunately, our results are not explicit; rather, we present an algorithm that produces 
a basis for representing nodes of the codeword that supports optimal-access repair.
	
	As in Sec.~\ref{subsec:crs}, let $F$ be a subfield of $K$ such that $[K:F]=l$.
Let $\cC$ be an $(n,k=n-r)$ linear scalar MDS code of length $n$ over $K$ equipped with a repair scheme over $F$ that attains the bound
\eqref{eq:bandwidth} for repair of a single node. Let us represent $\cC$ in systematic form, choosing a subset $J\subseteq\{1,\ldots,n\},|J|=r$
for the parity symbols and $J^c$ for the data symbols. Let $H$ be an $r\times n$ parity-check matrix for $\cC$ such that $H_J$ is the $r\times r$ identity matrix, 

In this section we assume that there exists an optimal repair scheme over $F$ for $\cC$ that
has constant repair subspaces, and that the repair degree is $d=n-1.$ We will lift both assumptions and show that our result holds 
in general in the next section. For a given $j=1,\dots, n$ consider the subspaces $\cA_i, i=1,\dots,r$ defined in Proposition~\ref{coro:scheme}. Under the assumption of constant repair subspaces, they coincide, and we use the notation $\cV_j$ to refer to any of them.

Consider the following procedure (Algorithm~\ref{alg:basis}) that interatively collects vectors to form a basis of $K/F$ that supports optimal-access 
repair. 
	\begin{algorithm}
		\KwIn{Subspaces $\cV_1,\ldots,\cV_n$.}
		\KwOut{A basis $B$ for $K$ over $F$.}
		\For{$j\leftarrow 1$ \KwTo $n$}{
			$B_j\leftarrow \emptyset$\;
			$\cB_j\leftarrow \{0\}$\;
		}
		\For{$i\leftarrow 0$ \KwTo $n-1$}{
			\ForEach{$I\subseteq \{1,\ldots,n\}$ such that $|I|=i$}
			{
					$\bar{I}\leftarrow \{1,\ldots,n\}\setminus I$\;
					$\cU_I\leftarrow\bigcap_{j\in\bar{I}}\cV_j$\;
					\For{$j\leftarrow 1$ \KwTo $n$}
					{
						\If{$j\in \bar{I}$}
						{
							$\cB_j\leftarrow \cB_j+\cU_I$\;
							Extend the set $B_j$ to a basis of $\cB_j$ over $F$\;
						}
					}
			}
		}
		$\bar{B}\leftarrow \bigcup_{j=1}^{n}B_j$\;
		Extend the set $\bar{B}$ to a basis $B$ of $K$ over $F$\;
		\caption{Construction of an optimal basis}
		\label{alg:basis}
	\end{algorithm}

	\begin{proposition}
		\label{prop:bj}
		Upon completion of Algorithm~\ref{alg:basis} we have $\cB_j=\cV_j$ for $j=1,\ldots,n$, and thus $B_j$ is a basis for $\cV_j$ over $F$.
	\end{proposition}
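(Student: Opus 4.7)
The plan is to establish the equality by a short double inclusion, reading off the final value of $\cB_j$ directly from the algorithm's update rule. At termination, $\cB_j$ equals the sum of all subspaces $\cU_I$ that were added to it during the execution. Inspection of the conditional ``if $j\in\bar{I}$'' shows that $\cU_I$ contributes to $\cB_j$ exactly when $j\in\bar{I}$, i.e.\ when $j\notin I$. Thus at the end of the algorithm
\[
\cB_j \;=\; \sum_{\substack{I\subseteq\{1,\ldots,n\}\\ j\notin I}} \cU_I \;=\; \sum_{\substack{I\subseteq\{1,\ldots,n\}\\ j\notin I}} \bigcap_{j'\in\bar{I}} \cV_{j'}.
\]

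For the inclusion $\cB_j\subseteq\cV_j$ I would observe that whenever $j\notin I$, the index $j$ belongs to $\bar{I}$, so $\cV_j$ is one of the subspaces appearing in the intersection defining $\cU_I$; hence $\cU_I\subseteq\cV_j$ for every term in the sum above, and therefore $\cB_j\subseteq\cV_j$.

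For the reverse inclusion $\cV_j\subseteq\cB_j$ I would exhibit a single term that already equals $\cV_j$: take $I=\{1,\ldots,n\}\setminus\{j\}$, which has size $n-1$ and is processed on the iteration $i=n-1$ of the outer loop. For this choice $\bar{I}=\{j\}$, so $\cU_I=\cV_j$, and since $j\in\bar{I}$ the algorithm executes $\cB_j\leftarrow \cB_j+\cV_j$; this forces $\cV_j\subseteq\cB_j$. Combining the two inclusions gives $\cB_j=\cV_j$. The secondary claim that $B_j$ is a basis of $\cV_j$ over $F$ then follows directly from the invariant maintained by the line ``Extend the set $B_j$ to a basis of $\cB_j$ over $F$'': at every moment $B_j$ is an $F$-basis of the current $\cB_j$, and at termination $\cB_j=\cV_j$.

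There is essentially no obstacle here; the only thing to check carefully is that the outer loop range $i=0,\ldots,n-1$ indeed allows singletons $\bar{I}=\{j\}$ to appear (it does, on the last iteration), which is precisely what secures the nontrivial inclusion $\cV_j\subseteq\cB_j$.
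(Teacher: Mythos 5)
Your proof is correct and follows essentially the same approach as the paper: a double-inclusion argument where $\cB_j\subseteq\cV_j$ follows because every $\cU_I$ added to $\cB_j$ has $j\in\bar I$ and hence lies in $\cV_j$, while $\cV_j\subseteq\cB_j$ follows from the single term $I=\{1,\ldots,n\}\setminus\{j\}$ (so $\bar I=\{j\}$, $\cU_I=\cV_j$). If anything, your direct exhibition of that term on the $i=n-1$ iteration is cleaner than the paper's somewhat awkward contradiction phrasing, but the substance is identical.
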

	
	\begin{proof}
		From Algorithm~\ref{alg:basis}, we have
		\begin{align}
			\cB_j=\sum_{i=0}^{n-1}\sum_{\substack{|I|=i,\\I\subseteq\{1,\ldots,n\}}}
			\mathbbm{1}_{\{j\in \bar{I}\}}
			\bigcap_{t\in \bar{I}}\cV_t,
		\end{align}
so clearly $\cB_j\subseteq \cV_j$. Suppose that $v\in\cV_j\backslash \cB_j$, then there exists a subset $\bar{I}\subset\{1,\dots,n\}$ with $1\leq |\bar{I}|\leq n$ such that $j\in\bar{I}$ and that
		\begin{align*}
			v\notin \bigcap_{t\in \bar{I}}\cV_t.
		\end{align*}
		However, $\cB_j\supseteq \bigcap_{t\in \bar{I}}\cV_t$ for every $\bar{I}$ with $1\leq |\bar{I}|\leq n$ such that $j\in\bar{I}$, which is a contradiction. Hence, $\cB_j=\cV_j$.
	\end{proof}

	\begin{proposition}
		\label{prop:b}
		Algorithm~\ref{alg:basis} returns a basis $B$ for $K$ over $F$.
	\end{proposition}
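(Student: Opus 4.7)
The plan is to observe that the final line of Algorithm~\ref{alg:basis} literally extends $\bar{B}$ to a basis $B$ of $K$ over $F$, so the content of the proposition reduces to showing that $\bar{B}=\bigcup_{j=1}^{n}B_j$ is a linearly independent subset of $K$ over $F$ by the time the main loops finish. Once this is verified, extending $\bar{B}$ to a basis of the finite-dimensional space $K/F$ is standard linear algebra.

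I would prove linear independence by induction on the iteration, maintaining the following invariant at every step: (i) $\bar{B}$ is linearly independent over $F$, and (ii) $B_j$ is a basis of $\cB_j$ for each $j=1,\ldots,n$ (which is the content of Proposition~\ref{prop:bj}). The base case $B_j=\emptyset$, $\cB_j=\{0\}$ is trivial. For the inductive step, suppose we are processing a subset $I\subseteq\{1,\ldots,n\}$ with $|I|=i$, so that we must enlarge each $\cB_j$ with $j\in\bar{I}$ by $\cU_I=\bigcap_{t\in\bar{I}}\cV_t$. The key is to make the extension of the $B_j$'s coordinated across $j\in\bar{I}$: choose once and for all a set $W_I\subseteq\cU_I$ whose image in $\cU_I/(\cU_I\cap\Span_F(\bar{B}))$ is a basis, and then update $B_j\leftarrow B_j\cup(\bar{B}\cap\cU_I\setminus B_j)\cup W_I$ for every $j\in\bar{I}$.

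Two things must be checked. First, with this choice $B_j$ is indeed a basis of the updated $\cB_j=\cB_j^{\mathrm{old}}+\cU_I$: the reused vectors from $\bar{B}\cap\cU_I$ together with $W_I$ span $\cU_I$ by construction, and together with the old basis $B_j^{\mathrm{old}}$ of $\cB_j^{\mathrm{old}}$ they span $\cB_j^{\mathrm{old}}+\cU_I$; linear independence of $B_j$ inside $\cV_j$ follows because any relation would give a relation inside $\bar{B}$, contradicting the invariant. Second, $\bar{B}$ remains linearly independent: the only genuinely new additions come from $W_I$, and these were chosen outside $\Span_F(\bar{B})$, while the reused vectors contribute no new elements to the union.

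The main obstacle is just this coordination: a naive independent extension of each $B_j$ would create uncontrolled linear dependencies in $\bar{B}$, because the subspaces $\cV_j$ generally overlap in nontrivial ways through the intersections $\cU_I$. Processing the subsets $I$ in order of increasing size $i$ is what makes the coordination consistent, since by the time we handle $I$, every smaller intersection $\cU_{I'}\supseteq\cU_I$ with $\bar{I}'\supseteq\bar{I}$ has already contributed its basis to $\bar{B}$, so $\cU_I\cap\Span_F(\bar{B})$ already captures all previously forced coincidences. After all $2^n-1$ subsets $I$ are processed, the invariant yields a linearly independent $\bar{B}$, and the final line of the algorithm extends it to a basis $B$ of $K$ over $F$, proving the proposition.
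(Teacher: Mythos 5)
Your strategy is genuinely different from the paper's: the paper argues \emph{a posteriori} that $\bigcap_{t\in\bar{I}}B_t$ is a basis of $\bigcap_{t\in\bar{I}}\cV_t$ for every $\bar{I}$, and then turns set inclusion-exclusion against the dimension formula to conclude that $\bigcup_j B_j$ is a basis of $\sum_j\cV_j$; you instead run an in-process induction that maintains linear independence of $\bar{B}$ at every step. You also correctly flag that the algorithm's phrase ``extend the set $B_j$ to a basis of $\cB_j$'' hides a coordination requirement among the different $j\in\bar{I}$, without which $\bar{B}$ will in general pick up dependencies. Recognizing that the extensions must be coordinated is a real contribution.

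However, the specific coordination rule you propose has a gap. You set $B_j\leftarrow B_j\cup(\bar{B}\cap\cU_I\setminus B_j)\cup W_I$ where $W_I$ lifts a basis of $\cU_I/(\cU_I\cap\Span_F(\bar{B}))$, and then claim that ``the reused vectors from $\bar{B}\cap\cU_I$ together with $W_I$ span $\cU_I$ by construction.'' That is false in general: $\bar{B}\cap\cU_I$ is the set of \emph{individual} vectors of $\bar{B}$ that happen to lie in $\cU_I$, and this set need not span $\cU_I\cap\Span_F(\bar{B})$, because a linear combination of vectors of $\bar{B}$ can land in $\cU_I$ even when none of its summands does. Concretely, if $\bar{B}\supseteq\{e_1,e_3\}$ and $\cU_I=\Span_F(e_1+e_3)$, then $\bar{B}\cap\cU_I=\emptyset$ while $\cU_I\subseteq\Span_F(\bar{B})$, so $W_I=\emptyset$; the updated $B_j$ then fails to span $\cB_j^{\mathrm{old}}+\cU_I$, breaking your invariant (ii). The appeal to processing subsets $I$ in order of increasing size does not close this gap --- note also a sign slip there, since $\bar{I}'\supseteq\bar{I}$ gives $\cU_{I'}\subseteq\cU_I$, not $\supseteq$ --- because that ordering only guarantees that the subspaces $\cU_{I'}$ with $I'\subsetneq I$, which sit \emph{inside} $\cU_I$, have already been injected into $\Span_F(\bar{B})$; it does nothing to prevent ``cross-term'' elements of $\cU_I\cap\Span_F(\bar{B})$ arising from subspaces $\cU_{I''}$ not contained in $\cU_I$. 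To make an argument along your lines go through one must establish a distributivity-type compatibility for the lattice generated by $\cV_1,\ldots,\cV_n$ (or exploit additional structure of the repair subspaces) so that $\cU_I\cap\Span_F(\bar{B})$ really is spanned by $\bar{B}\cap\cU_I$ at the moment $I$ is processed; this is the missing ingredient.
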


	\begin{proof}
		From Algorithm~\ref{alg:basis}, for every $\bar{I}\subseteq \{1,\ldots,n\}$ with $1\leq |\bar{I}|\leq n$
and for every $j\in\bar I$, the set $B_j$ contains a basis of the subspace $\cU_I=\bigcap_{t\in\bar{I}}\cV_t.$ It follows that for every $\bar{I}\subseteq \{1,\ldots,n\}$ with $1\leq |\bar{I}|\leq n$, the set $\bigcap_{t\in\bar{I}}B_t$ is a basis for $\bigcap_{t\in\bar{I}}\cV_t$.
		
		Now by Proposition~\ref{prop:bj}, $B_1,B_2$ are bases for $\cV_1,\cV_2$ over $F$, respectively. From the above, we have $B_1\cap B_2$ is a basis of $\cV_1\cap\cV_2$ over $F$. It follows that $\dim_{F}(\cV_1\cap\cV_2)=|B_1\cap B_2|$. Then
		\begin{align*}
			\dim_{F}(\cV_1+\cV_2) &= \dim_{F}(\cV_1)+\dim_{F}(\cV_2)-\dim_{F}(\cV_1\cap\cV_2)\\
			&=|B_1|+|B_2|-|B_1\cap B_2|\\
			&=|B_1\cup B_2|.
		\end{align*}
By definition, $\Span_F(B_1\cup B_2)=\cV_1+\cV_2$, and so the set $B_1\cup B_2$ is a basis of $\cV_1+\cV_2$ over $F$. By a straightforward induction argument, we conclude that $\bigcup_{j=1}^{n}B_j$ is a basis for $\sum_{j=1}^{n}\cV_j$ over $F$.
		
		Since $\sum_{j=1}^{n}\cV_j\subseteq K$, we have $|\bigcup_{j=1}^{n}B_j|\leq [K:F]=l$. It follows that we can extend the set $\bar{B}=\bigcup_{j=1}^{n}B_j$ to a basis $B$ of $K$ over $F$.
	\end{proof}
	
Now we are ready to present a repair scheme for the code $\cC$ with the optimal access property. Let $B=(b_m)$ be the basis of $K$ over $F$
constructed above and let $B^{\ast}=(b_m^{\ast})$ be its dual basis. Given a codeword $c=(c_1,\ldots,c_n)\in\cC,$ we expand
its coordinates in the basis $B^{\ast},$ writing
	\begin{align}
		c_i=\sum_{m=0}^{l-1}c_{i,m}b^{\ast}_m.
	\end{align}
Suppose that $c_i$ is the erased coordinate of $c$ (the ``failed node''). The starting point, as above, is Eq.~\eqref{eq:dual}, and 
our first step is to choose $l$ dual codewords $c^{\bot}_t, t=1,\dots,l$ that support the repair. Construct the $l\times n$ 
matrix $C^{\bot}=AH$ and take the rows of $C$ to be the needed codewords $c^{\bot}_t.$
Since $c_t^\bot \cdot c=0$ for all $t$, we have $c_{t,i}^\bot c_i=-\sum_{\begin{substack}{j=1\\j\ne i}\end{substack}}^n c_{t,j}^\bot c_j$
for all $t=1,\ldots,l.$ Computing the trace $\trace_{K/F}$, we obtain
	\begin{align}
		\trace_{K/F}(c^{\perp}_{t,i}c_i) & = -\sum_{j\neq i}^{n}\trace_{K/F}(c^{\perp}_{t,j}c_j)\nonumber\\
		& = -\sum_{j\neq i}^{n}\trace_{K/F}(c^{\perp}_{t,j}\sum_{m=0}^{l-1}c_{j,m}b^{\ast}_m)\nonumber\\
		& = -\sum_{j\neq i}^{n}\sum_{m=0}^{l-1}\trace_{K/F}(c^{\perp}_{t,j} b^{\ast}_m)c_{j,m}.\label{eq:tr-constant-1}
	\end{align}
Note that for each $j\in\{1,\ldots,n\}\setminus\{i\}$, we have
	\begin{align}
		\Span_{F}(c^{\perp}_{1,j},\ldots,c^{\perp}_{l,j}) = \Span_{F}(Ah_j) =\cV_i,
	\end{align}
where the last equality follows by the assumption of constant repair subspaces. 
By Proposition~\ref{prop:bj}, the set $B_i\subseteq B$ is a basis for $\cV_i$ over $F$. Therefore, $c^{\perp}_{t,j}$ can be linearly generated by the set $B_i$ for every $t=1,\ldots,l$. More precisely, let $B_i=\{b_{i,u}\,|\, u=1,\ldots,l/r\},$ then we have
	\begin{align}
		c^{\perp}_{t,j} = \sum_{u=1}^{l/r}\gamma_{j,u} b_{i,u}\label{eq:bj}
	\end{align}
for some $\gamma_{j,u}, u=1,\dots,l/r.$	Substituting into \eqref{eq:tr-constant-1}, we obtain the equality
	\begin{align}
		\trace_{K/F}(c^{\perp}_{t,i}c_i) = -\sum_{j\neq i}^{n}\sum_{m=0}^{l-1}\sum_{u=1}^{l/r}\trace_{K/F}(b_{i,u} b^{\ast}_m)\gamma_{j,u} c_{j,m}.\label{eq:tr-constant-2}
	\end{align}
	It follows that to determine the left-hand side of \eqref{eq:tr-constant-2}, on each node $c_j,j\neq i$ the repair procedure
needs to access the set of symbols $\{ c_{j,m}\,|\, \trace_{K/F}(b_{i,u} b^{\ast}_m) = 1 \}.$ Since $B_i\subseteq B$ and $B^{\ast}$ is the dual basis of $B$ for $K$ over $F,$ the cardinality of this subset equals $|B_i|=l/r,$ verifying that the repair
can be accomplished with the minimum possible access cost.

\subsection{Optimal-access repair for general scalar MSR codes}\label{sec:grs}
In this section we extend the above arguments for optimal repair schemes that do not necessarily have constant repair subspaces. 
This is done by a simple extension of Algorithm~\ref{alg:basis}. We use the same notation as in Sec.~\ref{sec:crs}.
	
\vspace*{.1in}
\noindent\subsubsection{Repair degree $d=n-1$}
\begin{equation*}\end{equation*}

\vspace*{-.2in}
\noindent
Assume that the index of the failed node is $i\in\{1,\dots,n\}.$ By Proposition~\ref{coro:scheme}, for each $j\in\{1,\ldots,n\}\setminus\{i\}$, we have
	\begin{align*}
		\dim_{F}(\cA_u)=\dim_{F}(Ah_j)=\frac{l}{r},\quad u=1,\ldots,r.
	\end{align*}
It follows that for $j\in J^c\setminus\{i\}$ we have
	\begin{align*}
		\cA_1h_{j,1}=\cA_2h_{j,2}=\cdots=\cA_rh_{j,r}.
	\end{align*}
Let $J=(i_1,\dots,i_r)$ be the set of parity nodes written in increasing order of their indices, and for $i_t\in J$ let $\sigma(i_t)=t.$ Define
	\begin{align}\label{eq:sigma}
		\cV_i^{(j)}=
		\begin{cases}
		\cA_1h_{j,1} & j\in J^c\setminus\{i\},\\
		\cA_{\sigma(j)} & j\in J.
		\end{cases}
	\end{align}
	
	\begin{algorithm}
		\KwIn{Subspaces $\cV_i^{(j)},i\in\{1,\ldots,n\},j\in\{1,\ldots,n\}\setminus\{i\}$.}
		\KwOut{A basis $B$ for $K$ over $F$.}
		\For{$i\leftarrow 1$ \KwTo $n$}{
			\ForEach{$j\in\{1,\ldots,n\}\setminus\{i\}$}{
				$B_i^{(j)}\leftarrow \emptyset$\;
				$\cB_i^{(j)}\leftarrow \{0\}$\;
			}
		}
		$\Omega\leftarrow \{1,\ldots,n\}^2\setminus\{(i,i)\mid i=1,\ldots,n\}$\;
		\For{$u\leftarrow 0$ \KwTo $n^2-n-1$}{
			\ForEach{$I\subseteq \Omega$ such that $|I|=u$}{
				$\bar{I}\leftarrow \Omega\setminus I$\;
				$\cU_I\leftarrow\bigcap_{(i,j)\in\bar{I}}\cV_i^{(j)}$\;
				\For{$i\leftarrow 1$ \KwTo $n$}{
					\ForEach{$j\in\{1,\ldots,n\}\setminus\{i\}$}{
						\If{$(i,j)\in \bar{I}$}{
							$\cB_i^{(j)}\leftarrow \cB_i^{(j)}+\cU_I$\;
							Extend the set $B_i^{(j)}$ to be a basis of $\cB_i^{(j)}$ over $F$\;
						}
					}
				}
			}
		}
		$\bar{B}\leftarrow \bigcup_{i=1}^{n}\bigcup_{j\neq i}^{n}B_i^{(j)}$\;
		Extend the set $\bar{B}$ to be a basis $B$ for $K$ over $F$\;
		\caption{Construction of an optimal basis; repair degree $d=n-1$}
		\label{alg:basis-2}
	\end{algorithm}
	
	\begin{proposition}
		\label{prop:bj-2}
When Algorithm~\ref{alg:basis-2} terminates, we have $\cB_i^{(j)}=\cV_i^{(j)}$ for $i\in\{1,\ldots,n\}$ and 
$j\in\{1,\ldots,n\}\setminus\{i\}$, and thus $B_i^{(j)}$ is a basis for $\cV_i^{(j)}$ over $F$.
	\end{proposition}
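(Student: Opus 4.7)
The plan is to mirror the argument used in the proof of Proposition~\ref{prop:bj}, with the one-dimensional index set $\{1,\ldots,n\}$ replaced by the two-dimensional index set $\Omega=\{1,\ldots,n\}^2\setminus\{(i,i)\mid i=1,\ldots,n\}$. First, I would unroll the outer loop of Algorithm~\ref{alg:basis-2} to obtain the compact expression
\begin{align*}
\cB_i^{(j)} \;=\; \sum_{u=0}^{n^2-n-1}\sum_{\substack{I\subseteq\Omega\\ |I|=u}} \mathbbm{1}_{\{(i,j)\in\bar I\}}\;\bigcap_{(i',j')\in\bar I}\cV_{i'}^{(j')},
\end{align*}
so that each nonzero summand corresponds to a nonempty $\bar I\subseteq\Omega$ containing $(i,j)$.

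For the inclusion $\cB_i^{(j)}\subseteq\cV_i^{(j)}$, observe that each summand is an intersection that includes $\cV_i^{(j)}$ itself as one of the factors, so every summand lies inside $\cV_i^{(j)}$. For the reverse inclusion, note that the singleton choice $\bar I=\{(i,j)\}$ (which is attained at $u=n^2-n-1$, well within the loop range) contributes the subspace $\cU_I=\cV_i^{(j)}$ to $\cB_i^{(j)}$; hence $\cV_i^{(j)}\subseteq\cB_i^{(j)}$. This is precisely the contradiction argument employed in Proposition~\ref{prop:bj}: if some $v\in\cV_i^{(j)}\setminus\cB_i^{(j)}$ existed, it would also lie outside the summand $\bigcap_{(i',j')\in\{(i,j)\}}\cV_{i'}^{(j')}=\cV_i^{(j)}$, which is absurd. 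Combining both inclusions yields $\cB_i^{(j)}=\cV_i^{(j)}$.

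The second assertion, that $B_i^{(j)}$ is a basis of $\cV_i^{(j)}$ over $F$, follows immediately by induction on the iteration counter $u$: at each inner iteration the algorithm extends the current $B_i^{(j)}$ to a basis of the enlarged subspace $\cB_i^{(j)}$, so $B_i^{(j)}$ remains $F$-linearly independent and spans the current $\cB_i^{(j)}$. Upon termination, $\cB_i^{(j)}=\cV_i^{(j)}$ by the first part, so $B_i^{(j)}$ is the desired basis. I do not foresee any substantive obstacle; the argument is essentially combinatorial bookkeeping, and the only subtlety is confirming that the loop range $0\le u\le n^2-n-1$ covers the singleton $|\bar I|=1$ case, which is what secures the nontrivial inclusion.
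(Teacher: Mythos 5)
Your proof is correct and follows essentially the same route as the paper, which omits the argument and refers the reader to Proposition~\ref{prop:bj}; you have faithfully adapted that argument by replacing the index set $\{1,\ldots,n\}$ with $\Omega$ and correctly pinpointing that the singleton $\bar I=\{(i,j)\}$ (attained at $u=n^2-n-1$) forces $\cV_i^{(j)}\subseteq\cB_i^{(j)}$.
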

	
	\begin{proposition}
		\label{prop:b-2}
		Algorithm~\ref{alg:basis-2} returns a basis $B$ for $K$ over $F$.
	\end{proposition}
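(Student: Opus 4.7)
The plan is to follow the same blueprint as the proof of Proposition~\ref{prop:b}, replacing the single index $j$ with the pair $(i,j)$ ranging over $\Omega = \{1,\ldots,n\}^2 \setminus \{(t,t):t=1,\ldots,n\}$. By Proposition~\ref{prop:bj-2}, once Algorithm~\ref{alg:basis-2} terminates, each $B_i^{(j)}$ is a basis for $\cV_i^{(j)}$ over $F$. The crucial intermediate claim is that for every nonempty $\bar I \subseteq \Omega$, the intersection $\bigcap_{(i,j)\in \bar I} B_i^{(j)}$ is a basis for $\bigcap_{(i,j)\in \bar I} \cV_i^{(j)}$. This is forced by the structure of the algorithm: at the iteration where $I = \Omega \setminus \bar I$, the subspace $\cU_I = \bigcap_{(i,j)\in \bar I} \cV_i^{(j)}$ is appended to $\cB_i^{(j)}$ for every $(i,j) \in \bar I$. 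Adopting the convention that the basis extension of each $B_i^{(j)}$ reuses a single fixed basis of $\cU_I$ (the same for every $(i,j) \in \bar I$), the intersection of the final basis sets contains that common basis of $\cU_I$.

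Given this intermediate claim, the fact that $\bar B = \bigcup_{(i,j) \in \Omega} B_i^{(j)}$ is a basis for $\sum_{(i,j) \in \Omega} \cV_i^{(j)}$ follows by induction on the number of subspaces summed, using the dimension identity
\[
\dim_F(\cU + \cW) = \dim_F(\cU) + \dim_F(\cW) - \dim_F(\cU \cap \cW)
\]
exactly as in the proof of Proposition~\ref{prop:b}. The base case $|\bar I| = 2$ uses $|B_i^{(j)} \cap B_{i'}^{(j')}| = \dim_F(\cV_i^{(j)} \cap \cV_{i'}^{(j')})$, which is the intermediate claim, to conclude that $B_i^{(j)} \cup B_{i'}^{(j')}$ is a basis for $\cV_i^{(j)} + \cV_{i'}^{(j')}$. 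The inductive step is analogous: the intermediate claim guarantees that the intersection of basis sets coincides with the intersection of subspaces, so that the inclusion-exclusion count of dimensions matches the cardinality count of the union of basis sets. Consequently $|\bar B| = \dim_F\!\bigl(\sum_{(i,j)\in\Omega} \cV_i^{(j)}\bigr)$, hence $\bar B$ is linearly independent in $K$.

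Since $\sum_{(i,j)\in\Omega} \cV_i^{(j)} \subseteq K$, we have $|\bar B| \le [K:F] = l$, so $\bar B$ can be extended to a basis $B$ for $K$ over $F$, which is exactly what Algorithm~\ref{alg:basis-2} outputs. The main obstacle is the consistency convention flagged in the intermediate claim: the proof requires the algorithm, when extending $B_i^{(j)}$ after absorbing $\cU_I$ into $\cB_i^{(j)}$, to reuse a single basis of $\cU_I$ shared across all $(i,j) \in \bar I$, rather than choose arbitrary independent extensions. This requirement is implicit in the corresponding argument for Proposition~\ref{prop:b} and is the only substantive difference from routine linear-algebra bookkeeping.
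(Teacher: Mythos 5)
Your write-up mirrors the paper's proof of Proposition~\ref{prop:b}, with the single index $j$ replaced by the pair $(i,j)\in\Omega$; since the paper omits the proof of Proposition~\ref{prop:b-2} with the remark that it is analogous, you are following the intended route. Your observation about the consistency convention is a genuine catch: the algorithm as written does not force the same basis vectors of $\cU_I$ to be reused when extending the various $B_i^{(j)}$, $(i,j)\in\bar I$, and without such a convention the intersection $\bigcap_{(i,j)\in\bar I}B_i^{(j)}$ need not span $\cU_I$, which would void the intermediate claim.

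There is, however, a deeper difficulty that you inherit from the ``straightforward induction argument'' invoked in the proof of Proposition~\ref{prop:b} and do not resolve. Enumerate $\Omega=\{\omega_1,\dots,\omega_N\}$; the inductive step at stage $m$ requires
\[
\Big|\Big(\bigcup_{t<m}B_{\omega_t}\Big)\cap B_{\omega_m}\Big|=\dim_F\Big(\Big(\sum_{t<m}\cV_{\omega_t}\Big)\cap\cV_{\omega_m}\Big).
\]
Your intermediate claim controls only the intersections $\bigcap_{\omega\in\bar I}\cV_{\omega}$, from which one gets that the left-hand side equals $\dim_F\sum_{t<m}\big(\cV_{\omega_t}\cap\cV_{\omega_m}\big)$; but the subspace lattice is modular, not distributive, and $\sum_{t<m}\big(\cV_{\omega_t}\cap\cV_{\omega_m}\big)$ can be a proper subspace of $\big(\sum_{t<m}\cV_{\omega_t}\big)\cap\cV_{\omega_m}$. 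Concretely, $\cW_1=\Span(e_1,e_2)$, $\cW_2=\Span(e_3,e_4)$, $\cW_3=\Span(e_1+e_3,e_2+e_4)$ inside a four-dimensional space have pairwise (and triple) trivial intersections, so the obvious bases satisfy the intermediate claim; yet $|B_1\cup B_2\cup B_3|=6>4=\dim_F(\cW_1+\cW_2+\cW_3)$, the union $\bar B$ is linearly dependent, and the final basis-extension step is impossible. The proposition therefore rests on a distributivity property of the constant repair subspaces $\cV_i^{(j)}$ of an MSR code that neither the two-subspace dimension formula nor the intermediate claim supplies; this structural fact needs to be established separately, and your proof (like the paper's sketch for Proposition~\ref{prop:b}) does not do so.
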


The proofs of Propositions~\ref{prop:bj-2} and \ref{prop:b-2} follow closely the proofs of Proposition~\ref{prop:bj} and \ref{prop:b}
and will be omitted.

Now it is not difficult to see that we can repair the failed node $c_i$ with optimal access cost relying on the basis $B$.
 Indeed, for each $j\in\{1,\ldots,n\}\setminus\{i\}$, we have
	\begin{align}
		\Span_{F}(c^{\perp}_{1,j},\ldots,c^{\perp}_{l,j}) = \Span_{F}(Ah_j) =\cV_i^{(j)}.
	\end{align}
	By Algorithm~\ref{alg:basis-2} and Proposition~\ref{prop:bj-2}, the set $B_i^{(j)}\subseteq B$ is a basis for $\cV_i^{(j)}$ over $F$. Therefore, $c^{\perp}_{t,j}$ can be linearly generated by the set $B_i^{(j)}$ for every $t=1,\ldots,l$. Let $B_i^{(j)}=\{b_{i,u}^{(j)}\mid u=1,\ldots,l/r \}$. Then, similarly to \eqref{eq:bj} and \eqref{eq:tr-constant-2}, we have
	\begin{align}
		c^{\perp}_{t,j} &= \sum_{u=1}^{l/r}\gamma_{j,u} b_{i,u}^{(j)},\\
		\trace_{K/F}(c^{\perp}_{t,i}c_i) &= -\sum_{j\neq i}^{n}\sum_{m=0}^{l-1}\sum_{u=1}^{l/r}\trace_{K/F}(b_{i,u}^{(j)} b^{\ast}_m)\gamma_{j,u} c_{j,m}.
	\end{align}
	Therefore, each node $c_j,j\neq i$ needs to access the set of symbols $\{ c_{j,m}\mid \trace_{K/F}(b_{i,u}^{(j)} b^{\ast}_m) = 1 \}$, whose cardinality is given by $|B_i^{(j)}|=l/r$. It follows that the repair scheme has the optimal access property.
	
\vspace*{.1in}\subsubsection{Arbitrary repair degree}
\begin{equation*}\end{equation*}

\vspace*{-.2in}
	So far we assumed that the repair relies on all the surviving nodes except for the single failed node, i.e., $|\cR|=n-1.$
In this section we derive the most general version of the result of this section, that any scalar MDS code can be repaired
with optimal access from any subset of helper nodes $\cR$ of size $d, k+1\le d\le n-1.$
Let $s:=d-k+1$.
	
Let $G=[g_1 | g_2 | \ldots | g_n]$ be a $k\times n$ generator matrix of $\cC,$ where $g_i$ is a $k$-column over $K$. Let $i\in\{1,\dots, n\}$ and 
let $\cR\not\ni \{i\}$ be a subset of $d$ helper nodes. Let $\hat{\cR}=\cR\cup\{i\}$ and $G_{\hat{\cR}}$ be the $k\times (d+1)$ submatrix formed 
by the columns $g_j,j\in\hat{\cR}$. Clearly, $G_{\hat{\cR}}$ defines a $(d+1,k)$ punctured code $\cC_{\hat{\cR}}$ of the code $\cC$. 
Since $\cC$ is MDS, the code $\cC_{\hat{\cR}}$ is itself MDS. 
Let $H^{\hat{\cR}}=(h_i^{(\hat{\cR})}, i=1,\dots,d+1)$ be an the $s\times (d+1)$ parity-check matrix of the code $\cC_{\hat{\cR}}$. Recalling Remark~\ref{remark:sp},
the code generated by $H^{\hat\cR}$ is a shortened code $(\cC^{\bot})^{\hat{\cR}},$
i.e., a subcode of $\cC^\bot$ formed of the codewords with zeros in the coordinates in $\hat{\cR}^c$.
	
Suppose that the code $\cC$ can optimally repair any single failed node $i$ from the coordinates in $\cR=\hat \cR\backslash\{i\}.$ This means that the MDS 
code $\cC_{\hat{\cR}}$ can optimally repair any single failed node $i$ from the helper 
nodes $\hat{\cR}\setminus\{i\}$.  Let $J\subseteq \hat{\cR}, |J|=s$ and $i\notin J$ and assume without loss of generality that 
the submatrix $H^{\hat{\cR}}_J$ is an $s\times s$ identity matrix. 
Now Proposition~\ref{coro:scheme} applied for the code $\cC_{\hat{\cR}}$ guarantees that there exist vectors $a_u\in K^{l/s},u=1,\ldots,s$ such that the block-diagonal
matrix $A=\mathrm{Diag}(a_1,\ldots,a_s)$ satisfies
	\begin{align}
		\dim_{F}(Ah_i^{(\hat{\cR})})&=l,\label{eq:subspace-failed-s}\\
		\dim_{F}(Ah_j^{(\hat{\cR})})&=\frac{l}{s},\quad j\in \hat{\cR}\setminus\{i\},\label{eq:subspace-helper1-s}\\
		\dim_{F}(\cA_u)&=\frac{l}{s},\quad u=1,\ldots,s,\label{eq:subspace-dim-s}
	\end{align}
	where $\cA_u:=\Span_{F}(a_u)$.
	
	It follows from \eqref{eq:subspace-helper1-s} and \eqref{eq:subspace-dim-s} that for $j\in \hat{\cR}\setminus (J\cup\{i\}),$ we have
	\begin{align*}
	\cA_1h_{j,1}^{(\hat{\cR})}=\cA_2h_{j,2}^{(\hat{\cR})}=\cdots=\cA_sh_{j,s}^{(\hat{\cR})}.
	\end{align*}
	Let us define 
	\begin{align}
		\cV_{\hat{\cR},i}^{(j)}=
		\begin{cases}
		\cA_1h_{j,1}^{(\hat{\cR})} & j\in \hat{\cR}\setminus (J\cup\{i\}),\\
		\cA_{\sigma(i)} & j\in J,
	\end{cases}
	\end{align}
	where $\sigma$ is a bijection between $J$ and $\{1,\ldots,s\}$ defined as before \eqref{eq:sigma}.
	
The procedure to construct a basis for optimal-access repair in this case is constructed as a modification of Algorithm~\ref{alg:basis-2}, and is
given in Algorithm~\ref{alg:basis-3}.
	
	\begin{algorithm}
		\KwIn{Subspaces $\cV_{\hat{\cR},i}^{(j)}$ for each $\hat{\cR}\subseteq \{1,\ldots,n\}$ such that $|\hat{\cR}|=d+1$ and $i\in\hat{\cR},j\in\hat{\cR}\setminus\{i\}$.}
		\KwOut{A basis $B$ for $K$ over $F$.}
		\ForEach{$\hat{\cR}\subseteq \{1,\ldots,n\}$ such that $|\hat{\cR}|=d+1$}{
			\ForEach{$i\in\hat{\cR}$}{
				\ForEach{$j\in\hat{\cR}\setminus\{i\}$}{
					$B_{\hat{\cR},i}^{(j)}\leftarrow \emptyset$\;
					$\cB_{\hat{\cR},i}^{(j)}\leftarrow \{0\}$\;
				}
			}
		}
		$\Omega\leftarrow \{(\hat{\cR},i,j)\mid  \hat{\cR}\subseteq \{1,\ldots,n\}, i\in\hat{\cR},j\in\hat{\cR}\setminus\{i\}\}$\;
		\For{$u\leftarrow 0$ \KwTo $\binom{n}{d+1}((d+1)^2-(d+1))-1$}{
			\ForEach{$I\subseteq \Omega$ such that $|I|=u$}{
				$\bar{I}\leftarrow \Omega\setminus I$\;
				$\cU_I\leftarrow\bigcap_{(\hat{\cR},i,j)\in\bar{I}}\cV_i^{(j)}$\;
				\ForEach{$\hat{\cR}\subseteq \{1,\ldots,n\}$ such that $|\hat{\cR}|=d+1$}{
					\ForEach{$i\in\hat{\cR}$}{
						\ForEach{$j\in\hat{\cR}\setminus\{i\}$}{
							\If{$(\hat{\cR},i,j)\in \bar{I}$}{
								$\cB_{\hat{\cR},i}^{(j)}\leftarrow \cB_{\hat{\cR},i}^{(j)}+\cU_I$\;
								Extend the set $B_{\hat{\cR},i}^{(j)}$ to be a basis of $\cB_{\hat{\cR},i}^{(j)}$ over $F$\;
							}
						}
					}
				}
			}
		}
		$\bar{B}\leftarrow \bigcup_{\hat{\cR}\subseteq \{1,\ldots,n\},|
		           \hat{\cR}|=d+1}\bigcup_{i\in\hat{\cR}}\bigcup_{j\neq \hat{\cR}\setminus\{i\}}B_{\hat{\cR},i}^{(j)}$\;
		Extend the set $\bar{B}$ to be a basis $B$ of $K$ over $F$\;
		\caption{Construction of an optimal basis; arbitrary repair degree}
		\label{alg:basis-3}
	\end{algorithm}
	
	Similarly to the previous sections, we have the following propositions, whose proofs are analogous to the proofs of Propositions~\ref{prop:bj} and
	\ref{prop:b}.
	
	\begin{proposition}
		\label{prop:bj-3}
When Algorithm~\ref{alg:basis-3} terminates, we have $\cB_{\hat{\cR},i}^{(j)}=\cV_{\hat{\cR},i}^{(j)}$ 
for $\hat{\cR}\subseteq\{1,\ldots,n\}$ with $|\hat{\cR}|=d+1$, $i\in\hat{\cR}$, and $j\in\hat{\cR}\setminus\{i\}$, and thus $B_{\hat{\cR},i}^{(j)}$ is a basis of $\cV_{\hat{\cR},i}^{(j)}$ over $F$.
	\end{proposition}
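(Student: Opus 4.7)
The plan is to adapt the argument used for Proposition~\ref{prop:bj} essentially verbatim, with the only change being bookkeeping to account for the richer indexing scheme. The inclusion $\cB_{\hat{\cR},i}^{(j)}\subseteq \cV_{\hat{\cR},i}^{(j)}$ is built into the construction: every update of the form $\cB_{\hat{\cR},i}^{(j)}\leftarrow \cB_{\hat{\cR},i}^{(j)}+\cU_I$ occurs only when $(\hat{\cR},i,j)\in\bar{I}$, and the summand $\cU_I=\bigcap_{(\hat{\cR}',i',j')\in\bar{I}}\cV_{\hat{\cR}',i'}^{(j')}$ is by definition contained in $\cV_{\hat{\cR},i}^{(j)}$. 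Summing over all such $\bar{I}$ therefore keeps $\cB_{\hat{\cR},i}^{(j)}$ inside $\cV_{\hat{\cR},i}^{(j)}$.

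For the reverse inclusion $\cV_{\hat{\cR},i}^{(j)}\subseteq \cB_{\hat{\cR},i}^{(j)}$, I would observe that Algorithm~\ref{alg:basis-3} iterates $u$ from $0$ up through $|\Omega|-1=\binom{n}{d+1}((d+1)^2-(d+1))-1$, and therefore enumerates every subset $I\subseteq\Omega$, including in particular the subset $I=\Omega\setminus\{(\hat{\cR},i,j)\}$, whose complement $\bar{I}$ is the singleton $\{(\hat{\cR},i,j)\}$. At that iteration $\cU_I=\cV_{\hat{\cR},i}^{(j)}$, and since $(\hat{\cR},i,j)\in\bar{I}$ the algorithm executes $\cB_{\hat{\cR},i}^{(j)}\leftarrow\cB_{\hat{\cR},i}^{(j)}+\cV_{\hat{\cR},i}^{(j)}$. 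Hence $\cV_{\hat{\cR},i}^{(j)}\subseteq\cB_{\hat{\cR},i}^{(j)}$, establishing equality.

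Once equality is in hand, the final sentence of the proposition is automatic: the algorithm explicitly extends $B_{\hat{\cR},i}^{(j)}$ to a basis of $\cB_{\hat{\cR},i}^{(j)}$ over $F$ at every step, so upon termination $B_{\hat{\cR},i}^{(j)}$ is a basis of $\cV_{\hat{\cR},i}^{(j)}$ over $F$.

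The only delicate point, and the one I expect to require the most attention, is not mathematical but organizational: one must check that the indexing set $\Omega$ in Algorithm~\ref{alg:basis-3} is chosen so that the singleton $\{(\hat{\cR},i,j)\}$ indeed appears as some $\bar{I}$ during the loop, and that the algorithm's notation $\cV_i^{(j)}$ inside the intersection consistently refers to $\cV_{\hat{\cR},i}^{(j)}$ (this appears to be a minor notational issue that should be read consistently). Once the indexing is pinned down, the proof is a direct transcription of the Proposition~\ref{prop:bj} argument from the single-index family $\{\cV_j\}$ to the triply-indexed family $\{\cV_{\hat{\cR},i}^{(j)}\}$, and no new ideas are required.
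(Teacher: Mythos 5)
Your proof is correct and follows essentially the same route as the paper, which states that the proof of Proposition~\ref{prop:bj-3} closely mirrors that of Proposition~\ref{prop:bj}: the forward inclusion is built into the updates, and the reverse inclusion comes from observing that the algorithm eventually reaches the iteration with $\bar I=\{(\hat\cR,i,j)\}$, at which point $\cU_I=\cV_{\hat\cR,i}^{(j)}$ is added to $\cB_{\hat\cR,i}^{(j)}$. Your direct argument for the reverse inclusion is a slight streamlining of the paper's contradiction phrasing in Proposition~\ref{prop:bj}, and you are right that the $\cU_I\leftarrow\bigcap_{(\hat\cR,i,j)\in\bar I}\cV_i^{(j)}$ line in Algorithm~\ref{alg:basis-3} is a typographical slip for $\cV_{\hat\cR,i}^{(j)}$.
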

	
	\begin{proposition}
		\label{prop:b-3}
		Algorithm~\ref{alg:basis-3} returns a basis $B$ of $K$ over $F$.
	\end{proposition}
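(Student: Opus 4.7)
The plan is to mirror the proof of Proposition~\ref{prop:b}, adapted to the larger index set $\Omega = \{(\hat{\cR},i,j) \mid \hat\cR\subseteq[n],\,|\hat\cR|=d+1,\,i\in\hat\cR,\,j\in\hat\cR\setminus\{i\}\}$ that appears in Algorithm~\ref{alg:basis-3}. Assuming Proposition~\ref{prop:bj-3}, each set $B_{\hat\cR,i}^{(j)}$ is a basis of $\cV_{\hat\cR,i}^{(j)}$ over $F$. The only thing to prove is that their union $\bar B$ is linearly independent over $F$; once this is shown, $|\bar B|\le [K:F]=l$ and $\bar B$ can be completed to a basis $B$ of $K$ over $F$.

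The key intermediate claim is that for every nonempty $\cI\subseteq \Omega$,
\[
\bigcap_{(\hat\cR,i,j)\in\cI} B_{\hat\cR,i}^{(j)} \text{ is a basis of } \bigcap_{(\hat\cR,i,j)\in\cI}\cV_{\hat\cR,i}^{(j)} \text{ over } F.
\]
This follows from how Algorithm~\ref{alg:basis-3} processes $\Omega$: for each $I\subseteq\Omega$ with $\bar I=\Omega\setminus I$, it forms the single subspace $\cU_I = \bigcap_{(\hat\cR,i,j)\in\bar I}\cV_{\hat\cR,i}^{(j)}$ and appends the \emph{same} basis extension of $\cU_I$ to every $B_{\hat\cR,i}^{(j)}$ with $(\hat\cR,i,j)\in\bar I$. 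Because the outer loop iterates on $u=|I|$ from $0$ upward, the deeper intersections (those with larger $\bar I$) are installed first, and every vector that enters $B_{\hat\cR,i}^{(j)}$ for some triple actually enters, as the same element, every other $B_{\hat\cR',i'}^{(j')}$ whose corresponding subspace contains it. Once this is in place, for any subcollection $\cI$ the set on the left is seen to contain, and to be contained in, a basis of the right-hand subspace.

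With the intersection claim in hand, the remainder is an inclusion-exclusion argument identical to the one at the end of the proof of Proposition~\ref{prop:b}. Taking two triples $\tau_1,\tau_2\in\Omega$, one has $|B_{\tau_1}\cup B_{\tau_2}| = |B_{\tau_1}|+|B_{\tau_2}|-|B_{\tau_1}\cap B_{\tau_2}| = \dim_F(\cV_{\tau_1})+\dim_F(\cV_{\tau_2})-\dim_F(\cV_{\tau_1}\cap\cV_{\tau_2}) = \dim_F(\cV_{\tau_1}+\cV_{\tau_2})$, and by induction on $|\Omega|$ one obtains
\[
\Bigl|\bigcup_{\tau\in\Omega} B_{\tau}\Bigr| \;=\; \dim_F\Bigl(\sum_{\tau\in\Omega} \cV_{\tau}\Bigr) \;\le\; [K:F] \;=\; l.
\]
Hence $\bar B$ is linearly independent over $F$ and is the basis of $\sum_{\tau\in\Omega}\cV_{\tau}$; extending it to a basis $B$ of $K$ finishes the proof.

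The step I expect to require the most care is the intersection claim, since its correctness is not a purely set-theoretic statement about the $\cV_{\hat\cR,i}^{(j)}$ but rather a statement about the particular basis vectors chosen during the algorithm. It relies on the convention that at each iteration Algorithm~\ref{alg:basis-3} fixes a \emph{single} basis extension of $\cU_I$ and appends precisely those vectors to every $B_{\hat\cR,i}^{(j)}$ with $(\hat\cR,i,j)\in\bar I$, rather than reselecting extensions independently. Once this bookkeeping is made explicit, the argument reduces to the combinatorial reasoning already used for Propositions~\ref{prop:b} and~\ref{prop:b-2}, and no new ingredient is needed.
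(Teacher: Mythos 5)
Your proposal faithfully reproduces the proof of Proposition~\ref{prop:b} with the index set $[n]$ replaced by $\Omega=\{(\hat\cR,i,j)\}$, which is exactly what the paper intends: the paper states only that the proofs of Propositions~\ref{prop:bj-3} and~\ref{prop:b-3} are analogous to those of Propositions~\ref{prop:bj} and~\ref{prop:b} and omits them. Your remark that the correctness of the intersection claim hinges on Algorithm~\ref{alg:basis-3} fixing a single basis extension of each $\cU_I$ and appending those same vectors across all triples in $\bar I$ is a reasonable clarification of a convention the paper leaves implicit.
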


The basis of $K$ over $F$ constructed in the algorithm enables us to construct an optimal-access repair scheme for the code $\cC.$
Let $d\in\{k+1,\dots,n-1\}$ be the repair degree. Let $(c_1,\dots,c_n)$ be a codeword of the code $\cC$ written on the storage nodes,
and suppose that the failed node is $i$ and that $\cR$ be the set of $d$ helper nodes. 
Let $A$ be the block-diagonal matrix defined above, constructed with respect to $i$ and $H^{\hat{\cR}}$. 
Define the matrix $C^{\perp}=AH^{\hat\cR}$ and note that its rows $c_t^\bot, t=1,\dots,l$ form codewords of the code dual to the
punctured code $\cC_{\hat{\cR}}.$ Letting $c^{\perp}_{t}=(c^{\perp}_{t,i})_{i\in\cR},$ we can write
	\begin{align}
	c^{\perp}_{t,i}c_i = -\sum_{j\in\cR}c^{\perp}_{t,j}c_j.
	\end{align}
	Similarly to \eqref{eq:tr-constant-1}, we have
	\begin{align}
		\trace_{K/F}(c^{\perp}_{t,i}c_i)
		& = -\sum_{j\in\cR}\sum_{m=0}^{l-1}\trace_{K/F}(c^{\perp}_{t,j} b^*_m)c_{j,m},\label{eq:tr-constant-3}
	\end{align}
where $B^\ast=(b^\ast)$ is the dual basis of the basis $B.$ Note that for $j\in\cR$ we have
	\begin{align}
		\Span_{F}(c^{\perp}_{1,j},\ldots,c^{\perp}_{l,j}) = \Span_{F}(Ah_j) =\cV_{\cR,i}^{(j)}.
	\end{align}
By Algorithm~\ref{alg:basis-3} and Proposition~\ref{prop:bj-3}, the set $B_{\cR,i}^{(j)}\subseteq B$ forms a basis for 
the subspace $\cV_{\cR,j}^{(i)}$ over $F$. Therefore, the element $c^{\perp}_{t,j}$ can be linearly generated by the set 
$B_{\cR,i}^{(j)}$ for every $t=1,\ldots,l$. Let $B_{\cR,i}^{(j)}=\{b_{\cR,i,u}^{(j)}\,|\, u=1,\ldots,l/s \}$. Then, similarly to \eqref{eq:bj} and \eqref{eq:tr-constant-2}, we have
	\begin{align}
		c^{\perp}_{t,j} &= \sum_{u=1}^{l/s}\gamma_{j,u} b_{\cR,i,u}^{(j)},\\
		\trace_{K/F}(c^{\perp}_{t,i}c_i) &= -\sum_{j\in\cR}\sum_{m=0}^{l-1}\sum_{u=1}^{l/s}\trace_{K/F}(b_{\cR,i,u}^{(j)} b^{\ast}_m)\gamma_{j,u} c_{j,m}.
	\end{align}
	Therefore, each node $c_j,j\in\cR$ needs to access the set of symbol $\{c_{j,m}\,|\,\trace_{K/F}(b_{\cR,i,u}^{(j)} b^{\ast}_m) = 1 \}$, whose cardinality equals $|B_{\cR,i}^{(j)}|=l/s$. It follows that the constructed repair scheme has the optimal access property.
	
This completes the proof of Theorem \ref{thm:OA}.

\section{Concluding remarks}
We have shown that error correction is feasible in the original code family of \cite{Tamo18} without the increase of
the extension degree of the locator field of the code (the node size). Namely, codes from \cite{Tamo18} use extension degree $l=(d-k+1)L,$ where
$L$ is the product of the first $n$ distinct primes in an arithmetic progression, 
  $$
   L= \biggl(\prod_{\substack{i=1\\[.02in]p_i\equiv 1\text{ mod } (d-k+1)}}^n p_i\biggr).
  $$
The lower bound on $l$ from \cite{Tamo18}, necessary for repair of a single node, has the form $l \geq \prod_{i=1}^{k-1}p_i,$
where $p_i$ is the $i$-th smallest prime.  Asymptotically for fixed $d-k$ and growing $n$ we obtain the following bounds on the node size:
$\Omega(k^k)\le l\le O(n^n).$ Essentially the same node size is used in this paper for repair with error correction.
At the same time, the explicit RS code family with optimal access that we construct comes at the expense of larger node size, 
namely $l=(d-k+1)^nL.$ Since there is an optimal-access repair scheme for every scalar MSR code, this leaves a gap between what is known
explicitly and what is shown to be possible, which represents a remaining open question related to the task of optimal repair of RS codes.

\appendices
\section{Proof of Proposition \ref{prop:s}}\label{app:Prop4}

First we present the proof for the case $h=0$ (strictly speaking, we do not have to isolate it, but it makes understanding the general case much easier). In this case, definition \eqref{eq:fh}, \eqref{eq:imp} simplifies as follows. 
Let $f_0(x)=x^{p_1}-f(x).$ Write $f_0$ as
  \begin{align*}
   f_0(x)&=a_0+a_1 x +a_2 x^2 + \dots + a_{p_1-1}x^{p_1-1} \\
   &=\sum_{q=0}^{(p_1-1)/s-1} x^{qs} f_{0,q}(x),
   \end{align*}
where
  \begin{equation}\label{eq:ff}
 \left. \begin{aligned}
  f_{0,0}(x)&=a_0+a_1 x + \dots + a_{s-1}x^{s-1}\\
  f_{0,1}(x)&=a_s + a_{s+1}x + \dots + a_{2s-1}x^{s-1}\\
  &\dots\\
  f_{0,(p_1-1)/s-1}(x)&=a_{p_1-1-s}+a_{p_1-s}x+\dots+a_{p_1-1}x^s,
  \end{aligned}
  \right\}
  \end{equation}
so that the degree of the last polynomial is $\le s$ and the degrees of the remaining ones are $\le s-1.$
Obviously, we have
\begin{align}  \label{eq:rsn1}
\alpha_1^{p_1}&=f_0(\alpha_1)\\
& = \sum_{q=0}^{(p_1-1)/s-1} \alpha_1^{qs} f_{0,q}(\alpha_1).\label{eq:rsn2}
\end{align}

As before, we start with \eqref{eq:tin}, which implies that for any polynomial $g\in F_1[x]$ of degree $\deg g\le n-k-1$, we have
\begin{equation} \label{eq:t2}
\trace(e_iv_1 g(\alpha_1) c_1)=-\sum_{j=2}^n g(\alpha_j) \trace(e_iv_j c_j).
\end{equation}
Take $e_i=\alpha_1^{qs}$ and $g(x)=x^tf_{0,q}(\alpha_1)$ and sum on $q$ on the left, then from \eqref{eq:rsn2} we obtain $\trace(v_1\alpha_1^tf_0(\alpha_1)c_1)$.
Summing on $q$ on the right of \eqref{eq:t2} and using  \eqref{eq:rsn1}, we conclude that
   \begin{equation}\label{eq:t0}
   \trace(v_1 \alpha_1^{p_1+t} c_1)= - \sum_{q=0}^{(p_1-1)/s-1} \sum_{j=2}^n \alpha_j^t f_{0,q}(\alpha_j) \trace(\alpha_1^{qs} v_j c_j)
   \end{equation}
for all $t=0,1,\dots,n-k-s-1,$
Note that the constraint $t\le n-k-s-1$ is implied by the condition $\deg(g)=\deg(x^t f_{0,q}(x))\le n-k-1$ needed in order to use \eqref{eq:t2} (and \eqref{eq:tin}).
Change the variable $t\mapsto (t-1)$ to write the last equation as
\begin{equation} \label{eq:ss1}
\trace(v_1 \alpha_1^{p_1-1+t} c_1)
= - \sum_{q=0}^{(p_1-1)/s-1} \sum_{j=2}^n
\alpha_j^{t-1} f_{0,q}(\alpha_j) \trace(\alpha_1^{qs} v_j c_j). \quad\quad
t=1,2,\dots,n-k-s,
\end{equation}

From \eqref{eq:src} and the fact that
$$
\bigcap_{u=1}^{s-1} \{u-s,u-s+1,\dots,u-s+n-k-1\}
=\{-1, 0, 1, \dots, n-k-s\} ,
$$
we obtain
\begin{align*}
\trace(\beta^u \alpha_1^{p_1-1+t} v_1 c_1) 
= -\sum_{j=2}^n \alpha_j^{t-u+s} \trace(\beta^u \alpha_1^{u+p_1-s-1} v_j c_j), \quad  -1\le t\le n-k-s, \quad 1\le u\le s-1 .
\end{align*}
Summing these equations on $u=1,2,\dots,s-1,$ we obtain the relation
\begin{align*}
\trace\Big(\sum_{u=1}^{s-1} \beta^u \alpha_1^{p_1-1+t} v_1 c_1\Big) 
= -\sum_{j=2}^n \sum_{u=1}^{s-1} \alpha_j^{t-u+s} \trace(\beta^u \alpha_1^{u+p_1-s-1} v_j c_j), \quad  -1\le t\le n-k-s .
\end{align*}
For each $t=1,2,\dots,n-k-s$ let us add this equation and \eqref{eq:ss1}. This gives $n-k-s$ relations of the form
       \begin{multline*}
\trace\Big(\sum_{u=0}^{s-1} \beta^u \alpha_1^{p_1-1+t} v_1 c_1\Big) 
= -\sum_{j=2}^n \Big( \sum_{u=1}^{s-1} \alpha_j^{t-u+s} \trace(\beta^u \alpha_1^{u+p_1-s-1} v_j c_j)\\ +
\sum_{q=0}^{(p_1-1)/s-1} 
\alpha_j^{t-1} f_{0,q}(\alpha_j) \trace(\alpha_1^{qs} v_j c_j)\Big).
       \end{multline*}
Observe that the left-hand side of this equation is the same as the left-hand side of \eqref{eq:slp}. Therefore,
      \begin{align*}
\sum_{j=2}^n \alpha_j^t \trace\Big(\sum_{u=0}^{s-1}\beta^u \alpha_1^{p_1-1} v_j c_j\Big)
   = \sum_{j=2}^n &\Big( \sum_{u=1}^{s-1} \alpha_j^{t-u+s} \trace(\beta^u \alpha_1^{u+p_1-s-1} v_j c_j)\\ 
   +&
\sum_{q=0}^{(p_1-1)/s-1} 
\alpha_j^{t-1} f_{0,q}(\alpha_j) \trace(\alpha_1^{qs} v_j c_j)\Big), \quad 1\le t\le n-k-s .
      \end{align*}
Replacing $t-1$ with $t$ in this equation, we obtain that
      \begin{align*}
    \sum_{j=2}^n \alpha_j^t \Big( \sum_{u=1}^{s-1} \alpha_j^{s-u+1} \trace(\beta^u \alpha_1^{u+p_1-s-1} v_j &c_j) +
        \sum_{q=0}^{(p_1-1)/s-1} 
              f_{0,q}(\alpha_j) \trace(\alpha_1^{qs} v_j c_j)\\
           &- \alpha_j \trace\Big(\sum_{u=0}^{s-1}\beta^u \alpha_1^{p_1-1} v_j c_j\Big)\Big) = 0, \quad0\le t\le n-k-s-1 .
        \end{align*}
By Proposition \ref{prop:RS}, the vector
         \begin{equation} \label{eq:GRS3}
         \begin{aligned}
\Big(\sum_{u=1}^{s-1} \alpha_j^{s-u+1} \trace\Big(\beta^u \alpha_1^{u+p_1-s-1} v_j c_j\Big) +
&\sum_{q=0}^{(p_1-1)/s-1} 
 f_{0,q}(\alpha_j) \trace(\alpha_1^{qs} v_j c_j)\\
&- \alpha_j \trace\Big(\sum_{u=0}^{s-1}\beta^u \alpha_1^{p_1-1} v_j c_j\Big), j=2,\dots,n\Big)
      \end{aligned}
         \end{equation}
is contained in a GRS code of length $n-1$ and dimension $s+k-1$. This proves the case $h=0$ of the proposition.

Now let us consider the general case $0\le h\le s-1.$
From \eqref{eq:fh} and \eqref{eq:imp} we obtain
      \begin{equation}\label{eq:a1}
\alpha_1^{p_1+h}=f_h(\alpha_1) = \sum_{q=0}^{(p_1-1)/s-1} \alpha_1^{qs} f_{h,q}(\alpha_1).
     \end{equation}
This relation enables us to use the argument that yielded \eqref{eq:t0} above: Take $e_i=\alpha_1^{u+qs}\beta^u$ and $g(x)=x^t f_{h,q}(x)$ in
\eqref{eq:t2} and sum on $q=0,1,\dots,(p_1-1)/s-1$. We obtain for $h=0,\dots,s-1$ and $u=0,\dots,s-1-h$
    \begin{align*}
\trace(\alpha_1^{p_1+h+u+t} \beta^u v_1 c_1)
    &= \sum_{q=0}^{(p_1-1)/s-1} \trace(\alpha_1^{qs+u+t} \beta^u f_{h,q}(\alpha_1) v_1 c_1)\\
    &= - \sum_{q=0}^{(p_1-1)/s-1} \sum_{j=2}^n\alpha_j^t f_{h,q}(\alpha_j) \trace(\alpha_1^{u+qs} \beta^u v_j c_j) , \\
    &\hspace*{.5in}t=0,1,\dots,n-k-s-1.
    \end{align*}
The restriction $t\le n-k-s-1$ is imposed in the same way as in \eqref{eq:t0} (namely, it is necessary that $\deg(x^t f_{h,q}(x))\le n-k-1$). 
Replacing $h+u$ with $h$ in the last equation, we obtain that
\begin{align*}
\trace(\alpha_1^{p_1+h+t} \beta^u v_1 c_1)
= - \sum_{q=0}^{(p_1-1)/s-1} & \sum_{j=2}^n
\alpha_j^t f_{h-u,q}(\alpha_j) \trace(\alpha_1^{u+qs} \beta^u v_j c_j) , \\
& 0\le h\le s-1, \quad 0\le u\le h, \quad0\le t\le n-k-s-1.
\end{align*}
Let us sum these equations on $u=0,1,\dots,h$ to obtain
     \begin{align*}
\trace(\alpha_1^{p_1+h+t} \sum_{u=0}^h \beta^u v_1 c_1)
= - \sum_{j=2}^n \sum_{q=0}^{(p_1-1)/s-1} & \sum_{u=0}^h
\alpha_j^t f_{h-u,q}(\alpha_j) \trace(\alpha_1^{u+qs} \beta^u v_j c_j) , \\
& 0\le h\le s-1, \;0\le t\le n-k-s-1.
     \end{align*}
Replacing $t$ with $t-1$, we obtain that 
      \begin{equation} \label{eq:lrp}
     \begin{aligned}
    \trace\Big(\alpha_1^{p_1-1+h+t} \sum_{u=0}^h \beta^u v_1 c_1\Big)
         = - \sum_{j=2}^n \sum_{q=0}^{(p_1-1)/s-1} & \sum_{u=0}^h
       \alpha_j^{t-1} f_{h-u,q}(\alpha_j) \trace(\alpha_1^{u+qs} \beta^u v_j c_j) , \\
            & 0\le h\le s-1,\;1\le t\le n-k-s.
      \end{aligned}
      \end{equation}
According to \eqref{eq:src} and the fact that
$$
\bigcap_{u=h+1}^{s-1} \{u-s,u-s+1,\dots,u-s+n-k-1\}
=\{-1, 0, 1, \dots, n-k-s+h\} ,
$$
for $0\le h\le s-1$, we have
\begin{align*}
\trace(\beta^u \alpha_1^{p_1-1+t} v_1 c_1) 
= -\sum_{j=2}^n \alpha_j^{t-u+s} \trace(\beta^u \alpha_1^{u+p_1-s-1} v_j c_j), \\  -1\le t\le n-k-s+h, \; h+1\le u\le s-1 .
\end{align*}
Replacing $t$ with $t+h$, we have
\begin{align*}
\trace(\beta^u \alpha_1^{p_1-1+h+t} v_1 c_1) 
= -\sum_{j=2}^n \alpha_j^{h+t-u+s} \trace(\beta^u \alpha_1^{u+p_1-s-1} v_j c_j), \\  -h-1\le t\le n-k-s, \; h+1\le u\le s-1 .
\end{align*}
Summing these equations on $u=h+1,h+2,\dots,s-1$, we obtain
      \begin{align*}
\trace\Big(\sum_{u=h+1}^{s-1} \beta^u \alpha_1^{p_1-1+h+t} v_1 c_1\Big) 
= -\sum_{j=2}^n \sum_{u=h+1}^{s-1} \alpha_j^{h+t-u+s} \trace(\beta^u \alpha_1^{u+p_1-s-1} v_j c_j), \\  -h-1\le t\le n-k-s .
       \end{align*}
Finally, adding together this equation and \eqref{eq:lrp}, we obtain that
     \begin{align}
\trace\Big( \sum_{u=0}^{s-1} \beta^u \alpha_1^{p_1-1+h+t} v_1 c_1\Big)
= & - \sum_{j=2}^n \sum_{q=0}^{(p_1-1)/s-1}  \sum_{u=0}^h
\alpha_j^{t-1} f_{h-u,q}(\alpha_j) \trace(\alpha_1^{u+qs} \beta^u v_j c_j) \nonumber \\
& -\sum_{j=2}^n \sum_{u=h+1}^{s-1} \alpha_j^{h+t-u+s} \trace(\beta^u \alpha_1^{u+p_1-s-1} v_j c_j) , \label{eq:l1}\\
& \hspace*{1.2in} 0\le h\le s-1,\;
1\le t\le n-k-s. \nonumber
      \end{align}

Going back to \eqref{eq:slp}, let us perform the change $t\mapsto t+h,$ then we obtain
     \begin{equation}\label{eq:l2}
\trace\Big(\sum_{u=0}^{s-1}\beta^u \alpha_1^{p_1-1+h+t} v_1 c_1\Big) 
= -\sum_{j=2}^n \alpha_j^{h+t} \trace\Big(\sum_{u=0}^{s-1}\beta^u \alpha_1^{p_1-1} v_j c_j\Big), \quad\quad  -h\le t\le n-k-h-1 .
     \end{equation}
For $t=1,2,\dots,n-k-s$ the left-hand sides of \eqref{eq:l1} and \eqref{eq:l2} coincide, and therefore,
\begin{align*}
\sum_{j=2}^n \alpha_j^{h+t} \trace\Big(\sum_{u=0}^{s-1}\beta^u \alpha_1^{p_1-1} v_j c_j\Big)
= &  \sum_{j=2}^n \sum_{q=0}^{(p_1-1)/s-1}  \sum_{u=0}^h
\alpha_j^{t-1} f_{h-u,q}(\alpha_j) \trace(\alpha_1^{u+qs} \beta^u v_j c_j)  \\
& + \sum_{j=2}^n \sum_{u=h+1}^{s-1} \alpha_j^{h+t-u+s} \trace(\beta^u \alpha_1^{u+p_1-s-1} v_j c_j) ,  \quad\quad
1\le t\le n-k-s.
\end{align*}
Replacing $t$ by $t+1$, we obtain that
\begin{align*}
\sum_{j=2}^n \alpha_j^t \Big( \sum_{q=0}^{(p_1-1)/s-1}  \sum_{u=0}^h
 f_{h-u,q}(\alpha_j) \trace(\alpha_1^{u+qs} \beta^u v_j c_j)  
 +  \sum_{u=h+1}^{s-1} \alpha_j^{h+1-u+s} \trace(\beta^u \alpha_1^{u+p_1-s-1} v_j c_j) \\
- \alpha_j^{h+1} \trace\Big(\sum_{u=0}^{s-1}\beta^u \alpha_1^{p_1-1} v_j c_j\Big) \Big) = 0, \\
0\le t\le n-k-s-1.
\end{align*}
The proof is complete.

	\bibliographystyle{IEEEtran}
	\bibliography{./RS-io}

\begin{thebibliography}{10}
\providecommand{\url}[1]{#1}
\csname url@samestyle\endcsname
\providecommand{\newblock}{\relax}
\providecommand{\bibinfo}[2]{#2}
\providecommand{\BIBentrySTDinterwordspacing}{\spaceskip=0pt\relax}
\providecommand{\BIBentryALTinterwordstretchfactor}{4}
\providecommand{\BIBentryALTinterwordspacing}{\spaceskip=\fontdimen2\font plus
\BIBentryALTinterwordstretchfactor\fontdimen3\font minus
  \fontdimen4\font\relax}
\providecommand{\BIBforeignlanguage}[2]{{%
\expandafter\ifx\csname l@#1\endcsname\relax
\typeout{** WARNING: IEEEtran.bst: No hyphenation pattern has been}%
\typeout{** loaded for the language `#1'. Using the pattern for}%
\typeout{** the default language instead.}%
\else
\language=\csname l@#1\endcsname
\fi
#2}}
\providecommand{\BIBdecl}{\relax}
\BIBdecl

\bibitem{dimakis2010network}
A.~G. Dimakis, P.~B. Godfrey, Y.~Wu, M.~J. Wainwright, and K.~Ramchandran,
  ``Network coding for distributed storage systems,'' \emph{IEEE Trans. Inf.
  Theory}, vol.~56, no.~9, pp. 4539--4551, 2010.

\bibitem{Rashmi11}
K.~V. Rashmi, N.~B. Shah, and P.~V. Kumar, ``Optimal exact-regenerating codes
  for distributed storage at the {MSR} and {MBR} points via a product-matrix
  construction,'' \emph{IEEE Trans. Inf. Theory}, vol.~57, no.~8, pp.
  5227--5239, 2011.

\bibitem{Tamo13}
I.~Tamo, Z.~Wang, and J.~Bruck, ``Zigzag codes: {MDS} array codes with optimal
  rebuilding,'' \emph{IEEE Trans. Inf. Theory}, vol.~59, no.~3, pp. 1597--1616,
  2013.

\bibitem{Ye16}
M.~Ye and A.~Barg, ``Explicit constructions of high-rate {MDS} array codes with
  optimal repair bandwidth,'' \emph{IEEE Trans. Inf. Theory}, vol.~63, no.~4,
  pp. 2001--2014, 2017.

\bibitem{Ye16a}
------, ``Explicit constructions of optimal-access {MDS} codes with nearly
  optimal sub-packetization,'' \emph{IEEE Trans. Inf. Theory}, vol.~63, no.~10,
  pp. 6307--6317, 2017.

\bibitem{Goparaju17}
S.~Goparaju, A.~Fazeli, and A.~Vardy, ``Minimum storage regenerating codes for
  all parameters,'' \emph{IEEE Trans. Inf. Theory}, vol.~63, no.~10, pp.
  6318--6328, 2017.

\bibitem{Raviv17}
N.~Raviv, N.~Silberstein, and T.~Etzion, ``Constructions of high-rate minimum
  storage regenerating codes over small fields,'' \emph{IEEE Trans. Inf.
  Theory}, vol.~63, no.~4, pp. 2015--2038, 2017.

\bibitem{Rashmi12}
K.~V. Rashmi, N.~B. Shah, K.~Ramchandran, and P.~Y. Kumar, ``Regenerating codes
  for errors and erasures in distributed storage,'' in \emph{Proc. 2012 IEEE
  Int. Sympos. Inform. Theory}, 2012, pp. 1202--1206.

\bibitem{Pawar11}
S.~Pawar, S.~El~Rouayheb, and K.~Ramchandran, ``Securing dynamic distributed
  storage systems against eavesdropping and adversarial attacks,'' \emph{IEEE
  Trans. Inform. Theory}, vol.~57, no.~10, pp. 6734--6753, 2011.

\bibitem{shah2012distributed}
N.~B. Shah, K.~V. Rashmi, P.~V. Kumar, and K.~Ramchandran, ``Distributed
  storage codes with repair-by-transfer and nonachievability of interior points
  on the storage-bandwidth tradeoff,'' \emph{IEEE Trans. Inf. Theory}, vol.~58,
  no.~3, pp. 1837--1852, 2012.

\bibitem{Tamo14}
I.~Tamo, Z.~Wang, and J.~Bruck, ``Access versus bandwidth in codes for
  storage,'' \emph{IEEE Trans. Inf. Theory}, vol.~60, no.~4, pp. 2028--2037,
  2014.

\bibitem{balaji2017tight}
S.~Balaji and P.~V. Kumar, ``A tight lower bound on the sub-packetization level
  of optimal-access {MSR} and {MDS} codes,'' \emph{arXiv preprint
  arXiv:1710.05876}, 2017.

\bibitem{Cadambe11}
V.~R. Cadambe, C.~Huang, and J.~Li, ``Permutation code: Optimal exact-repair of
  a single failed node in {MDS} code based distributed storage systems,'' in
  \emph{Proc. 2011 IEEE Int. Sympos. Inform. Theory}, 2011, pp. 1225--1229.

\bibitem{Clay18}
M.~Vajha \emph{et~al.}, ``Clay codes: {M}oulding {MDS} codes to yield an {MSR}
  code,'' in \emph{16th USENIX Conference on File and Storage Technologies,
  Oacland, CA}, 2018, pp. 139--154.

\bibitem{chen2019explicit}
Z.~Chen and A.~Barg, ``Explicit constructions of {MSR} codes for clustered
  distributed storage: The rack-aware storage model,'' \emph{IEEE Trans. Inf.
  Theory}, 2020, to appear, DOI 10.1109/TIT.2019.2941744. Preprint at
  arXiv:1901.04419.

\bibitem{Guruswami16}
V.~Guruswami and M.~Wootters, ``Repairing {R}eed-{S}olomon codes,'' \emph{IEEE
  Trans. Inf. Theory}, vol.~63, no.~9, pp. 5684--5698, 2017.

\bibitem{Shanmugam14}
K.~Shanmugam, D.~S. Papailiopoulos, A.~G. Dimakis, and G.~Caire, ``A repair
  framework for scalar {MDS} codes,'' \emph{IEEE Journal on Selected Areas in
  Communications}, vol.~32, no.~5, pp. 998--1007, 2014.

\bibitem{Dau16}
H.~Dau, I.~Duursma, H.~M. Kiah, and O.~Milenkovic, ``Repairing {R}eed-{S}olomon
  codes with multiple erasures,'' \emph{IEEE Trans. Inf. Theory}, vol.~54,
  no.~10, pp. 6567--6582, 2018.

\bibitem{Dau17}
H.~Dau and O.~Milenkovic, ``Optimal repair schemes for some families of
  {R}eed-{S}olomon codes,'' in \emph{Proc. 2017 IEEE Int. Sympos. Inform.
  Theory}, 2017, pp. 346--350.

\bibitem{Mardia19}
J.~{Mardia}, B.~{Bartan}, and M.~{Wootters}, ``Repairing multiple failures for
  scalar {M}{D}{S} codes,'' \emph{IEEE Trans. Inf. Theory}, vol.~65, no.~5, pp.
  2661--2672, 2019.

\bibitem{Tamo18}
I.~Tamo, M.~Ye, and A.~Barg, ``The repair problem for {R}eed-{S}olomon codes:
  {O}ptimal repair of single and multiple erasures, asymptotically optimal node
  size,'' \emph{IEEE Trans. Inf. Theory}, vol.~65, no.~5, pp. 2673--2695, 2019.

\bibitem{JinLuoXing19}
L.~Jin, G.~Luo, and C.~Xing, ``Optimal repairing schemes for {R}eed-{S}olomon
  cods with alphabet sizes linear in lengths under the rack-aware model,''
  arXiv:1911.08016, Nov. 2019.

\bibitem{DauViterbo18}
H.~{Dau} and E.~{Viterbo}, ``Repair schemes with optimal {I}/{O} costs for
  full-length {R}eed-{S}olomon codes with two parities,'' in \emph{2018 IEEE
  Information Theory Workshop (ITW)}, 2018, pp. 1--5.

\bibitem{DauDuursmaChu18}
H.~{Dau}, I.~{Duursma}, and H.~{Chu}, ``On the {I}/{O} costs of some repair
  schemes for full-length {R}eed-{S}olomon codes,'' in \emph{Proc. 2018 IEEE
  Int. Sympos. Inform. Theory}, 2018, pp. 1700--1704.

\bibitem{LiDauWang19}
W.~{Li}, H.~{Dau}, Z.~{Wang}, H.~{Jafarkhani}, and E.~{Viterbo}, ``On the
  {I}/{O} costs in repairing short-length {R}eed-{S}olomon codes,'' in
  \emph{Proc. 2019 IEEE Int. Sympos. Inform. Theory}, 2019, pp. 1087--1091.

\end{thebibliography}
\end{document}